\newcommand{\TT}{{T\oplus T^*}}
\newcommand{\JJ}{\mathcal{J}}
\newcommand{\KK}{\mathcal{K}}
\newcommand{\GG}{\mathcal{G}}
\newcommand{\Aa}{{\mathcal{A}}}
\newcommand{\RR}{\mathbb{R}}
\newcommand{\XX}{\mathfrak{X}}
\newcommand{\HH}{\mathcal{H}}
\newcommand{\FF}{\mathcal{F}}
\newcommand{\II}{\mathcal{I}}
\newcommand{\Lb}{\mathbb{L}}
\newcommand{\Lbt}{{\widetilde{\Lb}}}
\newcommand{\id}{\mathbbm{1}}
\newcommand{\lc}{\mathring{\n}}
\newcommand{\Lie}{\mathcal{L}}
\newcommand{\PS}{\mathcal{P}}
\newcommand{\ap}{\alpha}
\newcommand{\bt}{\beta}
\def\w{\wedge}
\newcommand{\p}{\partial}
\newcommand{\pt}{\tilde{\partial}}
\newcommand{\xt}{{\tilde{x}}}
\newcommand{\n}{\nabla}
\newcommand{\rd}{\mathrm{d}}
\newcommand{\Lt}{\tl{L}}
\newcommand{\se}{\Gamma}
\newcommand{\Endo}{\text{End}}
\newcommand{\ellt}{{\tl{\ell}}}
\newcommand{\la}{\langle}
\newcommand{\ra}{\rangle}
\newcommand{\lara}{\la\ ,\ \ra}
\newcommand{\brac}{[\ ,\ ]}
\newcommand{\bl}{[\![}
\newcommand{\br}{]\!]}
\newcommand{\bracd}{\bl \ ,\ \br}
\newcommand{\yt}{\tl{y}}
\newcommand{\pb}{{(+)}}
\newcommand{\mb}{{(-)}}
\newcommand{\pbmb}{{(\pm)}}
\def\tl{\tilde}
\newtheorem{theorem}{Theorem}[section]
\newtheorem{proposition}[theorem]{Proposition}
\newtheorem{lemma}[theorem]{Lemma}
\newtheorem{corollary}[theorem]{Corollary}
\newtheorem*{theorem*}{Theorem}
\newtheorem*{lemma*}{Lemma}
\newtheorem*{proposition*}{Proposition}
\theoremstyle{definition}
\newtheorem{Def}[theorem]{Definition}
\theoremstyle{definition}
\newtheorem*{notation*}{Notation}
\newtheorem*{definition*}{Definition}
\theoremstyle{remark}
\newtheorem*{remark*}{Remark}
\newtheorem{remark}[theorem]{Remark}
\newtheorem{Ex}[theorem]{Example}
\DeclareMathOperator{\End}{End}
\title{Commuting Pairs, Generalized para-K\"ahler Geometry and Born Geometry}
\author{\\ \ \\ Shengda Hu\footnote{Department of Mathematics, Wilfrid Laurier University; {\tt shu@wlu.ca}}, Ruxandra Moraru\footnote{Department of Pure Mathematics, University of Waterloo; {\tt moraru@uwaterloo.ca}}, David Svoboda\footnote{Perimeter Institute for Theoretical Physics; {\tt dsvoboda@perimeterinstitute.ca}}}
\date{}
\begin{document}
%
\maketitle
\begin{abstract}
In this paper, we study the geometries given by commuting pairs of generalized endomorphisms $\Aa \in \End(\TT)$ with the property that their product defines a generalized metric. There are four types of such commuting pairs: generalized K\"ahler (GK), generalized para-K\"ahler (GpK), generalized chiral and generalized anti-K\"ahler geometries. We show that GpK geometry is equivalent to a pair of para-Hermitian structures and we derive the integrability conditions in terms of these. From the physics point of view, this is the geometry of $2D$ $(2,2)$ twisted supersymmetric sigma models. The generalized chiral structures are equivalent to a pair of tangent bundle product structures that also appear in physics applications of $2D$ sigma models. We show that the case when the two product structures anti-commute corresponds to Born geometry. Lastly, the generalized anti-K\"ahler structures are equivalent to a pair of anti-Hermitian structures (sometimes called Hermitian with Norden metric). The generalized chiral and anti-K\"ahler geometries do not have isotropic eigenbundles and therefore do not admit the usual description of integrability in terms of the Dorfman bracket. We therefore use an alternative definition of integrability in terms of the generalized Bismut connection of the corresponding metric, which for GK and GpK commuting pairs recovers the usual integrability conditions and can also be used to define the integrability of generalized chiral and anti-K\"ahler structures. In addition, it allows for a weakening of the integrability condition, which has various applications in physics.

\end{abstract}
\newpage
%
%
%
\setcounter{tocdepth}{2}
\tableofcontents

\section{Introduction}

Let $M$ be a smooth real manifold, and denote by $T$ its tangent bundle and $T^*$ its cotangent bundle. In this paper, we are interested in geometries on both the tangent bundle $T$ and the {\em generalized tangent bundle} $\TT$, and how they are related. A first example of this is Dirac geometry.

Dirac geometry was introduced as an elegant way of unifying presymplectic and Poisson geometry via Dirac structures \cite{Courant-Weinstein,Dorfman}. More precisely, the data of a presymplectic two-form 
$\omega \in \se(\Lambda^2 T^*)$ or a Poisson bi-vector $\Pi \in \se(\Lambda^2 T)$ on the manifold $M$ is encoded in terms of Dirac structures, which are subbundles of $\TT$ that are maximally isotropic with respect to the pairing
\begin{align}\label{intro:pairing}
\la X+\ap,Y+\bt\ra=\ap(Y)+\bt(X),
\end{align}
where $X,Y \in \Gamma(T)$ and $\alpha,\beta \in \Gamma(T^*)$,
and are involutive with respect to the {\em Dorfman bracket}. Note that the integrability of the structures $\omega$ and $\Pi$ (which corresponds to $\rd \omega = 0$ and $[\Pi,\Pi]=0$, respectively) is repackaged as the involutivity of these subbundles of $\TT$. 

In the seminal thesis \cite{Gualtieri:2003dx}, it was shown that one can continue this idea beyond pre-symplectic and Poisson structures and incorporate in this framework many other structures, in particular, complex and holomorphic Poisson structures. It was also shown that complex Dirac structures $\Lb\subset (\TT)\otimes \mathbb{C}$ such that  $\Lb\oplus \overline{\Lb}=(\TT)\otimes \mathbb{C}$ are equivalent to {\it generalized complex structures}, that is, integrable endomorphisms $\II$ of $\TT$ that satisfy
\begin{align*}
\II^2=-\id,\quad \la \II\cdot,\II\cdot\ra=\la\cdot,\cdot\ra.
\end{align*}
(The integrability of the generalized complex structure $\II$ was again defined as closure of the corresponding Dirac structure $\Lb$ under the Dorfman bracket.)

A natural next step was then to study pairs of generalized complex structures, which was also done in \cite{Gualtieri:2003dx} under the name of {\it generalized K\"ahler} (GK) geometry. A GK structure is a {\it pair} of commuting generalized complex structures $\II_\pm$ whose product $\GG:=-\II_+\II_-$, $\GG^2=\id$, is a generalized metric. This means, in particular, that $\GG$ is {\it non-degenerate} in the sense that the $\pm 1$-eigenbundles $C_\pm$ of $\GG$  are both isomorphic to the tangent bundle $T$ via the projections 
\[\pi_\pm:C_\pm \subset T \oplus T^* \rightarrow T.\] 
The isomorphisms $\pi_\pm$ induce a Riemannian metric on $M$ given by
\[ g(X,Y) := \pm \la \pi_\pm^{-1}(X),\pi_\pm^{-1}(Y) \ra, \]
$X,Y \in \Gamma(T)$.
Moreover, one can define a pair of endomorphisms $I_\pm$ of $T$ by
\begin{align*}
I_+:=\pi_+ \II_\pm \pi^{-1}_+, \quad I_-:=\pm\pi_- \II_\pm \pi^{-1}_-
\end{align*}
such that $I_\pm^2 = -\id_T$ and $g(I_\pm \cdot, I_\pm \cdot) = g(\cdot,\cdot)$. In other words, the GK structure $\II_\pm$ induces an almost {\em bi-Hermitian} structure $(g,I_\pm)$ on $M$. Moreover, the integrability of the generalized complex structures $\II_\pm$ is equivalent to both almost complex structures $I_\pm$ being integrable and both fundamental forms $\omega_\pm=g I_\pm$ of $g$ satisfying the condition
\begin{align*}
\rd^c_\pm \omega_\pm=\pm H
\end{align*}
for some closed $3$-form $H \in \Gamma(\Lambda^3 T^*)$, where $\rd^c_\pm = I_\pm^*\circ \rd \circ I_\pm^*$ are the twisted differentials with respect to $I_\pm$. Interestingly, bi-Hermitian geometry first appeared in the 80's in a famous physics paper by Gates, Hull and Rocek \cite{Gates-hull-rocek-biherm}, as the target geometry of $(2,2)$ supersymmetric non-linear sigma models. Bi-Hermitian manifolds have since then been extensively studied by many authors (see, for example, \cite{Apostolov1999,Hitchin2006} and the references therein). An important class of GK manifolds is given by hyperHermitian manifolds (for which the pair of complex structures $I_\pm$ {\em anti-commute}, that is, $\{I_+,I_-\}:=I_+I_- + I_-I_+ = 0$); these correspond to hyperK\"ahler manifolds when $H = 0$.

In this paper, we extend the construction of GK geometry to commuting pairs of other types of generalized structures. 
To be specific, we consider pairs of endomorphisms $\Aa_\pm$ of $T \oplus T^*$ that commute and satisfy
\begin{align*}
\Aa_\pm^2=\ap\id, \quad \la \Aa_\pm\cdot,\Aa_\pm\cdot\ra=\bt\la\cdot,\cdot\ra,
\end{align*}
where $(\ap,\bt)$ is a pair of signs. The product $\GG=\Aa_+\Aa_-$ is then always of type $(+,+)$; when it is also non-degenerate, $\GG$ is a generalized (indefinite) metric and one again gets a pseudo-Riemannian metric $\eta$ on $M$ and a pair $A_\pm$ of tangent bundle endomorphisms on $M$ (see Sections \ref{sec:gen_metric} and \ref{sec:commuting_pairs} for details). Note that GK geometry corresponds to the case where $\Aa_\pm$ are of type $(-,+)$. We obtain also three additional geometries this way:
 \begin{itemize}
 	\item 
 	{\it generalized para-K\"ahler}\footnote{This is not the geometry described in \cite{vaisman2015generalized} by Vaisman, where $\Aa_\pm$ is a commuting pair of generalized structures of different types. In the present paper we use the name generalized para-K\"ahler, however, for the para-complex version of GK geometry. More on this is explained in Remark \ref{rem:commutingpairs}.} (GpK) geometry if $\Aa_\pm$ are of type $(+,-)$, 
 	\item 
 	{\it generalized chiral} (GCh) geometry if $\Aa_\pm$ are of type $(+,+)$,
 	\item 
 	{\it generalized anti-K\"{a}hler} (GaK) geometry if $\Aa_\pm$ are of type $(-,-)$.
 \end{itemize}
In this paper, we will mostly focus on generalized para-K\"ahler geometry (see Section \ref{sec:GpK}), but nonetheless provide a notion of integrability as well as examples of generalized chiral and generalized anti-K\"ahler structures (see Sections \ref{sec:gen_chiral} and \ref{sec:bismut_integr}).

\bigskip
Let us state some of our main results concerning these geometries.

\smallskip
\noindent
\textbf{Generalized para-K\"ahler geometry.}
We begin by summarizing our results on GpK geometry. We first note that many of the properties of GpK structures are analogous to those of GK structures, with some statements made in the para-holomorphic category instead of the holomorphic one. 

Given a GpK structure $\KK_\pm$, one recovers a pair $(\eta,K_\pm)$ of para-Hermitian structures on $M$:
\begin{align*}
K_\pm \in \End(T), \quad K_\pm^2=\id,\quad \eta(K_\pm\cdot,K_\pm\cdot)=-\eta(\cdot,\cdot),
\end{align*}
 and the correspondence with $\KK_\pm$ is given explicitly by
\begin{align*}
\KK_{\pm}=\frac{1}{2}
\begin{pmatrix}
\id & 0 \\
b & \id
\end{pmatrix}
\begin{pmatrix}
K_+\pm K_- & \omega^{-1}_+\mp \omega^{-1}_- \\
\omega_+\mp \omega_- & -(K_+^*\pm K_-^*)
\end{pmatrix}
\begin{pmatrix}
\id & 0 \\
-b & \id
\end{pmatrix},
\end{align*}
where $\omega_\pm := \eta K_\pm$. Observe that the contractions $\omega_\pm$ are 2-forms on $M$, called the {\em fundamental forms} of the para-Hermitian structures $(\eta,K_\pm)$,
because $K_\pm$ are anti-isometries of $\eta$.
Moreover, as expected from the GK case, we show in Section \ref{sec:GpK} (Theorem \ref{theo:integr_GpK}) that the generalized para-complex structures $\KK_\pm$ are integrable if and only if the following hold:
\begin{enumerate}
\item $K_\pm$ are integrable para-complex structures;
\item $\rd^p_\pm \omega_\pm=\pm H$ for some closed 3-form $H \in \Gamma(\Lambda^3T^*)$, where $\rd^p_\pm=K_\pm^*\circ \rd \circ K_\pm^*$.
\end{enumerate}
The special case of GpK geometry when $K_\pm$ anti-commute gives rise this time to para-hyperHermitian structures (consisting of para-hypercomplex triples
$I := K_+K_-,K_+,K_-$ whose compatibility with $\eta$ is
\begin{align}\label{intro:PHK}
\eta(I\cdot,I\cdot)=-\eta(K_\pm\cdot,K_\pm\cdot)=\eta(\cdot,\cdot);
\end{align}
when $H = 0$, these correspond to para-hyperK\"ahler structures.

Finally, we also show in Section \ref{sec:GpK} (Theorem \ref{theo:bi-paraH_Poisson}) that any GpK structure $\KK_\pm$ with associate {\em bi-para-Hermitian} data $(\eta,K_\pm)$ induces a {\it para-holomorphic Poisson bi-vector} $Q$ on $M$ defined by the following expression
\begin{align*}
Q=\frac{1}{2}[K_+,K_-]\eta^{-1}.
\end{align*}
This is the GpK analog of a result of Hitchin's that relates GK structures to holomorphic Poisson structures \cite{Hitchin2006}.

\bigskip\noindent
\textbf{Generalized chiral geometry.}
In the case of generalized chiral geometry, we again obtain a pair of -- now {\em chiral} -- endomorphisms $J_\pm$ of $T$:
\begin{align*}
J_\pm \in \End(T), \quad J_\pm^2=\id,\quad \HH(J_\pm\cdot,J_\pm\cdot)=\HH(\cdot,\cdot),
\end{align*}
where $\HH$ is a pseudo-Riemannian metric, such that the corresponding generalized product structures are
\begin{align}\label{intro:Jpm}
\JJ_{\pm}=\frac{1}{2}
\begin{pmatrix}
\id & 0 \\
b & \id
\end{pmatrix}
\begin{pmatrix}
J_+\pm J_- & \eta^{-1}_+\mp \eta^{-1}_- \\
\eta_+\mp \eta_- & J_+^*\pm J_-^*
\end{pmatrix}
\begin{pmatrix}
\id & 0 \\
-b & \id
\end{pmatrix},
\end{align}
with $\eta_\pm=\HH J_\pm$. Moreover, the contractions $\eta_\pm$ are now pseudo-Riemannian metrics on $M$ because $J\pm$ are isometries of $\HH$.
Finally, we show in Section \ref{sec:bismut_integr} (Theorem \ref{theo:integrabilityGChGaK}) that the generalized chiral structures $\JJ_\pm$ are integrable if and only if $J_\pm$ are both parallel with respect to the Levi-Civita connection of $\HH$.

Our most important observation concerning generalized chiral geometry is that it contains, as a subcase, Born geometry. Indeed, if a generalized chiral structure $\JJ_\pm$ is given by the pair of {\em anti-commuting} chiral tangent bundle endomorphisms $J_\pm$,  
then their product $I=J_+J_-$ is an almost complex structure on $M$ whose compatibility with the metric $\HH$ is 
\begin{align}\label{intro:Born}
\HH(I\cdot,I\cdot)=\HH(J_\pm\cdot,J_\pm\cdot)=\HH(\cdot,\cdot).
\end{align}
Usually, Born geometry is thought of as a pair of metrics $(\eta,\HH)$ and a compatible two-form $\omega$, such that
\begin{align*}
\eta^{-1}\HH=\HH^{-1}\eta,\quad \omega^{-1}\HH=-\HH^{-1}\omega .
\end{align*}
This picture is equivalent to the above one upon setting $\eta=\HH J_+$ and $\omega=\HH I$. Note that the generalized chiral structure $\JJ_\pm$ correspond to Born structures of ``hyperK\"ahler-type" when they are integrable because $(\eta, I)$ is anti-K\"ahler in this case.

\bigskip\noindent
\textbf{Generalized anti-K\"ahler geometry.}
Lastly, a generalized anti-K\"ahler structure induces a pair anti-Hermitian endomorphisms $J_\pm$ of $T$:
\begin{align*}
J_\pm \in \End(T), \quad J_\pm^2=-\id,\quad \eta(J_\pm\cdot,J_\pm\cdot)=-\eta(\cdot,\cdot),
\end{align*}
where $\eta$ is a pseudo-Riemannian structure. The corresponding generalized anti-complex structures are now

\begin{align}\label{intro:Jpm-anti}
\JJ_{\pm}=\frac{1}{2}
\begin{pmatrix}
\id & 0 \\
b & \id
\end{pmatrix}
\begin{pmatrix}
J_+\pm J_- & -(\HH^{-1}_+\mp \HH^{-1}_-) \\
\HH_+\mp \HH_- & J_+^*\pm J_-^*
\end{pmatrix}
\begin{pmatrix}
\id & 0 \\
-b & \id
\end{pmatrix},
\end{align}
with $\HH_\pm =\eta J_\pm$. 
As for generalized chiral structures, the tensors $\HH_\pm$ are again pseudo-Riemannian metrics on $M$ since $J_\pm$ are anti-isometries of $\HH$,
and the generalized anti-K\"ahler structures $\JJ_\pm$ are integrable if and only if $J_\pm$ are both parallel with respect to the Levi-Civita connection of $\eta$ (see Section \ref{sec:bismut_integr} (Theorem \ref{theo:integrabilityGChGaK})). Finally, if the pair of anti-Hermitian tangent bundle endomorphisms $J_\pm$ corresponding to a GaK structure $\JJ_\pm$ {\em anti-commute}, then their product $I=J_+J_-$ is again an almost complex structure whose compatibility with the metric $\eta$ is 
\begin{align*}
\eta(I\cdot,I\cdot)=-\eta(J_\pm\cdot,J_\pm\cdot)=\eta(\cdot,\cdot).
\end{align*}
In other words, $(\eta, I, J_+,J_-)$ is an anti-hyperHermitian structure, which is anti-hyperK\"ahler when $\JJ_\pm$ is integrable.

\smallskip
To summarize, the anti-commuting cases of GK, GpK, GCh and GaK geometries recover hyperHermitian, para-hyperHermitian, Born and anti-hyperHermitian geometries, respectively, as special cases:
\begin{align*}
(\II_+,\II_-) \text{ GK} &\Leftrightarrow (\eta,I_\pm) \text{ bi-Hermitian} &\xrightarrow{\{I_+,I_-\}=0} &\text{ hyperHermitian},\\
(\KK_+,\KK_-) \text{ GpK} &\Leftrightarrow (\eta,K_\pm) \text{ bi-para-Hermitian} &\xrightarrow{\{K_+,K_-\}=0} &\text{ para-hyperHermitian},\\
(\JJ_+,\JJ_-) \text{ GCh} &\Leftrightarrow (\HH,J_\pm) \text{ bi-chiral} &\xrightarrow{\{J_+,J_-\}=0} &\text{ Born},\\
(\II_+,\II_-) \text{ GaK} &\Leftrightarrow (\eta,I_\pm) \text{ bi-anti-Hermitian} &\xrightarrow{\{I_+,I_-\}=0} &\text{ anti-hyperHermitian}.
\end{align*}

\bigskip\noindent
\textbf{Integrability.}
In this paper, we also study the integrability of GpK, GCh and GaK structures. We first note that, as for GK structures, GpK structures correspond to certain Dirac subbundles of $\TT$. The integrability of such structures can thus also be defined as closure of the Dirac subbundles under the Dorfman bracket. Unfortunately, GCh and GaK do not correspond to Dirac subbundle of $\TT$. One therefore needs another notion of integrability in this case.
Nonetheless, for GK structures, it was shown in \cite{Gualtieri:2007bq} that the integrability of such structures can be characterised in terms of generalized connections.
Indeed, to any generalized metric $\GG$, one can associate a canonical generalized connection $D^\GG$, called the {\it generalized Bismut connection}, and the integrability of a GK structure $(\II_+,\II_-)$ with $\GG=-\II_+\II_-$ is then equivalent to:
\begin{enumerate}
\item $D^\GG\II_\pm=0$;
\item The generalized torsion of $D^\GG$ is of type $(2,1)\!+\!(1,2)$ with respect to $\II_\pm$.
\end{enumerate}
We prove in Section \ref{sec:bismut_integr} (Theorem \ref{theo:integrabilityGpK}) that the same result also holds in the GpK case. We then propose the following notion of integrability for GCh and GaK structures:
\begin{itemize}[label={},leftmargin=0pt]
\item Let $(\JJ_+,\JJ_-)$ be a generalized chiral or anti-K\"ahler structure with $\GG=\JJ_+\JJ_-$. We then define $(\JJ_+,\JJ_-)$ to be {\em integrable} if and only if the following holds:
\begin{enumerate}
\item $D^\GG\JJ_\pm=0$;
\item The generalized torsion of $D^\GG$ is of type $(3,0)\!+\!(0,3)$ with respect to $\JJ_\pm$.
\end{enumerate}
\end{itemize}
Note that the different condition on the generalized torsion of $D^\GG$ (that it be of type $(3,0)\!+\!(0,3)$ instead of $(2,1)\!+\!(1,2)$ with respect to $\JJ_\pm$) was chosen to ensure, as in the GK/GpK case, that the  bi-chiral/bi-anti-Hermitian structures $J_\pm$ associated to $\JJ_\pm$ are integrable whenever $\JJ_\pm$ are. In fact, we prove in Section \ref{sec:bismut_integr} (Theorem (\ref{theo:integrabilityGChGaK}) that a GCh/GaK structure $\JJ_\pm$ is integrable if and only its associated tangent bundle endomorphisms $J_\pm$ are parallel with respect to the Levi-Civita connection of $\eta$. 

Finally, for a commuting pair of generalized structures $\Aa_\pm$ with $\GG=\Aa_+\Aa_-$, we also define {\it weak integrability} to be given by the condition $D^\GG\Aa_\pm=0$. We then prove that a GCh/GaK structure is weakly integrable if and only if its induced tangent bundle structures $(\eta,J_\pm)$ are of type $\mathcal{W}_3$ (see Proposition \ref{prop:weak_int_chiral}).

\bigskip\noindent
\textbf{Applications to physics.}
There are numerous physical applications of the results presented in this paper. The bi-para-Hermitian geometry of a GpK structure is precisely the geometry described in \cite{Hull_twisted_susy} as the target space geometry for $(2,2)$ {\it twisted supersymmetry}\footnote{The name {\it twisted} is in \cite{Hull_twisted_susy} not used in the sense of {\it topological twisting} but rather describing something different or opposite to the usual supersymmetry.}. Therefore, GpK geometry corresponds to twisted supersymmetry exactly in the same way that GK geometry corresponds to usual supersymmetry. In this context, the notion of weak integrability naturally appears as well; the case where the corresponding tangent bundle geometries are not necessarily integrable was explored for example in \cite{SUSY_non-integrable}. Sigma models with twisted supersymmetry and their relationship to GpK geometry, in particular topological twists of such models, will be explored in the forthcoming work \cite{toappear}.

Pairs of para-Hermitian structures also appear in \cite{Svoboda:2018rci,Szabo-paraherm} as a way to incorporate fluxes in the para-Hermitian formalism of Double Field Theory. Here, the non-integrability of the para-Hermitian structures is in fact a desirable feature, giving rise to non-trivial {\it fluxes}. In the related setting of Poisson-Lie sigma models, the non-integrable para-Hermitian structures were also studied in \cite{Hassler:2019wvn}. The formulation in terms of weakly integrable GpK geometry might therefore yield interesting results in these areas of physics as well.

The bi-chiral geometry of a generalized chiral structure has also been studied in the physics literature \cite{Stojevic:2009ub} in the context of $2D$ sigma models. There, the chiral structures $(g,J_\pm)$ give rise to copies of the $(1,1)$ superconformal algebra labelled by $J_\pm$ and by considering the cases when $J_\pm$ are not integrable, the author relates this observation to non-geometric backgrounds of string theory. This non-integrable geometry, $(g,J_\pm)$, then corresponds exactly to a weakly integrable generalized chiral structure.

%

\vspace{15pt}
The paper is organised as follows. In Section 2, we review mostly well established definitions and facts about generalized geometry, particularly, generalized structures of both isotropic and non-isotropic type, generalized metrics and generalized Bismut connections.

Section 3 is the main body of this work. We begin the section by outlining the correspondence between commuting pairs of generalized structures and certain pairs of the tangent bundles structures. We then describe in greater detail the geometry of the new generalized sructures, namely, the generalized para-K\"ahler, chiral and anti-K\"ahler structures.

In Section 4, we describe applications of our results to physics. We conclude the paper with proposed future research problems in both mathematics and physics in Section 5.

The Appendices provide all the results on the geometry of tangent bundle geometry needed in the paper; we should note that they contain several non-trivial results that we did not find in the literature (particularly, results about chiral and anti-Hermitian structures (Appendix \ref{sec:chiral/antiH})).

\section{Generalized Geometry}
Here we will review facts and definitions about generalized geometry, by which we mean the study of the {\it generalized tangent bundle} $\TT$ and geometric structures defined on this bundle. This involves in particular the {\it Dirac geometry}, which studies the bundle $\TT$ itself and its natural Courant algebroid structure, and the {\it generalized structures}, which are endomorphisms on $\TT$ compatible with the underlying Dirac geometry. A special non-degenerate type of generalized structures give rise to {\it generalized metrics}, which we will also discuss along with interesting connections and bracket operations they induce on $\TT$.

\subsection{Review of Dirac Geometry}
The natural Courant algebroid structure \cite{Liu:1995lsa} on $\TT$ is given by the following data. The symmetric pairing,
\begin{align*}
\langle X+\ap,Y+\bt\rangle=\ap(Y)+\bt(Y),
\end{align*}
the \textbf{Dorfman bracket}
\begin{align}\label{eq:dorfman}
[ X+\ap,Y+\bt]=[X,Y]+\Lie_X\bt-\imath_Y\rd \ap, 
\end{align}
and the anchor $\pi:X+\ap\mapsto X$. The three structures are compatible in the following way
\begin{align}\label{eq:compatibility_courant_alg}
\pi(X+\ap)\la Y+\bt,Z+\gamma\ra=\la [X+\alpha,Y+\bt],Z+\gamma\ra+\la Y+\bt,[X+\ap,Z+\gamma]\ra.
\end{align}

In the above, $X+\ap$ denotes a section of $\TT$ with the splitting to tangent and cotangent parts given explicitly. The Dorfman bracket can be thought of as an extension of the Lie bracket from $T$ to $\TT$ and therefore we opt to use the same notation for both brackets; the expression $[X,Y]$ is always the Lie bracket of vector fields whether we think of $\brac$ as the Lie bracket or the Dorfman bracket and no confusion is therefore possible.

\begin{remark}
The Courant algebroid structure can be equivalently given by the {\bf Courant bracket}, which is just a skew-symmetrization of $\brac$:
\begin{align*}
[X+\ap,Y+\bt]_{\text{Cour.}}&=\frac{1}{2}( [X+\ap,Y+\bt]-[Y+\bt,X+\ap])\\
&=[X,Y]+\Lie_X\bt-\Lie_Y\ap -\frac{1}{2}\rd(\imath_X\bt-\imath_Y\ap).
\end{align*}
The inverse relationship is given by
\begin{align*}
[X+\ap,Y+\bt]=[X+\ap,Y+\bt]_{\text{Cour.}}+\rd\la X+\ap,Y+\bt\ra.
\end{align*}
While $\brac_{\text{Cour.}}$ is conveniently skew-symmetric, it does not satisfy the Jacobi identity, which $\brac$ does. Instead, the Jacobi identity of $\brac_{\text{Cour.}}$ is violated by an exact non-vanishing $3$-product, which is why Courant algebroids are Lie 2-algebroids (or, Lie algebroids up to homotopy).
\end{remark}

The Courant algebroid on $\TT$ is exact, meaning that the associated sequence of vector bundles
\begin{align}\label{eq:exact_seq}
0\longrightarrow T^* \xrightarrow{\pi^T} \TT\xrightarrow{\pi} T\longrightarrow 0,
\end{align}
is exact. Here, $\pi^T$ is the transpose of $\pi$ with respect to the pairing $\lara$,
\begin{align*}
\la \pi^T(\ap),Y+\bt\ra=\la \ap,\pi(Y+\bt)\ra=\la \ap,Y\ra
\end{align*}
i.e. $\pi^T: \ap \mapsto \ap+0$. In fact, all possible Courant algebroid structures on $\TT$ are parametrized by a closed three-form $H\in \Omega^3_{cl}$ \cite{Severa:2017oew}, sometimes called $H$-flux\footnote{Flux is a term used mainly in physics, in this context simply meaning the ``tensorial contribution to the bracket''.} or \v Severa class, which enters the definition of the bracket \eqref{eq:dorfman}, changing it to a \textbf{twisted} Dorfman bracket
\begin{align}\label{eq:twisted_dorfman}
[ X+\ap,Y+\bt]_H=[X,Y]+\Lie_X\bt-\imath_Y\rd \ap+\imath_Y\imath_X H.
\end{align}

\begin{remark}
In the following text we tend to omit the word {\it twisted} and it should be assumed we mean ``twisted Dorfman bracket" whenever we say only ``Dorfman bracket" unless specified otherwise.
\end{remark}

\paragraph{b-field transformation.}
Any isotropic splitting of \eqref{eq:exact_seq} $s:T\rightarrow \TT$ is given by a two-form $b$, such that $X\overset{s}{\mapsto}X+b(X)$. This is equivalent to an action of a $b$-field transformation on $\TT$\footnote{Here we are using the term $b$-field transformation more liberally as it is customary to use the term only in the cases when $\rd b=0$ so that $e^b$ is a symmetry of $\brac$.}
\begin{Def}
Let $b$ be an arbitrary two-form. A {\bf b-field transformation} is an endomorphism of $\TT$ given by
\begin{align}\label{eq:b-field}
\begin{aligned}
e^b&=
\begin{pmatrix}
\id & 0 \\
b & \id
\end{pmatrix}
\in \End(\TT)\\
u&=X+\ap \mapsto e^b(u)=X+b(X)+\ap
\end{aligned}
\end{align}
\end{Def}
The map $e^b$ satisfies $\la e^b\cdot,e^b\cdot\ra=\la\cdot,\cdot\ra$ and acts on the (twisted) Dorfman bracket as
\begin{align}\label{eq:b_field-bracket}
[ e^b(X+\ap),e^b(Y+\bt)]_H=e^b([ X+\ap,Y+\bt]_{H+\rd b}),
\end{align}
which implies that when $H$ is trivial in cohomology, then a choice of a $b$-field transformation such that $\rd b=-H$ brings the twisted bracket $\brac_H$ into the standard form \eqref{eq:dorfman}. When $H$ is cohomologically non-trivial this can be done at least locally. This also means that any choice of splitting with a non-trivial $b$-field can be absorbed into the Dorfman bracket in terms of the flux $\rd b$.

We remark here that all the results in this paper remain valid for any exact courant algebroid $E$ (i.e. $E$ fits in the sequence \eqref{eq:exact_seq}), which can be always identified with $\TT$ by the choice of splitting equivalent to a choice of a representative $H\in\Omega^3_{cl}$. This also amounts to setting $b=0$ in all formulas since the $b$-field appears as a difference of two splittings.

\paragraph{Dirac Structures.}
An important object in Dirac geometry are (almost) dirac structures, which are subbundles $L\subset \TT$ with special properties.
\begin{Def}
An {\bf almost Dirac structure} $L$ is a maximally isotropic subbundle of $\TT$, i.e. $\la u,v\ra=0$ for any $u,v \in \se(L)$ and $\text{rank}(L)=\text{rank}(T)$. When $L$ is involutive under the Dorfman bracket, i.e. it satisfies $[L,L]\subset L$, we call $L$ simply a {\bf Dirac structure}.
\end{Def}
An important fact we will repeatedly use is that the Dorfman bracket becomes fully skew when restricted to sections of a Dirac structure $L$ and in particular becomes a Lie algebroid bracket. $L$ then inherits a Lie algebroid structure given by $(\brac\mid_L,\pi_T)$, $\pi_T$ being the projection to the tangent bundle $T$. More details about Dirac structures can be found in \cite{Courant-Weinstein,Dorfman,courant1990dirac}.

We conclude this section with a useful formula for $\brac$ \cite[Prop.~2.7]{Svoboda:2018rci}
\begin{align}\label{eq:dorf-connection}
\begin{aligned}
\la [ X+\ap,Y+\bt],Z+\gamma\ra&=\la \n_X(Y+\bt)-\n_Y(X+\ap),Z+\gamma\ra\\
&+\la \n_Z(X+\ap),Y+\bt\ra,
\end{aligned}
\end{align}
where $\n$ is any torsionless connection.

\subsection{Generalized Structures}
\label{sec:gen_structures}
We continue by introducing generalized structures, i.e. endomorphisms of the generalized tangent bundle $\Aa \in \End(\TT)$ that square to $\pm 1$ and are (anti-)orthogonal with respect to the natural pairing $\lara$ on $\TT$. This involves four different choices:

\begin{Def}\label{def:gen_structures}
An endomorphism $\Aa \in \End(\TT)$ that satisfies $\Aa^2=\id$ or $\Aa^2=-\id$ and in addition $\la\Aa\cdot,\Aa\cdot\ra=\lara$ or $\la\Aa\cdot,\Aa\cdot\ra=-\lara$ is called a {\bf generalized almost structure}. We name the four different types of generalized almost structures:
\begin{itemize}
\item {\bf complex}, when $\Aa^2=-\id$ and $\la \Aa\cdot,\Aa\cdot\ra=\lara$,
\item {\bf para-complex}, when $\Aa^2=\id$ and $\la \Aa\cdot,\Aa\cdot\ra=-\lara$,
\item {\bf product}, when $\Aa^2=\id$ and $\la \Aa\cdot,\Aa\cdot\ra=\lara$, and
\item {\bf anti-complex}, when $\Aa^2=-\id$ and $\la \Aa\cdot,\Aa\cdot\ra=-\lara$.
\end{itemize}
Additionally, $\Aa$ is {\bf isotropic} when it is complex or para-complex and {\bf non-isotropic} when it is product or anti-complex. Whenever the eigenbundles of $\Aa$ are isomorphic to $T$ or $T\otimes \mathbb{C}$ (for $\Aa^2=\id$ and for $\Aa^2=-\id$, respectively) via $\pi$, we call $\Aa$ {\bf non-degenerate}.
\end{Def}
\paragraph{Integrability.}
From Definition \ref{def:gen_structures} it follows that isotropic structures have maximally isotropic eigenbundles with respect to the pairing $\lara$ (i.e. almost Dirac structures), while the eigenbundles of non-isotropic structures are not isotropic. Because the Dorfman bracket restricts on almost Dirac structures to a Lie algebroid bracket, it makes sense to ask for involutivity of such bundles:
\begin{Def}
Let $\Aa$ be an isotropic generalized almost structure. We say $\Aa$ is {\bf integrable} generalized structure or simply generalized structure if its eigenbundles are involutive under the Dorfman bracket. To emphasize integrability with respect to a Dorfman bracket with non-vanishing $H$-flux, we call an integrable generalized structure a {\bf twisted} generalized structure.
\end{Def}
It is customary to omit the word almost whenever the integrability is not relevant in the given context and we will do so when discussing only the linear structure, i.e. the generalized endomorphism itself without considering the Courant algebroid structure on $\TT$. Analogously to the usual tangent bundle geometry, the integrability can be equivalently expressed in terms of a tensorial quantity called the {\bf generalized Nijenhuis tensor}:
\begin{lemma}\label{lem:integrability}
An isotropic generalized almost structure is integrable if an only if the following expression
\begin{align}\label{eq:gen_nijenhuis}
\mathcal{N}_\Aa(u,v)=[\Aa u,\Aa v]+\Aa^2[ u,v]-\Aa([\Aa u,v]+[ u,\Aa v]),
\end{align}
vanishes for all $u,v \in \se(\TT)$.
\end{lemma}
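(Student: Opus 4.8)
The plan is to prove both implications simultaneously by exhibiting $\mathcal{N}_\Aa$ as the precise obstruction to involutivity of the eigenbundles of $\Aa$. Since $\Aa$ is isotropic it is either complex ($\Aa^2=-\id$, in which case I pass to $(\TT)\otimes\mathbb{C}$ and let $L_\pm$ be the $\pm i$-eigenbundles, with projectors $\Pi_\pm=\tfrac12(\id\mp i\Aa)$) or para-complex ($\Aa^2=\id$, where I work over $\RR$ with $\pm 1$-eigenbundles $L_\pm$ and projectors $\Pi_\pm=\tfrac12(\id\pm\Aa)$). In either case the relevant bundle splits as $L_+\oplus L_-$, and I would treat the two cases in parallel, the algebra being identical up to the placement of factors of $i$. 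The whole point of passing to a frame adapted to this splitting is that it requires $\mathcal{N}_\Aa$ to be tensorial, so establishing tensoriality is the logical first move.

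First I would show $\mathcal{N}_\Aa$ is $C^\infty(M)$-bilinear. From \eqref{eq:dorfman} one derives the two Leibniz rules $[u,fv]=f[u,v]+(\pi(u)f)v$ and $[fu,v]=f[u,v]-(\pi(v)f)u+\la u,v\ra\,\rd f$, the latter using the symmetric identity $[u,v]+[v,u]=\rd\la u,v\ra$. Substituting $fv$ into the second slot, the $\pi$-derivative terms that appear cancel in pairs using only $\Aa(fw)=f\Aa w$, so $\mathcal{N}_\Aa(u,fv)=f\mathcal{N}_\Aa(u,v)$ holds for \emph{any} generalized almost structure. The delicate slot is the first: expanding $\mathcal{N}_\Aa(fu,v)$, the $u$- and $v$-derivative contributions again cancel, but one is left with the anomalous $\rd f$-remainder
\begin{align*}
\la\Aa u,\Aa v\ra\,\rd f+\la u,v\ra\,\Aa^2\rd f-\big(\la\Aa u,v\ra+\la u,\Aa v\ra\big)\,\Aa\rd f .
\end{align*}
This is the heart of the argument and the step I expect to be the main obstacle. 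It vanishes precisely because of the isotropy hypotheses: in the complex case $\la\Aa\cdot,\Aa\cdot\ra=\lara$ and $\Aa^2=-\id$ collapse the first two terms to $(\la u,v\ra-\la u,v\ra)\,\rd f=0$, while $\la\Aa\cdot,\Aa\cdot\ra=\lara$ forces $\Aa$ to be skew for $\lara$, so $\la\Aa u,v\ra+\la u,\Aa v\ra=0$; the para-complex signs work identically. I would flag here that for the non-isotropic (product, anti-complex) structures this remainder does \emph{not} cancel, which is the structural reason those cases admit no Dorfman-bracket notion of integrability and so are excluded from the lemma.

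With tensoriality secured, I would finish by computing $\mathcal{N}_\Aa$ on pure eigensections. For $u,v\in\se(L_+)$ a direct substitution of $\Aa u=\pm i u,\,\Aa u=u$ (according to the case) yields $\mathcal{N}_\Aa(u,v)=c\,\Pi_-[u,v]$ with a nonzero constant $c$, and symmetrically $\mathcal{N}_\Aa(u,v)=c\,\Pi_+[u,v]$ on $\se(L_-)$, whereas on a mixed pair $u\in\se(L_+),\,v\in\se(L_-)$ the four terms cancel and $\mathcal{N}_\Aa(u,v)=0$ identically. Decomposing arbitrary $u=u_++u_-$, $v=v_++v_-$ and using bilinearity then gives $\mathcal{N}_\Aa(u,v)=c\big(\Pi_-[u_+,v_+]+\Pi_+[u_-,v_-]\big)$. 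Hence $\mathcal{N}_\Aa\equiv 0$ if and only if $\Pi_-[\se(L_+),\se(L_+)]=0$ and $\Pi_+[\se(L_-),\se(L_-)]=0$, that is, if and only if both eigenbundles are involutive under the Dorfman bracket, which is exactly integrability in the sense of the preceding definition (in the complex case the two conditions are complex-conjugate, hence equivalent). This closes the equivalence.
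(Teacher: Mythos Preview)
Your proof is correct and arrives at the same key identity as the paper: $\mathcal{N}_\Aa(u,v)=4\big(\Pi_-[\Pi_+u,\Pi_+v]+\Pi_+[\Pi_-u,\Pi_-v]\big)$, from which the equivalence with involutivity of both eigenbundles is immediate. The paper obtains this identity in one stroke by expanding the projectors algebraically for general $u,v$, whereas you reach it by decomposing $u=u_++u_-$, $v=v_++v_-$ and computing on pure and mixed pairs; since the Dorfman bracket is additive in each slot, this decomposition only needs additivity, not the $C^\infty$-bilinearity you establish first.

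Your tensoriality discussion is therefore not strictly required for the argument, but it is correct and worthwhile: it pinpoints exactly why the isotropy hypothesis is essential (the $\rd f$-remainder cancels only when $\la\Aa\cdot,\Aa\cdot\ra$ and $\Aa^2$ carry opposite signs), which is precisely the content of the remark the paper makes immediately after this lemma about non-isotropic structures. One minor imprecision: you attribute $\la\Aa u,v\ra+\la u,\Aa v\ra=0$ to orthogonality alone, but it actually uses both $\la\Aa\cdot,\Aa\cdot\ra=\lara$ and $\Aa^2=-\id$ together (and the analogous pair of conditions in the para case).
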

\begin{proof}
A short calculation shows that the $\mathcal{N}_\Aa$ can be expressed as
\begin{align*}
\mathcal{N}_\Aa(u,v)=4(\PS[\bar{\PS}u,\bar{\PS}v]+\bar{\PS}[\PS u,\PS v]),\ \PS=\frac{1}{2}(\id-i\Aa),\ \bar{\PS}=\frac{1}{2}(\id+i\Aa),
\end{align*}
when $\Aa$ is a generalized almost complex structure and
\begin{align*}
\mathcal{N}_\Aa(u,v)=4(\PS[\tl{\PS}u,\tl{\PS}v]+\tl{\PS}[\PS u,\PS v]),\ \PS=\frac{1}{2}(\id+\Aa),\ \tl{\PS}=\frac{1}{2}(\id-\Aa),
\end{align*}
when $\Aa$ is a generalized almost para-complex structure. Vanishing of $\mathcal{N}_\Aa$ is then in both cases seen to be equivalent to requiring that $[u,v]$ belongs to a given eigenbundle whenever both $u$ and $v$ lie in that eigenbundle.
\end{proof}

\begin{remark}\label{rem:integrability_non-iso}
Integrability is not well defined for the generalized product and generalized anti-complex structures because their eigenbundles are not isotropic with respect to the pairing $\lara$ and as a result the involutivity under the Dorfman bracket is not well-defined. This can be seen from the fact that the expression \eqref{eq:gen_nijenhuis} is not tensorial for such structures. We will tackle this issue in Section \ref{sec:bismut_integr}, where we define a notion of integrability that is applicable to non-isotropic structure as well.
\end{remark}

\paragraph{Action of b-field transformation.}
The $b$-field transformation \eqref{eq:b-field} induces an action  on endomorphisms of $\TT$ by:
\begin{align*}
e^b: \End(\TT)&\rightarrow \End(\TT)\\
\Aa &\mapsto e^b(\Aa)=e^b\circ \Aa\circ e^{-b}.
\end{align*}
The properties of $e^b$ then ensure that it preserves the type of a generalized structure:
\begin{proposition}
The $b$-field transformation preserves the type of a generalized almost structure $\Aa$ for any two-form $b$. This means that if $\Aa^2=\pm \id$, then $[e^b(\Aa)]^2=\pm \id$ and if $\la \Aa\cdot,\Aa\cdot\ra=\pm \la \cdot,\cdot\ra$, then also $\la e^b(\Aa),e^b(\Aa) \ra=\pm \la \cdot,\cdot\ra$. Additionally, if $\rd b=0$, $e^b$ also preserves the integrability of an isotropic structure $\Aa$. 
\end{proposition}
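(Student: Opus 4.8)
The plan is to observe that the action $\Aa \mapsto e^b(\Aa) = e^b \circ \Aa \circ e^{-b}$ is simply conjugation by $e^b$, so that all the defining algebraic properties of $\Aa$ transport transparently. The two statements about type (the sign of $\Aa^2$ and the orthogonality sign) are purely pointwise and linear, so $\rd b$ plays no role there; only the integrability statement will require $\rd b = 0$.

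First I would dispose of the squaring type: since $e^{-b}$ inverts $e^b$, conjugation gives $[e^b(\Aa)]^2 = e^b \Aa^2 e^{-b} = \pm e^b e^{-b} = \pm\id$, so the sign of $\Aa^2$ is unchanged. For the orthogonality type I would use two facts: that $e^b$ is orthogonal, $\la e^b\cdot,e^b\cdot\ra = \lara$, and that $e^{-b}$ is itself a $b$-field transformation (for the two-form $-b$) and hence also orthogonal. Sandwiching $\Aa$ and applying in turn the orthogonality of $e^b$, the defining property of $\Aa$, and the orthogonality of $e^{-b}$ then yields $\la e^b(\Aa)\cdot, e^b(\Aa)\cdot\ra = \pm\lara$, with the same sign as for $\Aa$.

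For integrability I would invoke the bracket identity \eqref{eq:b_field-bracket}, which under the hypothesis $\rd b = 0$ says that $e^b$ is an automorphism of the (twisted) Dorfman bracket, $[e^b u, e^b v]_H = e^b[u,v]_H$. The eigenbundles transform covariantly under conjugation: if $L_\lambda$ is the $\lambda$-eigenbundle of $\Aa$, then $e^b(L_\lambda)$ is the $\lambda$-eigenbundle of $e^b(\Aa)$, since $e^b(\Aa)(e^b u) = \lambda\, e^b u$ whenever $\Aa u = \lambda u$. (The notion of integrability applies here only because $\Aa$ is isotropic, so that each $L_\lambda$ is almost Dirac and involutivity is well defined, as in Remark \ref{rem:integrability_non-iso}.) Involutivity then passes through the automorphism: $[e^b(L_\lambda), e^b(L_\lambda)]_H = e^b[L_\lambda, L_\lambda]_H \subset e^b(L_\lambda)$, so the eigenbundles of $e^b(\Aa)$ are again involutive.

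The calculations are routine, and the only genuine subtlety — the main obstacle — is in this last step: recognizing that \eqref{eq:b_field-bracket} becomes an honest bracket automorphism exactly when $\rd b = 0$. For nonzero $\rd b$ the identity instead intertwines the $H$-twisted bracket with the $(H+\rd b)$-twisted one, so involutivity with respect to a single fixed bracket need not persist; this is precisely why the integrability clause of the proposition is conditional on $\rd b = 0$.
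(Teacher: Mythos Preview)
Your proof is correct and follows essentially the same approach as the paper: the algebraic properties are checked by direct computation using that $e^b(\Aa)$ is a conjugation and that $e^b$ (and hence $e^{-b}$) preserves $\lara$, and integrability is handled via \eqref{eq:b_field-bracket} with $\rd b=0$ by noting that eigenbundles transform covariantly and the bracket automorphism carries involutivity across. Your closing remark on why the $\rd b=0$ hypothesis is needed is a nice addition not spelled out in the paper.
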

\begin{proof}
The fact that $e^b$ preserves type is straightforward to check:
\begin{align*}
e^b(\Aa)e^b(\Aa)&=e^b \Aa e^{-b} e^b \Aa e^{-b}=e^b(\Aa^2)\\
\la e^b(\Aa)\cdot,e^b(\Aa)\cdot\ra&=\la e^b \Aa e^{-b}\cdot,e^b \Aa e^{-b}\cdot\ra=\la \Aa e^{-b}\cdot,\Aa e^{-b}\cdot\ra=\pm \la e^{-b},e^{-b}\ra=\pm \la\cdot,\cdot\ra\\
&=\la \Aa\cdot,\Aa\cdot\ra.
\end{align*}
We now prove the statement about the integrability for $\Aa$ a GpC structure, for GC structures the proof is analogous except the appearing bundles are complexified. Let now $\Aa$ be integrable and $u,v\in \se(\TT)$ be $+1$ eigenvectors of $\Aa$. Then $e^b(u)$ and $e^b(v)$ are $+1$ eigenvectors of $e^b(\Aa)$. Using \eqref{eq:b_field-bracket} and $\rd b=0$:
\begin{align*}
[e^b(u),e^b(v)]_H=e^b[u,v]_H,
\end{align*}
so that the $+1$ eigenbundle of $e^b(\Aa)$ is involutive. Similar argument shows involutivity of the $-1$ eigenbundle of $e^b(\Aa)$.  
\end{proof}

\paragraph{Notation.}
We use $\Aa$ to denote a generic generalized structures, while $\II$, $\KK$ and $\JJ$ will be used for generalized complex, generalized para-complex and generalized product structures, respectively. The reason for this is that via the construction presented in Section \ref{sec:commuting_pairs}, GC structures are related to usual complex structures which we denote by $I$, GpC to para-complex structures that we denote by $K$ and generalized product structures are related to chiral structures which we denote by $J$.

\subsection{Isotropic structures}

\subsubsection{Generalized complex structures}
We start by briefly reviewing the very well known generalized complex geometry. For further details we refer the reader to the seminal work of Gualtieri who introduced this geometry in his thesis \cite{Gualtieri:2003dx}.

\begin{Def}
A \textbf{generalized complex} (GC) structure $\II$ is an endomorphism of $\TT$, such that $\II^2=-\id$ and $\la\II\cdot,\II\cdot\ra=\la\cdot,\cdot\ra$, whose generalized Nijenhuis tensor \eqref{eq:gen_nijenhuis} vanishes.
\end{Def}
As discussed in Section \ref{sec:gen_structures}, we use the name almost whenever we want to emphasize that integrability of $\II$ is not concerned. The most general for of GC structures is the following
\begin{align*}
 \II =
 \begin{pmatrix}
 A & \Pi \\
 \Omega & -A^*
 \end{pmatrix}, \text{ such that }
 \begin{cases}
  A^2 + \Pi \Omega & = -\id\\
  A\Pi - \Pi A^* & = 0\\
  \Omega A + A^*\Omega & = 0
 \end{cases},
\end{align*} 
where $A \in \End(T)$ and $\Omega \in \Omega^2(M)$, $\Pi \in \se(\Lambda^2 T)$ are skew tensors.

An important fact is that not only does any GC structure give rise to a complex Dirac structure as its eigenbundle, but any complex Dirac structure $\Lb$ satisfying $\Lb\oplus \overline{\Lb}=(\TT)\otimes {\mathbb{C}}$ defines a GC structure $\II$. Indeed, take $\II\mid_\Lb=i\id$ and $\II\mid_{\overline{\Lb}}=-i\id$. Clearly, such $\II$ is an almost GC structure because it satisfies $\II^2=-\id$, $\la\II\cdot,\II\cdot\ra=\la\cdot,\cdot\ra$ and its eigenbundles are involutive by assumption.

The most important examples are given by a complex structure $I$ and a symplectic structure $\omega$:
\begin{Ex}\label{ex:gen_cpx}
The diagonal almost GC structure is given by an almost complex structure $I$
\begin{align*}
\II_I=
\begin{pmatrix}
I & 0 \\
0 & -I^*
\end{pmatrix}.
\end{align*}
Its $\pm i$ eigenbundles are $\Lb=T^{(1,0)}\oplus T^{*(0,1)}$ and $\overline{\Lb}=T^{(0,1)}\oplus T^{*(1,0)}$. $\II_I$ is integrable if and only if $I$ is integrable.

The anti-diagonal almost GC structure is given by a non-degenerate two-form $\omega$
\begin{align*}
\II_\omega=
\begin{pmatrix}
0 & \omega^{-1} \\
-\omega & 0
\end{pmatrix},
\end{align*}
with eigenbundles $\Lb/\overline{\Lb}=\text{graph}(\pm i\omega)=\{X\pm i \omega(X)\mid X\in \XX\otimes \mathbb{C}\}$ and is integrable if and only if $\omega$ is symplectic, $\rd \omega=0$.
\end{Ex}

\subsubsection{Generalized para-complex structures}
In \cite{wade2004dirac,Zabzine:2006uz}, the notion of generalized para-complex (GpC) geometry along with basic integrability conditions and examples was introduced. All basic facts, which are fairly analogous to generalized complex (GC) geometry, will be reviewed here.

\begin{Def}
A \textbf{generalized para-complex} (GpC) structure $\KK$ is an endomorphism of $\TT$, such that $\KK^2=\id$ and $\la\KK\cdot,\KK\cdot\ra=-\la\cdot,\cdot\ra$, whose generalized Nijenhuis tensor \ref{eq:gen_nijenhuis} vanishes.
\end{Def}
As discussed in Section \ref{sec:gen_structures}, we use the name almost whenever we want to emphasize that integrability of $\KK$ is not concerned. The most general form of an almost GpC structure is given by

\begin{align}\label{eq:GpC_generalform}
 \KK =
 \begin{pmatrix}
 A & \Pi \\
 \Omega & -A^*
 \end{pmatrix}, \text{ such that }
 \begin{cases}
  A^2 + \Pi \Omega & = \id\\
  A\Pi - \Pi A^* & = 0\\
  \Omega A + A^*\Omega & = 0
 \end{cases}.
\end{align}
where $A \in \End(T)$ and $\Omega \in \Omega^2(M)$, $\Pi \in \se(\Lambda^2 T)$ are skew tensors.

Denote $\Lb$ and $\widetilde{\Lb}$ the $+1$ and $-1$ eigenbundles of $\KK$, respectively. It is clear that both $\Lb$ and $\widetilde{\Lb}$ are almost Dirac structures. Similarly to the complex case, we also have a correspondence between Dirac structures and generalized para-complex structures:

\begin{theorem*}[\cite{wade2004dirac}]\label{thm:pairofdirac}
There is a one-to-one correspondence between generalized para-complex structures on $M$ and pairs of transversal Dirac subbundles of $\TT$.
\end{theorem*}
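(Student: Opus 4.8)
The plan is to establish the bijection by exhibiting explicit maps in both directions and checking that they are mutually inverse. In one direction I would send a generalized para-complex structure $\KK$ to the pair $(\Lb,\Lbt)$ of its $+1$- and $-1$-eigenbundles; in the other, I would send a transversal pair of Dirac structures to the endomorphism acting as $+\id$ on the first summand and $-\id$ on the second. The whole argument runs parallel to the generalized-complex case recalled above, with the complex-conjugate decomposition $\Lb\oplus\overline{\Lb}$ replaced by a real transversal decomposition $\Lb\oplus\Lbt$, and with the sign in the defining relation $\la\KK\cdot,\KK\cdot\ra=-\lara$ playing the role that the squaring $\KK^2=\id$ plays in separating the eigenvalues.

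For the forward direction, given $\KK$ with $\KK^2=\id$, I would set $\Lb=\Ker(\KK-\id)$ and $\Lbt=\Ker(\KK+\id)$, so that $\TT=\Lb\oplus\Lbt$ because $\KK$ is diagonalizable with eigenvalues $\pm1$; this already yields transversality. The anti-isometry property forces isotropy: for $u,v\in\se(\Lb)$ one has $\la u,v\ra=\la\KK u,\KK v\ra=-\la u,v\ra$, hence $\la u,v\ra=0$, and likewise for $\Lbt$. Since the pairing $\lara$ has split signature $(n,n)$ with $n=\rk T$, isotropic subbundles have rank at most $n$; as $\rk\Lb+\rk\Lbt=2n$, both attain rank $n$ and are therefore maximally isotropic, i.e. almost Dirac. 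Integrability of $\KK$, namely the vanishing of its generalized Nijenhuis tensor, is by Lemma \ref{lem:integrability} precisely the involutivity of each eigenbundle under the Dorfman bracket, so $\Lb$ and $\Lbt$ are genuine Dirac structures.

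For the reverse direction, given transversal Dirac structures with $\TT=\Lb\oplus\Lbt$, I would define $\KK$ by $\KK|_{\Lb}=\id$ and $\KK|_{\Lbt}=-\id$; this is well-defined by uniqueness of the decomposition, and visibly $\KK^2=\id$. To verify $\la\KK\cdot,\KK\cdot\ra=-\lara$, I would decompose $u=u_1+u_2$ and $v=v_1+v_2$ with $u_1,v_1\in\Lb$ and $u_2,v_2\in\Lbt$; isotropy of each summand kills the diagonal terms, leaving $\la u,v\ra=\la u_1,v_2\ra+\la u_2,v_1\ra$, whereas $\la\KK u,\KK v\ra=\la u_1-u_2,v_1-v_2\ra=-\la u_1,v_2\ra-\la u_2,v_1\ra=-\la u,v\ra$. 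Involutivity of both $\Lb$ and $\Lbt$ then says the eigenbundles of $\KK$ are closed under the Dorfman bracket, so $\KK$ is integrable by Lemma \ref{lem:integrability}.

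It remains only to observe that the two assignments are mutually inverse: the eigenbundles of the $\KK$ built from $(\Lb,\Lbt)$ return $(\Lb,\Lbt)$, and the endomorphism built from the eigenbundles of a given $\KK$ agrees with $\KK$ on each eigenbundle, hence everywhere. I do not expect any genuine obstacle here; the only point demanding care is the rank count establishing maximality, which relies on $\lara$ having split signature $(n,n)$ so that two transversal isotropic subbundles are forced to be maximal rather than merely isotropic. The mild asymmetry with the generalized-complex case, where the defining decomposition uses conjugate bundles over $\mathbb{C}$, is exactly mirrored by the sign in the anti-isometry relation $\la\KK\cdot,\KK\cdot\ra=-\lara$.
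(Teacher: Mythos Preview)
Your argument is correct and is the standard one. Note, however, that the paper does not itself supply a proof of this statement: it is quoted from \cite{wade2004dirac} and simply used as input for the subsequent Lie bialgebroid corollary. What you have written is exactly the argument one would expect and matches the approach sketched in the generalized-complex case earlier in the paper (recovering $\II$ from a complex Dirac structure $\Lb$ with $\Lb\oplus\overline{\Lb}=(\TT)\otimes\mathbb{C}$), with real eigenbundles replacing conjugate ones and the anti-isometry $\la\KK\cdot,\KK\cdot\ra=-\lara$ forcing isotropy in place of $\la\II\cdot,\II\cdot\ra=\lara$.

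One small remark: in your forward direction you implicitly use that the eigenspaces of $\KK$ are genuine subbundles, not just a pointwise splitting. This is handled by your rank count---since at every point each eigenspace is isotropic, hence of dimension at most $n$, and the dimensions sum to $2n$, both are exactly $n$-dimensional everywhere, so $\Ker(\KK\mp\id)$ are smooth subbundles. It is worth making that explicit, but it introduces no new difficulty.
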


Combining this result with the well-known result of \cite{Liu:1995lsa} which states that any pair of transversal Dirac structures $(L,\Lt)$ forms a Lie bialgebroid $(L,L^*\simeq \Lt)$, one can immediately infer the following

\begin{lemma*}
Generalized para-complex structures on $\TT$ are in one-to-one correspondence with Lie bialgebroid pairs $(L,L^*)$ such that $L\oplus L^*=\TT$. 
\end{lemma*}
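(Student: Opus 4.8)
The plan is to obtain the statement by simply chaining the two results quoted immediately above: the theorem of \cite{wade2004dirac} identifying generalized para-complex structures with transversal pairs of Dirac structures, and the theorem of \cite{Liu:1995lsa} producing a Lie bialgebroid from such a pair. The only genuine content I need to supply is the identification of the second Dirac structure in the pair with the dual $L^*$ of the first, which is precisely where the splitting condition $L\oplus L^*=\TT$ enters. Thus the lemma is really a corollary, and the proof should be correspondingly short.

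First I would start from a generalized para-complex structure $\KK$ and pass to its $+1$- and $-1$-eigenbundles $\Lb$ and $\Lbt$. By Lemma \ref{lem:integrability} integrability of $\KK$ is exactly involutivity of $\Lb$ and $\Lbt$ under the Dorfman bracket, while Definition \ref{def:gen_structures} guarantees that both are maximally isotropic; hence $(\Lb,\Lbt)$ is a transversal pair of Dirac structures, which is the content of the Wade correspondence. Here transversality is just the eigenbundle decomposition $\Lb\oplus\Lbt=\TT$.

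The key step is to observe that the pairing $\lara$ induces a canonical isomorphism $\Lbt\simeq\Lb^*$. Indeed, since $\lara$ is non-degenerate on $\TT$ and both summands of $\Lb\oplus\Lbt=\TT$ are maximally isotropic, any $v\in\Lbt$ pairing trivially with all of $\Lb$ also pairs trivially with all of $\Lbt$, hence with all of $\TT$, so $v=0$; therefore $\lara$ restricts to a non-degenerate duality between $\Lb$ and $\Lbt$, and $v\mapsto\la v,\cdot\ra|_{\Lb}$ gives an isomorphism $\Lbt\xrightarrow{\ \sim\ }\Lb^*$. Under this identification the theorem of \cite{Liu:1995lsa} endows $(\Lb,\Lb^*)$ with a Lie bialgebroid structure, and the requirement $\Lb\oplus\Lb^*=\TT$ is exactly transversality read through the isomorphism.

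For the converse I would run the construction backwards: a Lie bialgebroid $(L,L^*)$ with $L\oplus L^*=\TT$ furnishes two transversal Dirac structures inside $\TT$, each isotropic because the pairing identifies one with the dual of the other, and setting $\KK$ to act as $+\id$ on $L$ and $-\id$ on $L^*$ yields a generalized para-complex structure with involutive eigenbundles. The two passages are mutually inverse by construction. The one point requiring care — and the closest thing to an obstacle — is verifying that the abstract dual $L^*$ appearing in the bialgebroid coincides canonically with the embedded complementary subbundle, i.e. that the isotropy-induced identification is compatible with the anchors and brackets on both factors; but this compatibility is precisely what \cite{Liu:1995lsa} packages, so invoking it suffices and no independent computation is needed.
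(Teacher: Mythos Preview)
Your proposal is correct and follows exactly the approach the paper itself takes: the paper does not give a separate proof of this lemma but simply states, in the sentence preceding it, that the result follows by combining Wade's correspondence with the Liu--Weinstein--Xu theorem on transversal Dirac structures forming Lie bialgebroids. You have spelled out the details of that combination---in particular the identification $\Lbt\simeq\Lb^*$ via the pairing and the converse direction---more explicitly than the paper, but the underlying logic is identical.
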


Similarly to GC structures, when a GpC structure is integrable, the bivector $\Pi$ in \eqref{eq:GpC_generalform} needs to be a Poisson structure (for GC structures this is shown for example in \cite{Crainic:2004ic}):
\begin{lemma}\label{lem:GpC_poisson}
Let $\KK$ be a GpC structure given by \eqref{eq:GpC_generalform}. Then $\Pi$ is a Poisson bivector, i.e. its Schouten bracket with itself vanishes, $[\Pi,\Pi]=0$.
\end{lemma}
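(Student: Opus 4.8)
The plan is to show that integrability forces the bivector $\Pi$ in \eqref{eq:GpC_generalform} (viewed as the sharp map $\Pi\colon T^*\to T$, $\xi\mapsto\Pi\xi$, consistent with the notation $\Pi\Omega$ there) to satisfy the Koszul-bracket compatibility $[\Pi\xi,\Pi\eta]=\Pi[\xi,\eta]_\Pi$, and then to recognise this as the Poisson condition. This parallels the generalized complex argument of \cite{Crainic:2004ic}. The only input is the vanishing of the generalized Nijenhuis tensor from Lemma \ref{lem:integrability}, evaluated on cotangent arguments.

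First I would feed two one-forms $\xi,\eta\in\se(T^*)$, regarded as sections $0+\xi$ and $0+\eta$ of $\TT$, into the identity $\mathcal{N}_\KK=0$. Using $\KK^2=\id$, this reads
\begin{align*}
[\KK\xi,\KK\eta]+[\xi,\eta]=\KK\big([\KK\xi,\eta]+[\xi,\KK\eta]\big).
\end{align*}
Since the anchors of $\xi,\eta$ vanish, $[\xi,\eta]=0$ even for the twisted bracket \eqref{eq:twisted_dorfman}. Writing $\KK\xi=\Pi\xi-A^*\xi$ and $\KK\eta=\Pi\eta-A^*\eta$ from \eqref{eq:GpC_generalform}, I would evaluate the mixed brackets with \eqref{eq:twisted_dorfman}, obtaining the one-forms $[\KK\xi,\eta]=\Lie_{\Pi\xi}\eta$ and $[\xi,\KK\eta]=-\imath_{\Pi\eta}\rd\xi$, whose sum is the Koszul bracket $[\xi,\eta]_\Pi:=\Lie_{\Pi\xi}\eta-\imath_{\Pi\eta}\rd\xi$ (which is visibly skew in $\xi,\eta$).

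Next I would apply the anchor $\pi$ to the whole identity. On the left, $\pi[\KK\xi,\KK\eta]=[\Pi\xi,\Pi\eta]$; on the right, since $\gamma:=[\xi,\eta]_\Pi$ is a one-form, $\pi(\KK\gamma)=\Pi\gamma$. The $H$-contributions always land in $T^*$ and hence drop out of this tangent component, which is why the conclusion is insensitive to the twisting. What survives is exactly $[\Pi\xi,\Pi\eta]=\Pi[\xi,\eta]_\Pi$ for all $\xi,\eta$.

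Finally I would convert this into $[\Pi,\Pi]=0$. Evaluating on exact forms $\xi=\rd f$, $\eta=\rd g$, the Koszul bracket is exact, $[\rd f,\rd g]_\Pi=\rd\{f,g\}$ with $\{f,g\}:=\Pi(\rd f,\rd g)$, so the identity becomes $[X_f,X_g]=X_{\{f,g\}}$ for the Hamiltonian vector fields $X_f=\Pi\,\rd f$. Pairing with $\rd h$ produces the Jacobiator $\{f,\{g,h\}\}+\{g,\{h,f\}\}+\{h,\{f,g\}\}$, which equals $\tfrac{1}{2}[\Pi,\Pi](\rd f,\rd g,\rd h)$; since $[\Pi,\Pi]$ is tensorial and the exact forms span $T^*$ pointwise, it vanishes identically. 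The bracket computations and sign bookkeeping are routine; the only conceptual point, and the step I would state most carefully, is the equivalence between the anchor/Koszul compatibility $[\Pi\xi,\Pi\eta]=\Pi[\xi,\eta]_\Pi$ and the Schouten condition $[\Pi,\Pi]=0$, which I would establish precisely by the evaluation on exact forms just described.
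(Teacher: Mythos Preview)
Your proof is correct and follows essentially the same route as the paper: evaluate $\mathcal{N}_\KK$ on one-forms, use that the Dorfman bracket of one-forms vanishes, and project onto the tangent component to obtain the anchor condition $[\Pi\xi,\Pi\eta]=\Pi[\xi,\eta]_\Pi$, which is the standard characterization of $[\Pi,\Pi]=0$. Your presentation is in fact a bit more careful than the paper's: you make explicit why the $H$-twisting drops out of the tangent projection, and you spell out the passage from the Koszul-bracket compatibility to the Jacobi identity via exact forms, whereas the paper simply cites this equivalence.
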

\begin{proof}
Let us evaluate \eqref{eq:gen_nijenhuis} on a pair of one-forms, i.e. $u=\ap$ and $v=\bt$:
\begin{align*}
\mathcal{N}_\KK(\ap,\bt)=[\Pi(\ap)-A^*(\ap),\Pi(\bt)-A^*(\bt)]-\KK([\Pi(\ap),\bt]+[\ap,\Pi(\bt)]),
\end{align*}
since $\brac$ vanishes on $1$-forms. Taking now the $1$-form part of the above and setting it to zero yields
\begin{align}\label{eq:proof_poisson}
\Pi([\ap,\bt]_\Pi)=[\Pi(\ap),\Pi(\bt)],
\end{align}
where $\brac_\Pi$ is the Poisson Lie algebroid bracket \cite{Courant_1994}
\begin{align*}
[\ap,\bt]_\Pi=\Lie_{\Pi(\ap)}\bt-\Lie_{\Pi(\bt)}\ap-\rd\Pi(\ap,\bt),
\end{align*}
and \eqref{eq:proof_poisson} is equivalent to $\Pi$ being a Poisson bivector.
\end{proof}

We also define the notion of a type for GpC structures

\begin{Def}
Let $\KK$ be a GpC structure on $M$. The {\it type} of $\KK$ at $x \in M$ is a pair $(l,\tl{l}) \in \mathbb{Z}^2$, with $0 \leq l, \tilde l \leq \dim M$, where $l$ (resp. $\tilde l$) is the type of the Dirac structure $\Lb$ (resp. $\widetilde{\Lb}$) at $x$. We say that $\KK$ is {\it non-degenerate} at $x$ if $l = \tilde l = \text{dim} (M)$.
\end{Def}

We now present main examples. More can be found in \cite{wade2004dirac}.

\begin{Ex}[The trivial structure and its deformations]\label{ex:GpC_trivial}
Any manifold supports the following GpC structure
\begin{align*}
\KK_0=
\begin{pmatrix}
\id & 0 \\
0 & -\id
\end{pmatrix},
\end{align*}
that has eigenbundles $T$ and $T^*$ and is always integrable. The following two GpC structures can be seen as deformations of $\KK_\omega$ by either a two-form $b$ or a bi-vector $\beta$:
\begin{align*}
\KK_b=
\begin{pmatrix}
\id & 0 \\
2b & -\id
\end{pmatrix},\quad
\KK_\beta=
\begin{pmatrix}
\id & 2\beta \\
0 & -\id
\end{pmatrix}.
\end{align*}
$\KK_b$ is integrable iff $\rd b =0$, i.e. $b$ is presymplectic and its eigenbundles are $\Lb_b=\text{graph}(b)=\{X+b(X)\mid X \in \XX\}$ and $\widetilde{\Lb}=T^*$. Similarly, $\KK_\beta$ is integrable iff $\beta$ is Poisson (by Lemma \ref{lem:GpC_poisson}) and its eigenbundles are $\Lb=T$ and $\widetilde{\Lb}=\text{graph}(-\beta)=\{\ap-\beta(\ap)\}\mid \ap \in \Omega\}$. The types of $\KK_0$ and $\KK_b$ are $(n,n)$ ($n$ being the dimension of the base manifold), while the type of $\KK_\beta$ can range anywhere between $(n,0)$ and $(n,n)$, depending on the rank of $\beta$ and its degeneracy, it can even change from point to point throughout the manifold.
\end{Ex}

\begin{Ex}[Product structures]
A product structure $J\in \Endo(T)$, defines the diagonal generalized para-complex structure:
\begin{align*}
\KK_J=
\begin{pmatrix}
J & 0 \\
0 & -J^*
\end{pmatrix}.
\end{align*}
The corresponding Dirac structures are given by $\Lb=T^{(1,0)}\oplus T^{*(0,1)}$ and $\widetilde{\Lb}=T^{(0,1)}\oplus T^{*(1,0)}$, where the bigrading is with respect to $J$. The integrability of $\KK_J$ is equivalent to Frobenius integrability of $J$ , i.e. vanishing of the Nijenhuis tensor of $J$. The type $(k,l)$ of $\KK_J$ is always such that $k+l=2n$; in particular, if $J$ is a para-complex structure, $\KK_J$ is of type $(n,n)$.
%

\end{Ex}

\begin{Ex}[Symplectic structures]\label{ex:GpC_sympl}
A symplectic form $\omega$ defines the anti-diagonal GpC structure
\begin{align*}
\KK_\omega=
\begin{pmatrix}
0 & \omega^{-1} \\
\omega & 0
\end{pmatrix}.
\end{align*}
The $\pm 1$ eigenbundles are given by $\text{graph}(\pm\omega)=\{X\pm\omega(X)\mid X\in \XX\}$, and the integrability of $\KK_\omega$ is equivalent to $\rd\omega=0$. Its type is $(n,n)$. This is an  example of a nondegenerate GpC structure, since both its eigenbundles are isomorphic to $T$ (as well as $T^*$).
\end{Ex}
\paragraph{Comparision with GC structures.}
Examples \ref{ex:GpC_sympl} and \ref{ex:gen_cpx} show that a symplectic manifold is both a GC and GpC manifold. However, while almost GC structures exist only on almost complex manifolds \cite{Gualtieri:2003dx}, Example \ref{ex:GpC_trivial} demonstrates that GpC structures exist on any Poisson manifold and in particular on any smooth manifold (with trivial Poisson structure). Another feature of GpC geometry that is not present in GC geometry is that the GpC structures can be half-integrable (similarly to the usual tangent bundle case, as explained in Appendix \ref{sec:paracpx/product}): see the cases of $\KK_b$ and $\KK_\beta$ from Example \ref{ex:GpC_trivial} which are always at least half integrable and are fully integrable iff $b$ is closed and $\beta$ is Poisson, respectively. On the other hand, $\KK_\omega$ in Example \ref{ex:GpC_sympl} does not have this property.


\subsection{Non-isotropic structures}

\subsubsection{Generalized product structures}
\begin{Def}
A {\bf generalized product} structure (GP) is an endomorphism $\JJ \in \End(\TT)$, such that $\JJ^2=\id$ and $\langle \JJ,\JJ\rangle=\langle\cdot,\cdot\rangle$.
\end{Def}
As discussed in Section \ref{sec:gen_structures}, for non-isotropic structures there is no straightforward notion of integrability and for this reason we typically do not use the labels almost/integrable in the case of non-isotropic generalized structures.

A general form of GP structures is the following
\begin{align}\label{eq:GP_genform}
 \JJ =
 \begin{pmatrix}
A & g \\
\sigma & A^*
\end{pmatrix}, \text{ such that }
 \begin{cases}
 A^2+g\sigma&=\id,\\
Ag+gA^*&=0,\\
\sigma A+A^*\sigma&=0,
 \end{cases}
\end{align} 
where $A \in \End(T)$ and $g \in \se(T \otimes T)$, $\sigma \in \se(T^*\otimes T^*)$ are symmetric tensors. The main examples are the following:
\begin{Ex}[Product structures]
Any (almost) product structure $J \in \End(T)$ defines a GP structure $\JJ_J\in \End(\TT)$ in the following way
\begin{align*}
\JJ_J=
\begin{pmatrix}
J & 0 \\
0 & J^*
\end{pmatrix},
\end{align*}
and the $\pm 1$ eigenbundles are $\Lb=T^{(1,0)}\oplus T^{*(1,0)}$ and $\widetilde{\Lb}=T^{(0,1)}\oplus T^{*(0,1)}$, where the bigrading is with respect to $J$.
\end{Ex}

\begin{Ex}(Pseudo-Riemannian structures)\label{ex:gen_metric}
Any (pseudo-)Riemannian structure $\eta$ defines a GP structure $\JJ_\eta$ in the following way
\begin{align*}
\JJ_\eta=
\begin{pmatrix}
0 & \eta^{-1} \\
\eta & 0
\end{pmatrix},
\end{align*}
and the $\pm 1$ eigenbundles are $\text{graph}(\pm\eta)\subset \TT$; $\JJ_\eta$ is therefore non-degenerate. Non-degenerate GP structures are called generalized metrics and will be discussed in detail in Section \ref{sec:gen_metric}
\end{Ex}

\subsubsection{Generalized anti-complex structures}
For completeness, we also review basic facts about generalized anti-complex structures, although they are very similar to generalized product structures
\begin{Def}
A {\bf generalized anti-complex} structure is an endomorphism $\Aa \in \End(\TT)$, such that $\Aa^2=-\id$ and $\langle \Aa,\Aa\rangle=-\langle\cdot,\cdot\rangle$.
\end{Def}
The general form is the same as \eqref{eq:GP_genform}, except for a sign change in the equations the blocks satisfy
\begin{align*}
 \Aa =
 \begin{pmatrix}
A & g \\
\sigma & A^*
\end{pmatrix}, \text{ such that }
 \begin{cases}
 A^2+g\sigma&=-\id,\\
Ag+gA^*&=0,\\
\sigma A+A^*\sigma&=0
 \end{cases}.
\end{align*} 
The examples are similar as for the GP structures, given by complex and pseudo-Riemannian structures:
\begin{Ex}
Any complex structure $I$ defines a generalized anti-complex structure by
\begin{align*}
\Aa_I=
\begin{pmatrix}
I & 0 \\
0 & I^*
\end{pmatrix},
\end{align*}
and any (pseudo-)Riemannian metric defines a generalized anti-complex structure by
\begin{align*}
\Aa_\eta=
\begin{pmatrix}
0 & -\eta^{-1} \\
\eta & 0
\end{pmatrix}.
\end{align*}
This is the non-degenerate type of a generalized anti-complex structure.
\end{Ex}

The nondegenerate generalized anti-complex structures have been previously studied for example by Vaisman in \cite{vaisman2015generalized}.

\subsection{Generalized metrics and related structures}\label{sec:gen_metric}
In this section we discuss generalized metrics, which are the non-degenerate generalized product structures; such structures are generically given by the $b$-field transformations of the structure $\JJ_\eta$ in Example \ref{ex:gen_metric}. We then also recall a definition and some properties of the generalized Bismut connection, which is a generalized connection on $\TT$ that one can naturally associate to any generalized metric.

\begin{Def}
A {\bf generalized (indefinite) metric} is a non-degenerate generalized product structure.
\end{Def}
\begin{remark}
In the following text, we denote generalized metric structures by $\GG$, to emphasize that we wish to think of them as metric tensors on $\TT$. Indeed, $\GG$ defines a metric (non-degenerate symmetric tensor) on $\TT$ by
\begin{align*}
h(u,v)=\la \GG u,v\ra.
\end{align*}
The name generalized metric is typically used when $h$ is positive-definite, but here we will use the term for indefinite metrics as well, emphasizing this fact by the name ``indefinite generalized metric" whenever necessary. We also remark here that the discussion below has been first presented for the positive definite case in \cite{Gualtieri:2003dx}.
\end{remark}

In Definition \ref{def:gen_structures} we defined a non-degenerate structure as a structure whose eigenbundles are isomorphic to the (complexified) tangent bundle. Let us now describe what non-degeneracy implies for the general form of generalized product structures \eqref{eq:GP_genform}. It is easy to show that for the GP structure to be non-degenerate, its upper right corner has to be an invertible map. Whenever this is the case, the system of equations in \eqref{eq:GP_genform} can be solved explicitly in terms of a pseudo-Riemannian metric $\eta\coloneqq g^{-1}$ and a two-form $b\coloneqq -\eta A$. The structure $\JJ$ is then simply a b-transform of $\JJ_\eta$ from Example \ref{ex:gen_metric}:
\begin{align} \label{eq:indefinite_metric}
\JJ\coloneqq \JJ(\eta,b)=e^b(\JJ_\eta)=
\begin{pmatrix}
\id & 0 \\
b & \id
\end{pmatrix}
\begin{pmatrix}
0 & \eta^{-1} \\
\eta & 0
\end{pmatrix}
\begin{pmatrix}
\id & 0 \\
-b & \id
\end{pmatrix}.
\end{align}
The eigenbundles of $\JJ$ are $C_\pm=\text{graph}(g\pm\eta)$ and therefore isomorphic to $T$. We denote the isomorphisms by $\pi_\pm$:
\begin{align}\label{eq:pi_iso}
\begin{aligned}
\pi_\pm: C_\pm&\xleftrightarrow{\simeq} T\\
X+\ap & \overset{\pi_\pm}{\longmapsto} X,\quad X+\ap \in \se(C_\pm)\\
X+ (b\pm \eta) X & \overset{\pi_\pm^{-1}}{\longmapsfrom} X, \quad X\in \se(T).
\end{aligned}
\end{align}

We also recall the following useful formula that recovers the metric $\eta$ from $\GG=\GG(\eta,b)$:
\begin{align}\label{eq:genmetric_metric}
\eta(X,Y)=\frac{1}{2}\la \GG\pi_\pm^{-1}X,\pi_\pm^{-1}Y\ra=\pm\frac{1}{2}\la \pi_\pm^{-1}X,\pi_\pm^{-1}Y\ra.
\end{align}

\paragraph{Generalized Bismut Connection.}
To any (indefinite) generalized metric, one can associate a generalized connection called generalized Bismut connection. This connection will play a central role in Section \ref{sec:bismut_integr}, where it will be used to define integrability of non-isotropic generalized structures.

We start the discussion by recalling the definition of a generalized Bismut connection from \cite{Gualtieri:2007bq}, extending it to indefinite metrics as well:
\begin{Def}
Let $\GG=\GG(\eta,b) \in \End(\TT)$ be a generalized (indefinite) metric and denote $C_\pm$ its eigenbundles. We split the sections $u \in \se(\TT)$ accordingly, $u=u_++u_-$. Then the following expression defines a \textbf{generalized connection parallelizing $\GG$}:
\begin{align}
D^H_uv=[u_-,v_+]_{H+}+[u_+,v_-]_{H-}+[Cu_-,v_-]_{H-}+[Cu_+,v_+]_{H+}.
\end{align}
Here $u,v \in \se(\TT)$, $[\ ,\ ]_H$ is the twisted Dorfman bracket and $C$ is the generalized almost para-Complex structure
\begin{align*}
C=
\begin{pmatrix}
\id & 0 \\
2b & -\id
\end{pmatrix}=
e^b \left(
\begin{pmatrix}
\id & 0 \\
0 & -\id
\end{pmatrix}
\right) \in \End(\TT),
\end{align*}
which maps $C_\pm\overset{C}{\mapsto} C_\mp$. 
\end{Def}
The generalized Bismut connection of $\GG$ is related to two ``usual" connections $\n^\pm$ via the isomorphisms $\pi_\pm$:
\begin{align}\label{eq:n_pm}
\begin{aligned}
D_uv&=\pi_+^{-1}\n_{\pi (u)}^+\pi_+ v_++\pi_-^{-1}\n_{\pi (u)}^-\pi_-v_-,\\
\n^\pm&=\lc\pm \frac{1}{2}\eta^{-1}H_b,
\end{aligned}
\end{align}
where $\lc$ is the Levi-Civita connection of $\eta$ in $\GG(\eta,b)$ and $H_b$ is the $H$-flux of the Courant algebroid with $b$ absorbed, $H_b=H+\rd b$.

The connections $\n^\pm$ appear in physics as the natural connections in the context of supersymmetry, particularly $(2,2)$ supersymmetry. The reason for this is that they parallelize the metric $\eta$ and have fully skew torsion equal to $T^{\n^\pm}=\pm H_b$. Another interesting property of the connections $\n^\pm$ is that they can be expressed as the restriction of the Dorfman bracket to the eigenbundles $C_\pm$ in the following way:

\begin{proposition}\label{prop:n_pm_and_bracket}
Let $\GG(\eta,b)$ be a generalized (indefinite) metric on a courant algebroid with flux $H$ and denote $H_b=H+\rd b$. The mixed projections of the $H_b$-twisted Dorfman bracket to $C_\pm$ then yield the connections $\n^\pm$
\begin{align}
\mp\frac{1}{2}\la [\pi^{-1}_\pm(X),\pi^{-1}_\mp(Y)],\pi^{-1}_\mp (Z)\ra=\eta (\n^\pm_XY,Z).
\end{align}
\end{proposition}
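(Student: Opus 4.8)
The plan is to evaluate the left-hand side by brute force, exploiting the explicit description \eqref{eq:pi_iso} of the isomorphisms, namely $\pi_\pm^{-1}(X)=X+(b\pm\eta)(X)$, together with the connection formula \eqref{eq:dorf-connection} for the Dorfman bracket. The key idea is to take for the torsionless connection $\n$ in \eqref{eq:dorf-connection} the Levi-Civita connection $\lc$ of $\eta$: it is torsion-free, so \eqref{eq:dorf-connection} applies verbatim, and it is metric, $\lc\eta=0$, which is what makes the computation collapse. Writing $u=\pi_\pm^{-1}(X)$, $v=\pi_\mp^{-1}(Y)$, $w=\pi_\mp^{-1}(Z)$, formula \eqref{eq:dorf-connection} gives $\la[u,v],w\ra=\la\lc_X v-\lc_Y u,w\ra+\la\lc_Z u,v\ra$, to which the flux contributes the extra term $(\imath_Y\imath_X H)(Z)=H(X,Y,Z)$ coming from the twist in \eqref{eq:twisted_dorfman}. (By the $b$-field invariance \eqref{eq:b_field-bracket} one may just as well set $b=0$ and carry the flux as $H_b$; I keep $b$ explicit below.)

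First I would extend $\lc$ to $T^*$ by duality and use $\lc\eta=0$ to write, for instance, $\lc_X\big((b\mp\eta)(Y)\big)=(\lc_X b)(Y)+(b\mp\eta)(\lc_X Y)$, and similarly for the other terms. Expanding the three pairings, the crucial observation is that the symmetric part ($\eta$) and the skew part ($b$) behave oppositely under the mixed projection. In the term $\la\lc_X v,w\ra$ both $v$ and $w$ lie in the \emph{same} eigenbundle $C_\mp$, so their $\eta$-contributions reinforce and give $\mp2\eta(\lc_X Y,Z)$, while the algebraic $b$-contractions $b(\lc_X Y,Z)+b(Z,\lc_X Y)$ cancel by skewness, leaving only the derivative $(\lc_X b)(Y,Z)$. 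By contrast, in the remaining two terms $u\in C_\pm$ is paired against sections of $C_\mp$, so the $\eta$-contributions now carry opposite signs and cancel by symmetry of $\eta$, leaving only $-(\lc_Y b)(X,Z)$ and $(\lc_Z b)(X,Y)$.

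Assembling, the three surviving $b$-derivative terms combine through the standard torsion-free expression $(\lc_X b)(Y,Z)-(\lc_Y b)(X,Z)+(\lc_Z b)(X,Y)=\rd b(X,Y,Z)$, which together with the twist produces $\rd b+H=H_b$; the net result is $\la[\pi_\pm^{-1}X,\pi_\mp^{-1}Y],\pi_\mp^{-1}Z\ra=\mp2\eta(\lc_X Y,Z)+H_b(X,Y,Z)$. Multiplying by $\mp\tfrac12$ and recalling $\n^\pm=\lc\pm\tfrac12\eta^{-1}H_b$ from \eqref{eq:n_pm} then reproduces $\eta(\n^\pm_X Y,Z)$. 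As a consistency check one notes that the right-hand side is tensorial in $X$ and $Z$ and a derivation in $Y$, confirming a posteriori that the mixed projection really is a connection.

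The routine but delicate part is the sign bookkeeping. The final identity is sensitive both to the sign convention in the twist of \eqref{eq:twisted_dorfman} (the order of the interior products in $\imath_Y\imath_X H$) and to the convention by which $\eta^{-1}H_b$ contracts the three-form into the slots of $\n^\pm$; these must be matched against the definition of $\n^\pm$ so that the $H_b$-term lands with the sign dictated by \eqref{eq:n_pm}. The conceptual heart, however, is the cancellation pattern of the previous paragraph: it is precisely the fact that the symmetric part survives in one term while the skew part antisymmetrizes into $\rd b$ that turns the mixed Dorfman bracket into the metric connection $\n^\pm$ with skew torsion $\pm H_b$.
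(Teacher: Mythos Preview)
Your proposal is correct and follows essentially the same route as the paper: both apply formula \eqref{eq:dorf-connection} with $\n=\lc$, expand the three terms using $\pi_\pm^{-1}(X)=X+(b\pm\eta)X$ and $\lc\eta=0$, observe that the $\eta$-contributions survive only in the $\la\lc_X v,w\ra$ term while the $b$-derivatives assemble into $\rd b$ via the cyclic identity, and finish by adding the twist $H$ to obtain $H_b$. Your caveat about sign bookkeeping is well placed---this is indeed the only delicate point, and the paper handles it in exactly the same way.
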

\begin{proof}
Using the formula \eqref{eq:dorf-connection} with $\n=\lc$, we expand
\begin{align*}
\la [\pi^{-1}_\pm(X),\pi^{-1}_\mp(Y)],\pi^{-1}_\mp (Z)\ra &= \la \lc_X\pi^{-1}_\mp(Y)-\lc_Y \pi^{-1}_\pm(X), \pi^{-1}_\mp(Z)\ra\\
&-\la \lc_Z\pi^{-1}_\pm(X),\pi^{-1}_\mp(Y)\ra+H(X,Y,Z) 
\end{align*}
Now, the individual terms can be expanded and further simplified
\begin{align*}
\la \lc_X\pi^{-1}_\mp(Y), \pi^{-1}_\mp(Z)\ra&=\la \lc_X(Y+b(Y)\mp\eta(Y)),Z+b(Z)\mp \eta(Z)\ra\\
&=\mp 2 \eta(\lc_XY,Z)+(\lc_Xb)(Y,Z),
\end{align*}
and carrying out similar calculation with the remaining two terms and using the formula
\begin{align}\label{eq:proof_n_db}
\sum_{Cycl.\ X,Y,Z}\left(\lc_Xb\right)(Y,Z)=(\rd b)(X,Y,Z),
\end{align}
then yields the result.
\end{proof}
Using the properties of the Dorfman bracket, it can be shown that the all other mixed components of the Dorfman bracket are also related to the connections $\n^\pm$. Interestingly, the pure components yield a bracket operation on the algebra of vector fields very closely related to the D-bracket of Double Field Theory as described in \cite{vaisman2012geometry,vaisman2013towards,Svoboda:2018rci,Szabo-paraherm}:

\begin{proposition}\label{prop:almost_D-brac}
Let $\GG(\eta,b)$ and $H_b$ be as in Proposition \ref{prop:n_pm_and_bracket}. The pure components of the $H_b$-twisted Dorfman bracket yield a bracket operation on the tangent bundle called the {\bf almost D-bracket} with a flux $\pm H_b$;
\begin{align}\label{eq:almost_D-brac}
\pm \frac{1}{2}\la [\pi^{-1}_\pm X,\pi^{-1}_\pm Y],\pi^{-1}_\pm Z\ra=\eta(\bl X,Y\br^{\lc},Z)\pm \frac{1}{2}H_b(X,Y,Z),
\end{align}
where $\lc$ denotes the Levi-Civita connection of $\eta$ and $\bracd^{\lc}$ is defined by
\begin{align*}
\eta(\bl X,Y\br^{\lc},Z)=\eta(\lc_XY-\lc_YX,Z)+\eta(\lc_ZX,Y).
\end{align*}
\end{proposition}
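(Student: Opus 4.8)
The plan is to mirror the proof of Proposition~\ref{prop:n_pm_and_bracket}, replacing the mixed projections by the pure ones. First I would apply the connection formula \eqref{eq:dorf-connection} to the twisted Dorfman bracket with the torsionless connection taken to be the Levi-Civita connection $\lc$ of $\eta$, which yields
\begin{align*}
\la [\pi^{-1}_\pm X,\pi^{-1}_\pm Y],\pi^{-1}_\pm Z\ra&=\la \lc_X\pi^{-1}_\pm Y-\lc_Y\pi^{-1}_\pm X,\pi^{-1}_\pm Z\ra\\
&\quad+\la \lc_Z\pi^{-1}_\pm X,\pi^{-1}_\pm Y\ra+H(X,Y,Z).
\end{align*}
Next I would expand each of the three pairings using the explicit form $\pi^{-1}_\pm X=X+(b\pm\eta)(X)$ from \eqref{eq:pi_iso}, together with the metric compatibility $\lc\eta=0$ and the skew-symmetry of $b$. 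Exactly as in the computation following Proposition~\ref{prop:n_pm_and_bracket}, each pairing splits into an $\eta$-part and a $(\lc b)$-part; for the pure projections one finds
\begin{align*}
\la \lc_X\pi^{-1}_\pm Y,\pi^{-1}_\pm Z\ra=\pm 2\,\eta(\lc_X Y,Z)+(\lc_X b)(Y,Z),
\end{align*}
and analogously for the other two terms.

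The decisive difference from the mixed case is a sign: here every pairing carries the \emph{same} sign $\pm$ in front of its $\eta$-part, so that the three $\eta$-contributions reinforce rather than partially cancel. Collecting them and multiplying through by $\pm\tfrac12$ produces
\begin{align*}
\eta(\lc_X Y,Z)-\eta(\lc_Y X,Z)+\eta(\lc_Z X,Y),
\end{align*}
which is precisely $\eta(\bl X,Y\br^{\lc},Z)$ by the defining formula of $\bracd^{\lc}$. I would then handle the three $(\lc b)$-contributions: since $b$ is skew they assemble into the cyclic sum $\sum_{\mathrm{cycl.}}(\lc_X b)(Y,Z)$, which by \eqref{eq:proof_n_db} equals $(\rd b)(X,Y,Z)$. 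Adding the flux term $H(X,Y,Z)$ coming from the bracket and using $H_b=H+\rd b$ then gives the claimed $\pm\tfrac12 H_b(X,Y,Z)$, completing the identity.

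The computation is routine once the bookkeeping is set up, so the main obstacle is precisely that bookkeeping: keeping track of the uniform $\pm$ sign on the $\eta$-parts (which is what distinguishes this pure-projection statement from the mixed one, where two of the $\eta$-parts cancel and the surviving single derivative $\lc_X Y$ is what builds $\n^\pm$), and verifying that the symmetric combination that survives is exactly the bracket $\bracd^{\lc}$ rather than a single covariant derivative. A secondary point to be careful about is that the skew-symmetry of $b$ is used twice in each pairing — once to cancel the cross-terms of the form $b(\lc\,\cdot\,,\cdot)$ and once to convert the surviving $(\lc b)$-terms into the cyclic sum — so I would verify those cancellations explicitly before invoking \eqref{eq:proof_n_db}.
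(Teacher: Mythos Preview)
Your proposal is correct and follows essentially the same approach as the paper's own proof: apply \eqref{eq:dorf-connection} with $\n=\lc$, expand each pairing using $\pi^{-1}_\pm X=X+(b\pm\eta)(X)$, observe that the $\eta$-parts assemble into $\eta(\bl X,Y\br^{\lc},Z)$ while the $(\lc b)$-parts combine via \eqref{eq:proof_n_db} into $\rd b$, and add the flux $H$ to obtain $H_b$. If anything, you supply more detail than the paper, which simply says ``after expanding, simplifying and again making use of \eqref{eq:proof_n_db}''.
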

\begin{proof}
We use the same strategy as in the proof of the Proposition \ref{prop:n_pm_and_bracket}. Using \eqref{eq:dorf-connection}, we get
\begin{align*}
\la [\pi^{-1}_\pm X,\pi^{-1}_\pm Y],\pi^{-1}_\pm Z\ra&=\la \lc_X\pi^{-1}_\pm(Y)-\lc_Y\pi^{-1}_\pm (X),\pi^{-1}_\pm (Z)\ra\\
&+\la \lc_Z\pi^{-1}_\pm (X),\pi^{-1}_\pm (Y)\ra+H(X,Y,Z),
\end{align*}
which after expanding, simplifying and again making use of\eqref{eq:proof_n_db} gives
\begin{align*}
\pm \frac{1}{2}\la [\pi^{-1}_\pm X,\pi^{-1}_\pm Y],\pi^{-1}_\pm Z\ra&=\eta(\lc_XY-\lc_YX,Z)+\eta(\lc_ZX,Y)\\
&\pm \frac{1}{2}(H+ \rd b)(X,Y,Z).
\end{align*}
\end{proof}
\begin{remark}
The idea of obtaining the D-bracket via the projections $\pi_\pm$ has been used in \cite{Chatzistavrakidis:2018ztm}, where the authors use the generalized metric $\GG(\eta,b=0)$ and $H=0$, observing that this way they recover the well-known local expressions for the D-bracket without fluxes. Here we call the bracket \eqref{eq:almost_D-brac} an {\bf almost} D-bracket because here it is not associated to any para-Hermitian structure, as described in \cite{vaisman2012geometry,vaisman2013towards,Svoboda:2018rci,Szabo-paraherm}. However, in Theorem \ref{theo:D-bracket} we show that in the setting of generalized para-K\"ahler geometry, the almost D-bracket with flux $\pm H_b$ in fact is the D-bracket for the para-Hermitian structures $K_\pm$ associated to the generalized para-K\"ahler structure. 
\end{remark}

There is a tensorial quantity associated to any generalized connection $D$ called the {\bf generalized torsion}:
\begin{align}\label{eq:gen_torsion}
T^D(u,v,w)=\la D_uv-D_vu-[u,v]_H,w\ra +\la D_wu,v\ra.
\end{align}
For the Bismut connection, the generalized torsion is given by \cite{Gualtieri:2007bq} $T^D=2\pi_+^*H_b+2\pi_-^*H_b$:

\begin{proposition}\label{prop:gen_torsion}
Let $D$ be the Bismut connection associated to a generalized metric $\GG(\eta,b)$. Then the pure components in $\Lambda^3 C_\pm$ of the generalized torsion $T^D$ satisfy
\begin{align*}
T^D(\pi_\pm^{-1}X,\pi_\pm^{-1}Y,\pi_\pm^{-1}Z)=2H_b(X,Y,Z),
\end{align*}
while the mixed components vanish.
\end{proposition}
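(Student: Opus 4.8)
The plan is to exploit the fact that the generalized Bismut connection is block-diagonal with respect to the splitting $\TT = C_+\oplus C_-$, in the sense of \eqref{eq:n_pm}, and then to reduce every term in the torsion formula \eqref{eq:gen_torsion} to a pairing already computed in Propositions \ref{prop:n_pm_and_bracket} and \ref{prop:almost_D-brac}. Since $T^D$ is tensorial, it suffices to evaluate it on frames adapted to the eigenbundles, i.e. on triples of the form $\pi_\pm^{-1}X,\pi_\pm^{-1}Y,\pi_\pm^{-1}Z$, treating the purely-$C_+$, purely-$C_-$ and mixed cases separately. The orthogonality $C_+\perp C_-$ (which follows from $\GG^2=\id$ and $\la\GG\cdot,\GG\cdot\ra=\lara$) together with \eqref{eq:n_pm} immediately kills several terms: for pure $v$ the derivative $D_uv$ lies in the same eigenbundle as $v$, so its pairing against an element of the opposite eigenbundle vanishes.

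For the pure $C_+$ case I take $u=\pi_+^{-1}X$, $v=\pi_+^{-1}Y$, $w=\pi_+^{-1}Z$. By \eqref{eq:n_pm} each covariant-derivative term equals $\pi_+^{-1}\n^+(\cdots)$, and \eqref{eq:genmetric_metric} converts the pairings $\la D_uv,w\ra$, $\la D_vu,w\ra$, $\la D_wu,v\ra$ into $2\eta(\n^+_XY,Z)$, $2\eta(\n^+_YX,Z)$, $2\eta(\n^+_ZX,Y)$. The Dorfman term $\la[u,v]_H,w\ra$ is exactly the pure component of Proposition \ref{prop:almost_D-brac}, namely $2\eta(\bl X,Y\br^{\lc},Z)+H_b(X,Y,Z)$. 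Substituting $\n^+=\lc+\tfrac12\eta^{-1}H_b$ and expanding $\bl X,Y\br^{\lc}$, I expect all Levi-Civita contributions to cancel in pairs (this is where metric-compatibility $\lc\eta=0$ and the symmetry of $\lc$ enter), while the four $H_b$ contributions combine, using the total antisymmetry of $H_b$ under relabelling of $X,Y,Z$, into $2H_b(X,Y,Z)$. The pure $C_-$ case is identical after flipping the sign in \eqref{eq:genmetric_metric} and using the bottom-sign versions of Proposition \ref{prop:almost_D-brac} and of $\n^-=\lc-\tfrac12\eta^{-1}H_b$; it again yields $2H_b$.

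For the mixed case it suffices, by antisymmetry of the first two slots of \eqref{eq:gen_torsion}, to treat a triple with two arguments in $C_+$ and one in $C_-$, say $u=\pi_+^{-1}X$, $v=\pi_+^{-1}Y$, $w=\pi_-^{-1}Z$. Here $\la D_uv,w\ra=\la D_vu,w\ra=0$ by orthogonality, while $\la D_wu,v\ra=2\eta(\n^+_ZX,Y)$ as before. The only genuinely new quantity is the $C_-$-projection of the pure bracket, $\la[\pi_+^{-1}X,\pi_+^{-1}Y]_H,\pi_-^{-1}Z\ra$, which is covered by neither proposition; I would compute it directly from \eqref{eq:dorf-connection} with $\n=\lc$, using $\lc_X\pi_+^{-1}Y=\pi_+^{-1}(\lc_XY)+(\lc_Xb)(Y)$ and the orthogonality of the eigenbundles. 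This produces a term $2\eta(\lc_ZX,Y)$ that cancels the connection contribution, together with a combination of $(\lc_Xb)$-terms and $H(X,Y,Z)$ that, via the identity \eqref{eq:proof_n_db} for $\rd b$, assembles into $H_b=H+\rd b$ and cancels the remaining $\n^+$ piece. The mixed component therefore vanishes.

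The main obstacle is organizational rather than conceptual: keeping the signs straight in \eqref{eq:genmetric_metric} (which differ between $C_+$ and $C_-$) and antisymmetrizing $H_b$ correctly under the cyclic relabelling of $X,Y,Z$, so that the Levi-Civita and $\rd b$ pieces cancel cleanly. A secondary point worth stating explicitly is that this computation confirms $T^D$ has no mixed components, so that as a $3$-form it equals $2\pi_+^*H_b+2\pi_-^*H_b$, matching the result of \cite{Gualtieri:2007bq} quoted above.
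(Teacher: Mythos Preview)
Your proof is correct and, for the pure $C_\pm$ components, follows exactly the paper's route: express the connection terms via \eqref{eq:n_pm} and \eqref{eq:genmetric_metric}, the bracket term via Proposition~\ref{prop:almost_D-brac}, and watch the Levi-Civita contributions cancel against $\bl X,Y\br^{\lc}$ while the three $H_b$-pieces from $\n^+$ and the one from the bracket combine to $2H_b$.

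For the mixed components your treatment is actually more explicit than the paper's, which dispatches the matter in one sentence as ``a direct consequence of \eqref{eq:n_pm}''. You correctly observe that block-diagonality of $D$ only kills $\la D_uv,w\ra$ and $\la D_vu,w\ra$; the surviving term $\la D_wu,v\ra=2\eta(\n^+_ZX,Y)$ and the mixed bracket pairing $\la[\pi_+^{-1}X,\pi_+^{-1}Y]_H,\pi_-^{-1}Z\ra$ must still be computed and shown to cancel, which you do via \eqref{eq:dorf-connection} and \eqref{eq:proof_n_db}. One small remark: appealing only to antisymmetry in the first two slots does not by itself reduce all mixed types (e.g.\ $(+,-,+)$) to the single case you treat; you need the \emph{total} antisymmetry of $T^D$, which holds because $D$ is compatible with $\lara$. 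Once that is noted, your argument is complete.
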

\begin{proof}
The fact that the mixed components vanish is a direct consequence of the formula \eqref{eq:n_pm} for $D$. The pure components can be calculated directly:
\begin{align*}
T^D(\pi_\pm^{-1}X,\pi_\pm^{-1}Y,\pi_\pm^{-1}Z)&=\la \pi_\pm^{-1}\n^\pm_XY-\n^\pm_YX,\pi_\pm^{-1} Z\ra+\la \pi_\pm^{-1}\n^\pm_ZX,\pi_\pm^{-1} Y\ra\\
&-\la [\pi^{-1}_\pm X,\pi^{-1}_\pm Y],\pi^{-1}_\pm Z\ra.
\end{align*}
Now, combining \eqref{eq:genmetric_metric}, \eqref{eq:n_pm} and \eqref{eq:almost_D-brac}, we get
\begin{align*}
T^D(\pi_\pm^{-1}X,\pi_\pm^{-1}Y,\pi_\pm^{-1}Z)&=\pm 2\eta (\lc_XY-\lc_YX,Z)\pm 2\eta(\lc_ZX,Y)+ 3H_b(X,Y,Z)\\
&\mp 2\eta(\bl X,Y\br^{\lc},Z)- H_b(X,Y,Z)\\
&=2H_b(X,Y,Z).
\end{align*}
\end{proof}

\section{Commuting Pairs of Generalized Structures}\label{sec:commuting_pairs}
After discussing individual types of generalized structures, we now take the natural next step and explore the construction of {\it commuting pairs}, which -- as the name suggests -- involves pairs of generalized structures that commute. When we add an additional requirement that the product of such pair of generalized structures is non-degenerate, we always get a corresponding pair of (now not necessarily commuting) tangent bundle endomorphisms. The goal here is to explore the interplay between the properties of the generalized structures and their tangent bundle counterparts, particularly their integrability.

We begin by describing Generalized K\"ahler (GK) geometry, which is a well studied example of a commuting pair, and use it to illustrate the general features of the commuting pairs construction. In particular, we show how it is related to the equivalent description in terms of the tangent bundle data of bi-Hermitian geometry of Gates-Hull-Rocek \cite{Gates-hull-rocek-biherm}.

Let $\II_+$ be an almost GC structure and $\GG$ a commuting generalized metric. Then $\II_-=\GG\II_+$ is another almost $GC$ structure and any two of the triple $(\II_+,\II_-,\GG)$ commute. This then implies that both GC structures preserve the eigenbundles $C_\pm$ of $\GG$ and therefore yield endomorphisms of $C_\pm$ that square to $-\id$. It is then easy to see that $\II_+\mid_{C_\pm}=\pm \II_-\mid_{C_\pm}$ and the following
\begin{align}\label{eq:I_plusminus}
I_+=\pi_+ \II_\pm \pi^{-1}_+\quad I_-=\pm\pi_- \II_\pm \pi^{-1}_-
\end{align}
where $\pi_\pm$ are the isomorphisms associated to $\GG$ by \eqref{eq:pi_iso}, defines a pair of almost complex structures. The metric $g$ defining $\GG=\GG(g,b)$ is then Hermitian with respect to both of them:
\begin{align*}
g(I_\pm X,I_\pm Y)&=\frac{1}{2}\la \pi_+^{-1}I_\pm X,\pi_+^{-1}I_\pm Y\ra=\frac{1}{2}\la \II_\pm\pi_+^{-1} X,\II_\pm\pi_+^{-1} Y\ra=\frac{1}{2}\la \pi_+^{-1} X,\pi_+^{-1} Y\ra\\
&=g(X,Y),
\end{align*}
where we used \eqref{eq:genmetric_metric} and \eqref{eq:I_plusminus}. The data $(\II_+,\II_-,\GG)$ therefore defines an almost {\bf bi-Hermitian structure} with a B-field $b$ \footnote{The B-field becomes relevant when one formulates the integrability conditions on $\II_\pm$ in terms of the bi-Hermitian data} $(g,b,I_+,I_-)$ on the tangent bundle. We can observe that the correspondence
\begin{align*}
(\II_+,\II_-,\GG) \longleftrightarrow (g,b,I_+,I_-)
\end{align*}
preserves the {\it type} of the data; the bi-Hermitian data on $T$ corresponds to bi-Hermitian data on $\TT$, where the Hermitian metric of $\II_\pm$ is given by
\begin{align*}
h(u,v)\coloneqq \la \GG u,v\ra,\quad u,v\in \se(\TT).
\end{align*}
The signature of $g$ then also corresponds to the signature of $h$. A similar construction can be carried out starting from any pair $(\Aa,\GG)$ of commuting generalized almost structures, where $\GG$ is non-degenerate (see Definition \eqref{def:gen_structures}). This is what we will do in the following subsections; when $\Aa=\KK$ is a generalized para-complex structure, we get {\bf generalized para-K\"ahler geometry} and when $\Aa=\JJ$ is generalized product structure, we get {\bf generalized chiral geometry} and when $\Aa=\JJ$ is a generalized anti-complex structure, we get generalized anti-K\"ahler geometry.

To summarize, we label any generalized structure $\Aa$ by a pair of signs $(\ap,\bt)$, according to:
\begin{align*}
\Aa= \ap\id, \quad \la \Aa \cdot,\Aa \cdot \ra=\bt \lara,
\end{align*}
i.e. $(-,+)$ represents a GC structure, $(+,-)$ a GpC structure and $(+,+)$ a generalized Product structure:

\begin{table}[h!]
\centering
\begin{tabular}{ c | c | c | c }
 $\Aa_+$ Type & $\Aa_-$ Type & $\GG=\JJ_+\JJ_-$ Type & \\
 \hline 
 $(-,+)$ & $(-,+)$ & $(+,+)$ & Generalized K\"ahler \\  
 $(+,-)$ & $(+,-)$ & $(+,+)$ & Generalized para-K\"ahler \\
 $(+,+)$ & $(+,+)$ & $(+,+)$ & Generalized chiral\\
 $(-,-)$ & $(-,-)$ & $(+,+)$ & Generalized anti-K\"ahler
\end{tabular}
\caption{Commuting pairs of generalized structures.}
\label{tab:commutingpairs}
\end{table}
\begin{remark}\label{rem:commutingpairs}
The list of all commuting pairs (where $\Aa_\pm$ is not necessarily of the same type) is exhausted by a geometry developed by Vaisman in \cite{vaisman2015generalized} coincidentally called generalized para-K\"ahler geometry, which consists of $\Aa_\pm$ of types $(\pm,\mp)$, i.e. one GC and one GpC structure, and the metric structure $\GG$ is of type $(-,-)$, i.e. non-degenerate generalized anti-complex structure. We would like to argue, however, that this name is more appropriate for the obvious paracomplex analog of generalized K\"ahler geometry, represented in line $2$ of Table \ref{tab:commutingpairs}. For the commuting pair of Vaisman, we coin the term {\bf generalized semi-K\"ahler geometry}, because the corresponding tangent bundle geometry involves both K\"ahler and para-K\"ahler geometries.
\end{remark}

\subsection{Generalized Para-K\"ahler Structures}\label{sec:GpK}
We now discuss the Generalized para-K\"ahler structures in more detail. Because a lot of constructions are entirely analogous to their complex counterpart in generalized K\"ahler geometry, we will frequently not give excessive detail. To consult classical literature on GK geometry, see \cite{Gualtieri:2003dx,Gualtieri:2010fd}.

\begin{Def}
An (almost) \textbf{Generalized para-K\"ahler structure} (GpK) is a commuting pair $(\GG,\KK_+)$ of a split signature generalized metric $\GG=\GG(\eta,b)$ and a GpC structure $\KK_+$. If additionally both $\KK_+$ and $\KK_-\coloneqq \GG\KK_+$ are integrable w.r.t. the (twisted) Dorfman bracket, we call $(\GG,\KK_+)$ a (twisted) GpK structure.
\end{Def}
Since any two structures in the triple $(\GG,\KK_+,\KK_-)$ determine the third, we may refer to the GpK structure $(\GG,\KK_+)$ by the pair $(\KK_+,\KK_-)$, in particular when integrability -- which is tied with $\KK_{\pm}$ -- is discussed.
\begin{Ex}
Let $(\eta,K)$ be an almost para-Hermitian structure, with $\omega=\eta K$ the fundamental form. Then
\begin{align*}
\KK_+=
\begin{pmatrix}
K & 0 \\
0 & -K^*
\end{pmatrix},\ 
\KK_-=
\begin{pmatrix}
0 & \omega^{-1} \\
\omega & 0
\end{pmatrix},\ 
\GG=
\begin{pmatrix}
0 & \eta^{-1} \\
\eta & 0
\end{pmatrix},
\end{align*}
gives an almost generalized para-K\"ahler structure which is integrable iff $(\eta,K)$ is para-K\"ahler.
\end{Ex}

Let $C_\pm$ be the eigenbundles of $\GG$. As discussed in Section \ref{sec:commuting_pairs} above, $\KK_+\mid_{C_\pm}=\pm \KK_-\mid_{C_\pm}$ and we can therefore construct two para-complex structures $K_\pm$ as follows:
\begin{align}\label{eq:induced_Ks}
K_+=\pi_+ \KK_\pm \pi^{-1}_+\quad K_-=\pm\pi_- \KK_\pm \pi^{-1}_-
\end{align}
Using \eqref{eq:genmetric_metric}, it can be easily checked that $\eta(K_\pm X,K_\pm Y)=-\eta(X,Y)$ and $\eta K_\pm\coloneqq \omega_\pm$ defines two almost symplectic forms, therefore $(\eta,K_\pm)$ are two almost para-Hermitian structures. We therefore see that any (almost) generalized para-K\"ahler structure defines an (almost) bi-para-Hermitian structure $(\eta,K_\pm)$ with an extra data given by the two-form $b$. The converse is also true; given $(K_\pm,\eta, b)$ we reconstruct the isomorphisms $\pi_\pm$ and use them to define a pair of commuting structures $\KK_\pm$ using $K_\pm$:
\begin{align}\label{eq:GpK_from_biparaherm}
\KK_\pm=\pi_+^{-1}K_+\pi_+P_{C_+}\pm \pi_-^{-1}K_-\pi_-P_{C_-},
\end{align}
where $P_{C_\pm}$ are the projections onto $C_\pm$ given by $P_{C_\pm}=\frac{1}{2}(\id\pm \GG)$. In matrix form, this yields an expression similar to one well-known from GK geometry
\begin{align}\label{eq:GpK_genform}
\KK_{\pm}=\frac{1}{2}
\begin{pmatrix}
\id & 0 \\
b & \id
\end{pmatrix}
\begin{pmatrix}
K_+\pm K_- & \omega^{-1}_+\mp \omega^{-1}_- \\
\omega_+\mp \omega_- & -(K_+^*\pm K_-^*)
\end{pmatrix}
\begin{pmatrix}
\id & 0 \\
-b & \id
\end{pmatrix},
\end{align}

Because the existence of a generalized almost para-K\"ahler structure implies an existence of two almost para-Hermitian structures, we see the base manifold needs to be even-dimensional, even though individual generalized para-complex structures exist on any manifold.

\begin{proposition}
Any generalized almost para-K\"ahler manifold is of even dimension.
\end{proposition}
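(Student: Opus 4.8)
The plan is to reduce the statement to the elementary fact that a real vector space carrying a non-degenerate alternating form must be even-dimensional, and to feed into it the almost para-Hermitian data produced by the construction immediately preceding the statement. No genuinely new computation is required; the content lies entirely in invoking the correspondence $(\GG,\KK_+) \mapsto (\eta,K_\pm)$ recorded in \eqref{eq:induced_Ks}.

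First I would recall that any generalized almost para-K\"ahler structure $(\GG,\KK_+)$ induces on $M$ the almost para-Hermitian structure $(\eta,K_+)$, with $K_+ = \pi_+\KK_\pm\pi^{-1}_+ \in \End(T)$ satisfying $K_+^2=\id$ and $\eta(K_+\cdot,K_+\cdot)=-\eta(\cdot,\cdot)$, together with its fundamental form $\omega_+=\eta K_+$. It therefore suffices to prove that a manifold admitting such a structure is even-dimensional, and this is a pointwise claim about the fibre $T_xM$.

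Then I would argue at an arbitrary point $x\in M$. The cleanest route is via $\omega_+$: the compatibility $\eta(K_+\cdot,K_+\cdot)=-\eta(\cdot,\cdot)$ yields $\eta(K_+X,Y)=-\eta(X,K_+Y)$ (replace $Y$ by $K_+Y$ and use $K_+^2=\id$), so $\omega_+(X,Y)=\eta(K_+X,Y)$ is skew-symmetric; since $\eta$ is non-degenerate and $K_+$ is invertible, $\omega_+$ is a non-degenerate alternating form on $T_xM$, which forces $\dim T_xM$ to be even. Alternatively, splitting $T_xM=T_x^+\oplus T_x^-$ into the $\pm1$-eigenbundles of $K_+$, the same compatibility shows both eigenspaces are $\eta$-isotropic, so non-degeneracy of $\eta$ makes the induced pairing $T_x^+\times T_x^-\to\RR$ perfect and hence $\dim T_x^+=\dim T_x^-$, giving $\dim M=2\dim T_x^+$.

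The main obstacle here is essentially nonexistent: once the bi-para-Hermitian reduction is in hand, only a standard linear-algebra fact remains. The sole point requiring a line of checking is the skew-symmetry of $\omega_+$ (equivalently, the isotropy of the two eigenbundles of $K_+$), after which evenness of $\dim M$ is immediate.
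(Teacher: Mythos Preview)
Your proposal is correct and follows the same approach as the paper: reduce to the induced almost para-Hermitian data $(\eta,K_+)$ and invoke the standard linear-algebra fact that such a structure forces even dimension. The paper's argument is the single sentence preceding the proposition (``Because the existence of a generalized almost para-K\"ahler structure implies an existence of two almost para-Hermitian structures, we see the base manifold needs to be even-dimensional''), and your write-up simply spells out the linear-algebra step that makes this sentence true.
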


We now denote the $+1$ and $-1$ eigenbundles of $\KK_\pm$ by $\Lb_\pm$ and $\Lbt_\pm$. Because $\KK_\pm$ commute, $\KK_-$ will further split the eigenbundles of $\KK_+$ and vice versa. We will therefore denote
\begin{align*}
\begin{aligned}
\ell_+ &\coloneqq \Lb_+ \cap \Lb_-,\quad \ell_- \coloneqq \Lb_+ \cap \Lbt_-\\
\tl{\ell}_+ & \coloneqq \Lbt_+ \cap \Lbt_-,\quad \tl{\ell}_- \coloneqq \Lbt_+ \cap \Lb_-
\end{aligned}
\end{align*}
so that
\begin{align*}
\begin{aligned}
\Lb_+ &= \ell_+ \oplus \ell_-,\quad \Lbt_+ = \ellt_+ \oplus \ellt_-\\
\Lb_-&= \ell_+\oplus \ellt_-,\quad  \Lbt_-=\ellt_+\oplus \ell_-,
\end{aligned}
\end{align*}
as well as
\begin{align*}
\begin{aligned}
&C_\pm = \ell_\pm \oplus \tilde{\ell}_\pm.
\end{aligned}
\end{align*}
We get the decomposition of $\TT$ to four eigenbundles of $\KK_\pm$
\begin{align}\label{eq:decomposition}
\TT=\ell_+\oplus\ell_-\oplus\ellt_+\oplus\ellt_-.
\end{align}

\subsubsection{Bi-para-Hermitian Geometry and Integrability}
We will now discuss the relationship between the integrability of $\KK_{\pm}$ and properties of the induced tangent bundle data. We immediately see that the integrability of the GpC structures $\KK_\pm$ implies the involutivity of each of the four eigenbundles of \eqref{eq:decomposition}, since they are the intersections of involutive subbundles $\Lb_{\pm}$ and $\Lbt_{\pm}$. In fact, this is also a sufficient condition.

\begin{proposition}\label{prop:integr1}
The generalized almost para-K\"ahler structure $(\KK_1,\KK_2)$ is integrable iff all eigenbundles in the decomposition \eqref{eq:decomposition} are Courant involutive.
\end{proposition}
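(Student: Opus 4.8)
The forward implication is exactly the remark made just before the statement: integrability of $\KK_\pm$ means that the four maximally isotropic eigenbundles $\Lb_\pm,\Lbt_\pm$ are Courant involutive, and each of $\ell_+,\ell_-,\ellt_+,\ellt_-$ is the intersection of two of these. Since $[u,v]\in\se(S\cap S')$ whenever $u,v\in\se(S\cap S')$ and both $S,S'$ are involutive, each of the four pieces in \eqref{eq:decomposition} is involutive. So the plan is really about the converse, and the idea is to convert involutivity of a maximally isotropic bundle into the vanishing of a $3$-form and then exploit the antisymmetry of that form.

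First I would record the standard fact that for any maximally isotropic $L\subset\TT$ the trilinear map $\Upsilon_L(u,v,w)=\la[u,v],w\ra$, for $u,v,w\in\se(L)$, is totally antisymmetric, hence a section of $\Lambda^3 L^*$, and that $L$ is involutive if and only if $\Upsilon_L\equiv 0$. The last equivalence holds because $L=L^\perp$, so $[u,v]\in L$ is detected precisely by pairing against all of $L$. Antisymmetry in the last two arguments follows from the compatibility \eqref{eq:compatibility_courant_alg} together with $\la v,w\ra=0$ for $v,w\in\se(L)$, and antisymmetry in the first two follows from the symmetric part of the Dorfman bracket being exact, $[u,v]+[v,u]=2\,\rd\la u,v\ra$ (the inverse relation in the Courant-bracket remark), which again vanishes on isotropic sections.

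Next I would apply this to each of the four maximally isotropic bundles, using $\Lb_+=\ell_+\oplus\ell_-$ as the model. Decomposing $\Upsilon_{\Lb_+}$ according to how many arguments lie in $\ell_+$ versus $\ell_-$, the pure $\ell_+^{\,3}$ and $\ell_-^{\,3}$ components vanish because $\ell_\pm$ are involutive and isotropic. For two arguments in $\ell_+$ and one in $\ell_-$ one has $\Upsilon_{\Lb_+}(a_1,a_2,b)=\la[a_1,a_2],b\ra$ with $[a_1,a_2]\in\ell_+$ (involutivity of $\ell_+$) and $\la\ell_+,\ell_-\ra=0$ (both sit in the isotropic $\Lb_+$), hence $0$. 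The remaining type, one argument in $\ell_+$ and two in $\ell_-$, is handled by the total antisymmetry: a cyclic permutation gives $\Upsilon_{\Lb_+}(a,b_1,b_2)=\la[b_1,b_2],a\ra$, which vanishes since $[b_1,b_2]\in\ell_-$ and $\la\ell_-,\ell_+\ra=0$. Thus $\Upsilon_{\Lb_+}\equiv 0$ and $\Lb_+$ is involutive; the identical argument, with $(\ell_+,\ell_-)$ replaced in turn by $(\ellt_+,\ellt_-)$, $(\ell_+,\ellt_-)$ and $(\ellt_+,\ell_-)$, settles $\Lbt_+$, $\Lb_-$ and $\Lbt_-$, using in each case only that the two summands are involutive and mutually isotropic inside the ambient maximal isotropic. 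Involutivity of all four eigenbundles of $\KK_+$ and $\KK_-$ is precisely integrability of the GpK pair.

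The one genuinely non-formal step is the mixed cross-bracket $[\ell_+,\ell_-]$ (and its three analogues): a priori its $\ellt_\pm$-components need not vanish, and a direct use of the bracket compatibility only trades one such cross-bracket for another, so the naive computation is circular. The point that breaks this circularity is the total antisymmetry of $\Upsilon_L$, which lets me rotate the problematic component into one where the bracket is taken between two sections of a \emph{single} involutive piece, after which involutivity together with the isotropy $\la\ell_\pm,\ell_\mp\ra=0$ finishes it off. I therefore expect the antisymmetry lemma to carry the real content, with everything else amounting to bookkeeping on the decomposition \eqref{eq:decomposition}.
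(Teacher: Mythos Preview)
Your proof is correct and follows essentially the same route as the paper: both use that for a maximally isotropic $L$ the obstruction $\la[u,v],w\ra$ is totally antisymmetric, then kill the mixed components of this trilinear form on $\Lb_+=\ell_+\oplus\ell_-$ using involutivity of each piece and the mutual orthogonality $\la\ell_+,\ell_-\ra=0$. The paper carries this out directly via the compatibility identity \eqref{eq:compatibility_courant_alg} rather than naming $\Upsilon_L$, but the content is identical.
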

\begin{proof}
We will show that the integrability of both $\ell_\pm$ implies integrability of $\Lb_+=\ell_+\oplus\ell_-$ and the involutivity of $\Lb_-$ and $\Lbt_{\pm}$ needed for integrability of $\KK_{\pm}$ follows analogously. Let $x_+,z_+ \in \se(\ell_+)$ and $y_-,z_- \in \se(\ell_-)$. Because we assume $\ell_\pm$ are involutive, we only need to show that $[\ell_+,\ell_-],[\ell_-,\ell_+] \subset \Lb_+$. Using the property \eqref{eq:compatibility_courant_alg}, we have
\begin{align*}
&\langle [x_+,y_-],z_+\rangle=\pi(x_+)\langle y_-,z_+ \rangle -\langle y_-,[x_+,z_+]\rangle=0\\
&\langle [x_+,y_-],z_-\rangle=-\pi(y_-)\langle x_+,z_- \rangle +\langle x_+,[y_-,z_-]\rangle=0,
\end{align*}
because $\ell_\pm$ are mutually orthogonal and $\ell_\pm$ are involutive. This shows that $[\ell_+,\ell_-] \perp \Lb_+$, therefore $[\ell_+,\ell_-] \subset \Lb_+$ because $\Lb_+$ is maximally isotropic, proving that $\Lb_+$ is involutive.
\end{proof}

We will now aim to express the involutivity of $\ell_\pm$ and $\ellt_\pm$ in terms of the induced bi-para-Hermitian data $(\omega_{\pm},\eta,b)$. For this, we first notice that the isomorphisms $\pi_\pm$ map the four bundles $\ell_\pm$ and $\ellt_\pm$ exactly to the four eigenbundles of $K_{\pm}$ in $T$, which we will denote by $T^{(1,0)_\pm}$ and $T^{(0,1)_\pm}$ Explicitly, we have
\begin{align*}
\pi_+\ell_+&= T^{(1,0)_+},\ \pi_+\ellt_+= T^{(0,1)_+}\\
\pi_-\ell_-&= T^{(1,0)_-},\ \pi_+\ellt_-= T^{(0,1)_-},
\end{align*}
where $(p,q)_\pm$ denote the $(p,q)$ decompositions \eqref{eq_plusminus_decomp} induced by $K_{\pm}$. The above can be checked, for example by using \eqref{eq:induced_Ks}:
\begin{align*}
K_+(\pi_+\ell_+)=\pi_+\KK_{\pm}\ell_+=\pi_+\ell_+,
\end{align*}
and so $\pi_+\ell_+$ must be the $+1$ eigenbundle of $K_+$, $T^{(1,0)_+}$. Next, because of this, each of the bundles $\ell_\pm,\ellt_\pm$ can be expressed as graphs of $b\pm \eta$. For example, for $x_+\in \se(\ell_+)$ (recalling $\eta=\omega_{\pm}K_{\pm}$), we get
\begin{align}\label{eq:graph_pi}
x_+=\pi_+^{-1}X=X+(b+ \eta)X=X+(b+ \omega_+K_+)X=X+(b+ \omega_+)X,
\end{align}
for some $X\in \se(T^{(1,0)_+})$. All the bundles in \eqref{eq:decomposition} can therefore be expressed as graphs of two-forms $b\pm\omega_{\pm}$ mapping from the eigenbundles of $K_{\pm}$.
We can now formulate the integrability of $\KK_{\pm}$ using the data $(\omega_{\pm},\eta,b)$.

\begin{theorem}\label{theo:integr_GpK}
A generalized almost para-K\"ahler structure $(\KK_+,\KK_-)$, given alternatively by the induced biparahermitian data $(K_{+},K_-,\eta,b)$, is integrable if and only if the following conditions are simultaneously satisfied
\begin{enumerate}
\item $K_{\pm}$ are integrable para-Hermitian structures, i.e. their Nijenhuis tensors vanish
\item $\rd^p_+\omega_+=-\rd^p_-\omega_-=-(H+\rd b)$,
\end{enumerate}
where $\rd^p_{\pm}=(\p^{(1,0)}_\pm-\p^{(0,1)}_\pm)$ are the $\rd^p$ operators \eqref{eq:dp-operator} of $K_\pm$.
\end{theorem}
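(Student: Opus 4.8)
The plan is to reduce integrability of $(\KK_+,\KK_-)$ to the involutivity of the four rank-$n$ eigenbundles $\ell_\pm,\ellt_\pm$ appearing in the decomposition \eqref{eq:decomposition}, which is exactly what Proposition \ref{prop:integr1} permits, and then to translate each involutivity condition into tangent-bundle data. By \eqref{eq:graph_pi} and the analogous computations for the other three bundles, each eigenbundle is the graph of a global two-form over an eigendistribution of $K_+$ or $K_-$: concretely $\ell_+$ (resp.\ $\ellt_+$) is the graph of $b+\omega_+$ (resp.\ $b-\omega_+$) over $T^{(1,0)_+}$ (resp.\ $T^{(0,1)_+}$), while $\ell_-$ (resp.\ $\ellt_-$) is the graph of $b-\omega_-$ (resp.\ $b+\omega_-$) over $T^{(1,0)_-}$ (resp.\ $T^{(0,1)_-}$), the sign in each case being fixed by $\eta=\omega_\pm K_\pm$. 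Everything then rests on one computation: when is the graph of a two-form $\varepsilon$ over a distribution $D\subset T$ involutive under the $H$-twisted Dorfman bracket?

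For this key step I would take $X,Y\in\se(D)$, set $u=X+\imath_X\varepsilon$ and $v=Y+\imath_Y\varepsilon$ (where $\imath_X\varepsilon=\varepsilon(X,\cdot)$), and expand $[u,v]_H$ via \eqref{eq:twisted_dorfman}. The tangential part is $[X,Y]$, so membership of the bracket in the graph forces $[X,Y]\in\se(D)$, i.e.\ $D$ must be involutive. A direct Cartan-calculus manipulation of the cotangential part $\sigma=\Lie_X(\imath_Y\varepsilon)-\imath_Y\rd(\imath_X\varepsilon)+\imath_Y\imath_X H$ then gives, for every $Z$,
\[
\sigma(Z)=(\rd\varepsilon+H)(X,Y,Z)+\varepsilon([X,Y],Z).
\]
Since the graph demands $\sigma=\imath_{[X,Y]}\varepsilon$, involutivity is equivalent to $D$ being involutive together with $(\rd\varepsilon+H)(X,Y,\cdot)=0$ for all $X,Y\in\se(D)$.

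Applying this to the four graphs, involutivity of $T^{(1,0)_\pm}$ and $T^{(0,1)_\pm}$ is precisely the statement that both eigendistributions of each $K_\pm$ are integrable, which is condition (1). For condition (2) I would pass to the bigrading $(p,q)_\pm$ of $K_\pm$. Contracting a three-form with two vectors from $T^{(1,0)_\pm}$ (resp.\ $T^{(0,1)_\pm}$) and leaving the third slot free annihilates exactly the components of bidegree other than $(3,0)_\pm,(2,1)_\pm$ (resp.\ $(0,3)_\pm,(1,2)_\pm$); hence, with $H_b=H+\rd b$, the form conditions for $\ell_+,\ellt_+$ read $(H_b+\rd\omega_+)^{(3,0)_+}=(H_b+\rd\omega_+)^{(2,1)_+}=0$ and $(H_b-\rd\omega_+)^{(0,3)_+}=(H_b-\rd\omega_+)^{(1,2)_+}=0$. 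Using that $\omega_\pm=\eta K_\pm$ is of pure type $(1,1)_\pm$ (as $K_\pm$ is an anti-isometry of $\eta$, its eigendistributions are $\eta$-isotropic) and that integrability of $K_\pm$ makes $\rd=\p^{(1,0)}_\pm+\p^{(0,1)}_\pm$, so that $\rd\omega_+$ carries only $(2,1)_+$ and $(1,2)_+$ parts, these four scalar equations collapse to $H_b^{(3,0)_+}=H_b^{(0,3)_+}=0$, $H_b^{(2,1)_+}=-(\rd\omega_+)^{(2,1)_+}$ and $H_b^{(1,2)_+}=(\rd\omega_+)^{(1,2)_+}$, that is $H_b=-\rd^p_+\omega_+$. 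The identical computation for $K_-$ yields $H_b=\rd^p_-\omega_-$, and together these are condition (2); reading the bidegree identities in the reverse direction recovers the graph conditions from (1)--(2), so both implications are obtained simultaneously.

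The main obstacle is bookkeeping rather than conceptual: one must carry the two independent bigradings $(p,q)_+$ and $(p,q)_-$ at once and track the signs $b\pm\omega_\pm$ correctly through \eqref{eq:graph_pi}. The one genuinely load-bearing point is that integrability of $K_\pm$ (condition (1)) is exactly what kills the would-be $(3,0)_\pm$ and $(0,3)_\pm$ parts of $\rd\omega_\pm$; without it $\rd\omega_\pm$ could carry extra bidegrees and the clean packaging of the $(2,1)_\pm$ and $(1,2)_\pm$ conditions into the single equation $\rd^p_\pm\omega_\pm=\mp H_b$ would fail, which is precisely why conditions (1) and (2) must be established together.
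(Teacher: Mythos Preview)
Your proposal is correct and follows essentially the same route as the paper: reduction to involutivity of the four eigenbundles via Proposition~\ref{prop:integr1}, their description as graphs of $b\pm\omega_\pm$ over the eigendistributions of $K_\pm$, the Dorfman-bracket computation giving Frobenius integrability plus the vanishing of $\imath_Y\imath_X(\rd\varepsilon+H)$, and the bidegree bookkeeping that repackages everything as $\rd^p_\pm\omega_\pm=\mp H_b$. The only cosmetic difference is that you abstract the graph computation into a single lemma before specialising, whereas the paper carries it out directly for $\ell_+$ and $\ellt_+$ and then invokes symmetry.
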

\begin{proof}
We have seen previously that the integrability of $(\KK_+,\KK_-)$ is equivalent to the bundles $\ell_\pm$ and $\ellt_\pm$ being involutive under the Dorfman bracket. We have further found that all $\ell_\pm,\ellt_\pm$ can be written as $X+(b\pm \omega_\pm)$ for $X$ a vector in eigenbundles of $K_\pm$. We will now use to find the conditions on the involutivity of these bundles.

We start with $\ell_+$ which is given by $X+(b+\omega_+)X$, $X\in \se(T^{(1,0)_+})$. The Dorfman bracket of two such sections is
\begin{align*}
[X+(b+\omega_+)X,Y+(b+\omega_+)Y]_H&=[X,Y]+(b+\omega_+)([X,Y])\\
&+\imath_Y\imath_{X}(\rd(b+\omega_+)+H).
\end{align*}
The only tangent component is $[X,Y]$ and so it has to belong to $T^{(1,0)_+}$, meaning the bundle $\se(T^{(1,0)_+})$ is Frobenius integrable. When this is satisfies, we see that $[X,Y]+(b+\omega_+)([X,Y])$ in turn belongs to $\ell_+$, which also implies that
\begin{align}\label{eq:proof3}
\imath_Y\imath_{X}(\rd(b+\omega_+)+H)=0, \quad X,Y\in \se(T^{(1,0)_+})
\end{align}
must be satisfied. The $(1,0)_+$ component of this equation yields
\begin{align*}
(\rd b+H)^{(3,0)_+}=0,
\end{align*}
since $(\rd \omega_+)^{(3,0)}=0$ whenever $T^{(1,0)_+}$ is integrable. The $(0,1)_+$ component of \eqref{eq:proof3} then translates to
\begin{align*}
(\rd b+H)^{(2,1)_+}=-\rd\omega_+^{(2,1)_+}.
\end{align*}
Carrying out the same argument for the bundle $\ellt_+$ then tells us $T^{(0,1)_+}$ is integrable and
\begin{align*}
(\rd b+H)^{(0,3)_+}&=0,\\
(\rd b+H)^{(1,2)_+}&=+\rd\omega_+^{(1,2)_+}.
\end{align*}
Summing up, the involutivity of $\ell_+$ and $\ellt_+$ is equivalent to $K_+$ being integrable and further, since $\omega_+$ is of type $(1,1)_+$,
\begin{align*}
(\rd b+H)=-\p_+^{(1,0)}\omega_++\p^{(0,1)}_-\omega_+=-\rd^p_+\omega_+.
\end{align*}
Analogous calculation for $\ell_-$ and $\ellt_-$ then completes the proof.
\end{proof}

\subsubsection{Examples of GpK structures}\label{sec:examples}
We have already seen that the simplest GpK structure is given by a para-K\"ahler structure. Here we present few more examples.

\begin{Ex}[Para-Hyperk\"ahler geometry]\label{ex:parahyperkhlr}
	This is the para-complex version of the correspondence between the hyperk\"ahler and Generalized K\"ahler geometries \cite[Example 6.3]{Gualtieri:2003dx}. Let $(\eta,I,J,K)$ be a para-hyper-K\"ahler structure (see Appendix \ref{app:parahyperhermitian}). This means that $(I,J,K)$ is a para-hypercomplex triple, $-I^2=J^2=K^2=\id$ and $(J,\eta)$, $(K,\eta)$ are para-K\"ahler, while $(I,\eta)$ is pseudo-K\"ahler, i.e. the associated fundamental forms $\omega_{I/J/K}=\eta I/\eta J/\eta K$ are symplectic. In particular, $(J,K,\eta)$ is a bi-Hermitian structure and therefore defines a generalized para-K\"ahler structure $(\KK_\pm)$ by
	\begin{align*}
	\KK_{\pm}=\frac{1}{2}
	\begin{pmatrix}
	J\pm K & \omega^{-1}_J\mp \omega^{-1}_K \\
	\omega_J\mp \omega_K & -(J^*\pm K^*)
	\end{pmatrix},
	\end{align*}
	which, just like in the GK case, can be rewritten as $b$-field transformations by $\pm\omega_I$ of two non-degenerate GpC structures (customarily called symplectic type structures):
	\begin{align}\label{eq:GpK_parahyperkahler}
	\KK_{\pm}=
	\begin{pmatrix}
	\id & 0 \\
	\pm \omega_I & \id
	\end{pmatrix}
	\begin{pmatrix}
	0 & \frac{1}{2}(\omega^{-1}_J\mp \omega^{-1}_K) \\
	\omega_J\mp \omega_K & 0
	\end{pmatrix}
	\begin{pmatrix}
	\id & 0 \\
	\mp \omega_I & \id
	\end{pmatrix}.
	\end{align}	
\end{Ex}

\begin{Ex}[B-transformation of a para-K\"ahler structure]
	In \cite{Svoboda:2018rci} a B-transformation of a para-Hermitian structure was introduced. This is an operation on an almost para-Hermitian manifold $(\PS,\eta,K)$, which shears the $+1$ eigenbundle $T_+$ of $K$ into the direction of $T_-$ (the $-1$ eigenbundle of $K$) and amounts to adding a $(2,0)$ form to $\omega=\eta K$,
	\begin{align*}
	\omega \mapsto \omega+2b,
	\end{align*}
	producing a new almost para-Hermitian structure $(\eta,K_B=K+2B)$, where we denote $B=\eta^{-1}b$. $K_B$ can then be thought of as a finite deformation of $K$.
	
The metric $\eta$ facilitates an isomorphism $T_\pm\simeq T^*_\mp$ (we discuss this in more detail in Section \ref{sec:small_large_CA}) and the tangent bundle can consequently be seen as $T_+\oplus T_-=T_\pm\oplus T_\pm^*$. From the point of view of the bundle $T_+\oplus T_+^*$, the B-transformation of $K$ is the usual $b$-field transformation, changing the splitting $T_+\oplus T_+^*\mapsto e^b(T_+)\oplus T_+^*$. Similarly, we can see this operation as a $\beta$-field transformation of the bundle $T_-\oplus T_-^*$.
	
	One can then associate a GpK structure to the bi-para-Hermitian data $(\eta,K_B,K)$:
	\begin{align*}
	\KK_+=
	\begin{pmatrix}
	K+B & \beta \\
	b & -(K+B)^*
	\end{pmatrix},\quad
	\KK_-=
	\begin{pmatrix}
	B & \omega^{-1}+\bt \\
	\omega+b & -B^*
	\end{pmatrix},
	\end{align*}
	where $\beta=\eta^{-1}b\eta^{-1}$.
\end{Ex}

\begin{Ex}
	We end this section with examples of GpK structures with non-zero flux.
	Consider the para-quaternions 
	\[ \mathbb{H}' = \{ q = x_1 + x_2i + x_3j + x_4k : -i^2 = j^2 = k^2 = 1, k=ij, ij=-ji \}. \]
	Multiplication by $i,j,k$ induces six distinct natural structures on $\mathbb{H}'$, depending on whether one uses multiplication on the right or on the left. We denote by $I_+,J_+,K_+$ multiplication on the left by $i,j,k$, respectively, and by $I_-,J_-,K_-$ multiplication on the right by $i,j,k$, respectively. Note that both triples $I_+, J_+, K_+$ and $I_-,J_-,K_-$ are para-hypercomplex structures on $\mathbb{H}'$. Also, the para-complex structures $J_\pm,K_\pm$ are such that $J_+J_- = J_-J_+$, $J_+K_- = K_-J_+$, $K_+K_- = K_-K_+$ and $K_+J_- = J_-K_+$.
	
	Consider the quotient $Y = (\mathbb{H}' \backslash \{ x_1^2 +x_2^2 = x_3^2 + x_4^2 \}) / \sim$ where $q \sim 2q$ for all $q \in \mathbb{H}' \backslash \{ x_1^2 +x_2^2 = x_3^2 + x_4^2 \}$. Note that the structures $I_\pm,J_\pm,K_\pm$ described above descend to the quotient $Y$. If we set
	\[ |q|^2 = x_1^2 +x_2^2 - x_3^2 - x_4^2,
	\]
	then 
	\[ \eta = \frac{1}{|q|^2}( dx_1 \otimes dx_1 + dx_2 \otimes dx_2 - dx_3 \otimes dx_3 - dx_4 \otimes dx_4) \] 
	is a pseudo-Hermitian metric on $Y$ of signature $(n,n)$ such that 
	\[ \eta(I_\pm \cdot, I_\pm \cdot) = -\eta(J_\pm \cdot, J_\pm \cdot) = -\eta(K_\pm \cdot, K_\pm \cdot) = \eta(\cdot,\cdot). \] 
	In other words, $(\eta,I_\pm,J_\pm,K_\pm)$ are para-hyperHermitian structures on $Y$. Moreover, a direct computation gives
	\[ \rd^p_{J_\pm}\omega_{J_\pm} = \rd^p_{K_\pm}\omega_{K_\pm} = \pm H,\]
	where
	\[  H = \frac{2}{|q|^4}
	( x_1 dx_2 \wedge dx_3 \wedge dx_4 -  x_2 dx_1 \wedge dx_3 \wedge dx_4 +  x_3 dx_1 \wedge dx_2 \wedge dx_4
	-  x_4 dx_1 \wedge dx_2 \wedge dx_3)
	\]
	and $dH = 0$, implying that $(J_\pm,\eta)$, $(K_\pm,\eta)$, and $(J_\pm,K_\mp,\eta)$ are GpK structures on $Y$ with non-zero flux $H$. These are also examples of GpK that do not come from a para-hyperK\"ahler or even para-hyperHermtian structure (since the para-complex structures do not commute).
\end{Ex}

\subsubsection{Para-Holomorphic Poisson structures}
We now show that every GpK manifold has a Poisson bivector, which in addition is para-holomorphic with respect to both para-Hermitian structures $(K_+,K_-)$. This then gives another pair of GpC structures -- which are related by a certain B-field transformation -- constructed purely from the GpK data.

\begin{theorem}\label{theo:bi-paraH_Poisson}
Let $(\KK_+,\GG)$ be a GpK structure and $(\eta, K_+,K_-,b)$ the corresponding bi-para-hermitian data. Then
\begin{align}\label{eq:parahol_poisson_Q}
Q=\frac{1}{2}[K_+,K_-]\eta^{-1} = \dfrac{1}{2}\eta^{-1}[K_+^*, K_-^*]
\end{align}
is a Poisson bivector of type $(2,0)_\pm+(0,2)_\pm$, which is para-holomorphic with respect to both $K_\pm$.
\end{theorem}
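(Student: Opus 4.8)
The plan is to handle the three assertions---that $Q$ is a genuine bivector, that it has type $(2,0)_\pm+(0,2)_\pm$, and that it is para-holomorphic Poisson---in increasing order of difficulty, the last being the real work. First I would record the one algebraic identity that drives everything: the compatibility $\eta(K_\pm\cdot,K_\pm\cdot)=-\eta(\cdot,\cdot)$ together with $K_\pm^2=\id$ gives $\eta K_\pm=-K_\pm^*\eta$, equivalently $\eta^{-1}K_\pm^*=-K_\pm\eta^{-1}$ and $K_\pm^*=-\eta K_\pm\eta^{-1}$. Conjugating the commutator by $\eta$ then yields $[K_+^*,K_-^*]=\eta[K_+,K_-]\eta^{-1}$, so the two expressions in \eqref{eq:parahol_poisson_Q} coincide. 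Since $[K_+,K_-]^*=-[K_+^*,K_-^*]$, this equality is precisely the statement $Q^*=-Q$, i.e. $Q\in\se(\Lambda^2T)$; so proving the two formulas agree simultaneously establishes that $Q$ is a well-defined bivector.

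For the type I would verify $K_\pm Q=QK_\pm^*$ by a short manipulation: using $K_\pm^2=\id$ and $\eta^{-1}K_\pm^*=-K_\pm\eta^{-1}$, both $K_+Q$ and $QK_+^*$ collapse to $\tfrac12(K_- - K_+K_-K_+)\eta^{-1}$, and symmetrically with the roles of $K_+,K_-$ exchanged for the $K_-$ identity. A bivector $P$ satisfies $KP=PK^*$ exactly when $P$ carries $T^{*(1,0)}\to T^{(1,0)}$ and $T^{*(0,1)}\to T^{(0,1)}$, i.e. when its $(1,1)$-component vanishes; hence $K_\pm Q=QK_\pm^*$ says precisely that $Q$ is of type $(2,0)_\pm+(0,2)_\pm$ with respect to both para-complex structures, as claimed.

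The main obstacle is the Poisson and para-holomorphicity statement. Here I would first exploit two Poisson structures already at hand: by \eqref{eq:GpK_genform} the bivector (upper-right) blocks of the integrable GpC structures $\KK_\pm$ are $\Pi_+=\tfrac12(K_+-K_-)\eta^{-1}$ and $\Pi_-=\tfrac12(K_++K_-)\eta^{-1}$, each Poisson by Lemma \ref{lem:GpC_poisson} (the $b$-transform in \eqref{eq:GpK_genform} leaves the upper-right block unchanged). A direct computation gives $K_+\Pi_-+\Pi_-K_+^*=Q$, so $Q=2K_+\,\Pi_-^{(2,0)_++(0,2)_+}$: up to the factor $K_+$, the bivector $Q$ is twice the $(2,0)+(0,2)$-part of the Poisson tensor $\Pi_-$. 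Since the $(2,0)+(0,2)$-part of a Poisson bivector is not automatically Poisson, this is not yet enough, and I would promote $Q$ to the generalized endomorphism $\KK_Q^{\pm}=\left(\begin{smallmatrix}K_\pm & Q\\ 0 & -K_\pm^*\end{smallmatrix}\right)$. The type condition of the previous paragraph is exactly what makes $\KK_Q^\pm$ an almost GpC structure (it satisfies the relations in \eqref{eq:GpC_generalform}), and integrability of such an upper-triangular structure is equivalent, by the para-complex analogue of Gualtieri's holomorphic-Poisson correspondence, to $K_\pm$ being integrable together with $Q$ being a para-holomorphic Poisson bivector: the covector--covector part of the generalized Nijenhuis tensor recovers $[Q,Q]=0$ via Lemma \ref{lem:GpC_poisson}, the mixed part encodes para-holomorphicity of $Q$, and the vector--vector part gives integrability of $K_\pm$.

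The crux is therefore to prove that $\KK_Q^\pm$ is integrable starting from the GpK hypotheses. I would establish this by exhibiting $\KK_Q^\pm$ as a $B$-field transform of a structure built directly from the already-integrable $\KK_\pm$ --- which is also what produces the ``certain $B$-field transformation'' relating the two structures referred to in the statement --- so that involutivity of eigenbundles transfers. This is the delicate step: it is where the full content of Theorem \ref{theo:integr_GpK} must enter, namely both the vanishing of the Nijenhuis tensors of $K_\pm$ and the flux conditions $\rd^p_+\omega_+=-\rd^p_-\omega_-=-(H+\rd b)$, and where the bookkeeping of the two-form $b$ and the $H$-flux is most likely to cause trouble. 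Once the correct $B$-field is pinned down and integrability of $\KK_\pm$ is seen to descend to $\KK_Q^\pm$, reading off its integrability content yields simultaneously $[Q,Q]=0$ and the para-holomorphicity of $Q$ with respect to both $K_+$ and $K_-$, completing the proof.
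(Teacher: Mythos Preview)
Your handling of the bivector property and the type condition is correct and essentially matches the paper. The gap is in the Poisson and para-holomorphicity step.

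You propose to obtain both by proving integrability of $\KK_Q^\pm=\left(\begin{smallmatrix}K_\pm & Q\\ 0 & -K_\pm^*\end{smallmatrix}\right)$, and to achieve the latter by exhibiting $\KK_Q^\pm$ as a $B$-field transform of a structure built from the integrable $\KK_\pm$. This cannot work as stated: a $B$-field transform $e^B$ leaves the upper-right (bivector) block of a generalized endomorphism unchanged, so $\KK_Q^\pm$, whose bivector block is $Q$, is not the $B$-transform of any structure with bivector block $\tfrac12(\omega_+^{-1}\mp\omega_-^{-1})$. The two-form $F=2\eta(K_++K_-)^{-1}$ you allude to appears in the paper only \emph{after} the theorem, in a subsequent proposition; it relates $\KK_Q^+$ to $\KK_Q^-$ rather than to $\KK_\pm$, requires invertibility of $(K_++K_-)$ (not assumed here), and is not closed in general, so even that transfer of integrability would need separate justification. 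Your reduction to integrability of $\KK_Q^\pm$ therefore remains open, and with it both the Poisson and the para-holomorphicity claims.

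The paper takes the opposite logical direction. Para-holomorphicity is proved first, by a direct local computation: in adapted coordinates for $K_+$ one checks $Q^{jk}=\omega_-^{-1}(dx^j,dx^k)$, and then $\tilde\partial^i Q^{jk}=0$ follows from an explicit Levi--Civita calculation using $\nabla^\pm K_\pm=0$ (equivalently the flux condition of Theorem~\ref{theo:integr_GpK}). The Poisson property then comes from Hitchin's trick: Lemma~\ref{lem:GpC_poisson} applied to $\KK_\pm$ gives $[\omega_+^{-1}\pm\omega_-^{-1},\omega_+^{-1}\pm\omega_-^{-1}]=0$, hence $[\omega_+^{-1},\omega_+^{-1}]+[\omega_-^{-1},\omega_-^{-1}]=0$; since $\omega_+^{-1}$ is of type $(1,1)_+$ while the $(2,0)_+$-part of $Q$ coincides with that of $\omega_-^{-1}$, the $(3,0)_+$-component of this identity forces $[Q_+,Q_+]=0$, and the already-established para-holomorphicity kills the cross term $[Q_+,Q_-]$. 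Integrability of $\KK_Q^\pm$ is then a \emph{consequence} of the theorem, not the route to it.
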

\begin{proof}
$Q$ being of type $(2,0)+(0,2)$ with respect to both $K_\pm$ is equivalent to $Q(K_\pm^*,K_\pm^*)=Q(\cdot,\cdot)$, or $K_\pm Q K_\pm^*=Q$. This can be simply checked by using $[K_+,K_-]=(K_++K_-)(K_+-K_-)$ and para-Hermitian compatibility conditions, $K_\pm\eta^{-1}=-\eta^{-1} K^*_\pm$.

For $Q$ to be para-Holomorphic, the local coefficient functions $Q^{ij}$ of the $(2,0)$ components have to be locally independent of the $\xt$ coordinates, i.e. $\pt^i Q^{jk}=0$ and similarly the $(0,2)$ components have to satisfy $\p_iQ_{jk}=0$ (see Example \ref{ex:parahol_wedgepowers} and Equation \eqref{eq:parahol_section}). We now check the condition on the $(2,0)$ component for $K_+$, i.e. $\pt^{i}Q^{jk}=0$, where we are simplifying the notation by omitting the $+$ subscript labelling $K_+$ and in further text we also denote $K_+=K$.

Let $(x^i,\xt_i)$ be the local adapted coordinates of $K$. The coordinate functions of the $(2,0)$ component of $Q$ are given by
\begin{align}\label{eq:proof_holpoisson}
\begin{aligned}
Q^{jk}=Q(dx^j,dx^k)&=\frac{1}{2}\la (KK_--K_-K) \eta^{-1}(dx^j),dx^k\ra\\
&=\frac{1}{2}(\la K_-\eta^{-1}(dx^j),K^*dx^k\ra+\la K_-\eta^{-1}(dx^j),dx^k\ra)\\
&=\omega_-^{-1}(dx^j,dx^k),
\end{aligned}
\end{align}
because $\eta^{-1}(dx^j)$ is in $\Lt$.

We now use the following formula for the exterior derivative of a one-form:
\begin{align}\label{eq:proof4}
(\rd \ap) (X,Y)=X\ap(Y)-Y\ap(X)-\ap([X,Y]).
\end{align}
Choosing in \eqref{eq:proof4} $\ap=dx^j$, $X=\pt^i$ and $Y=\omega_-^{-1}(dx^k)$, we get (because $\rd dx^j=0$ and $dx^j(\pt^i)=0$)
\begin{align*}
\pt^i\omega_-^{-1}(dx^j,dx^k)&=\la[\pt^i,\omega_-^{-1}(dx^k)],dx^j\ra=\la \lc_{\pt^i}\omega_-^{-1}(dx^k)-\lc_{\omega_-^{-1}(dx^k)}\pt^i,dx^j\ra\\
&=\la (\lc_{\pt^i}\omega_-^{-1})dx^k+\omega_-^{-1}(\lc_{\pt^i}dx^k)-\lc_{\omega_-^{-1}(dx^k)}\pt^i,dx^j\ra.
\end{align*}
The formula \eqref{eq:proof4} in the following form
\begin{align*}
(\rd \ap) (X,Y)=(\lc_X\ap)(Y)-(\lc_Y\ap)(X),
\end{align*}
with $\ap=dx^k$, $X=\pt^j$ and $Y=\omega_-^{-1}(dx^j)$, can be used to derive
\begin{align*}
\la \lc_{\pt^i}dx^k,\omega_-^{-1}(dx^j)\ra=\la \lc_{\omega_-^{-1}(dx^j)}dx^k,\pt^i\ra=-\la  \lc_{\omega_-^{-1}(dx^j)}\pt^i,dx^k\ra,
\end{align*}
so that we have
\begin{align}\label{eq:proof5}
\begin{aligned}
\pt^i\omega_-^{-1}(dx^j,dx^k)=&\la (\lc_{\pt^i}\omega_-^{-1})dx^k,dx^j\ra\\
&+\la  \lc_{\omega_-^{-1}(dx^j)}\pt^i,dx^k\ra-\la\lc_{\omega_-^{-1}(dx^k)}\pt^i,dx^j\ra.
\end{aligned}
\end{align}
We further use $\n^\pm K_\pm=0$, where $\n^\pm=\lc\pm\frac{1}{2}\eta^{-1}H$, $H=h+\rd b$, which implies
\begin{align*}
(\lc_X\omega_\pm)(Y,Z)=\mp\frac{1}{2}(H(X,K_\pm Y,Z)+H(X,Y,K_\pm Z)).
\end{align*}
Consequently, this yields
\begin{align*}
\la (\lc_{\pt^i}\omega_-^{-1})dx^k,dx^j\ra=\frac{1}{2}(H(\pt^i,\omega_-^{-1}dx^k,\eta^{-1}dx^j)+H(\pt^i,\eta^{-1}dx^k,\omega_-^{-1}dx^j)),
\end{align*}
as well as
\begin{align*}
\la  \lc_{\omega_-^{-1}(dx^j)}\pt^i,dx^k\ra &
=-\frac{1}{2} \left(\lc_{\omega_-^{-1}(dx^j)}\omega\right)(\pt^i,\eta^{-1}dx^k)\\
&=-\frac{1}{2}H(\omega_-^{-1}dx^j,\pt^i,\eta^{-1}dx^k),
\end{align*}
and
\begin{align*}
-\la\lc_{\omega_-^{-1}(dx^k)}\pt^i,dx^j\ra=\frac{1}{2}H(\omega_-^{-1}dx^k,\pt^i,\eta^{-1}dx^j).
\end{align*}
Summing these terms as in \eqref{eq:proof5}, we conclude that $\pt^i\omega_-^{-1}(dx^j,dx^k)=0$.

Let us now show that $Q$ is also a Poisson structure, i.e. $[Q,Q]=0$, where $\brac$ denotes the natural extension of a Lie brackets to polyvector fields, the Schouten bracket. To prove this, we will use an argument presented in \cite{Hitchin:2005cv} in the proof of the analogous statement in GK geometry. First, we observe that $[Q_+,Q_-]=[Q_-,Q_+]=0$, $Q_\pm$ denoting the $(2,0)$ and $(0,2)$ components, because $Q$ is para-holomorphic. Therefore, $[Q,Q]=[Q_+,Q_+]+[Q_-,Q_-]$ and is of type $(3,0)+(0,3)$. Next, combining Lemma \ref{lem:GpC_poisson} with Equation \eqref{eq:GpK_genform} tells us that $[\omega_+^{-1}\pm\omega_-^{-1},\omega_+^{-1}\pm\omega_-^{-1}]=0$, which in particular means that
\begin{align}\label{eq:proof_holpoisson2}
 [\omega_+^{-1},\omega_+^{-1}]+[\omega_-^{-1},\omega_-^{-1}]=0.
\end{align}
With respect to $K_+$, $\omega_+^{-1}$ is type $(1,1)$ and therefore  $[\omega_+^{-1},\omega_+^{-1}]$ has no $(3,0)$ component. On the other hand, calculation similar to \eqref{eq:proof_holpoisson} shows that $Q=(\omega_-^{-1})^{(2,0)}-(\omega_+^{-1})^{(0,2)}=Q_++Q_-$. Therefore, the only $(3,0)$ component of \eqref{eq:proof_holpoisson2} is given by $[Q_+,Q_+]$, which means $[Q_+,Q_+]=0$ and similarly for $Q_-$. $[Q_+,Q_-]$ also vanishes due to $Q$ being para-holomorphic, i.e. $\p_i Q_-=\pt^i Q_+=0$.
\end{proof}

Because $Q$ is para-holomorphic with respect to both $K_\pm$, we obtain another pair of GpC structures given by
\begin{align}\label{eq:parahol_poisson_GpC}
\mathcal{Q}_\pm=
\begin{pmatrix}
K_\pm & Q \\
0 & -K_\pm^*
\end{pmatrix}.
\end{align}
Moreover, we find that when $(K_++K_-)$ is invertible, these GpC structures are related by a 
\begin{proposition}
Let $(\KK_+,\GG)$ be a GpK structure, $(\eta, K_+,K_-,b)$ the corresponding bi-para-hermitian data and $Q$ the para-holomorphic poisson structure given by \eqref{eq:parahol_poisson_Q}. Then \eqref{eq:parahol_poisson_GpC} defines pair of GpC structures and when $(K_++K_-)$ is in addition invertible, $\mathcal{Q}_\pm$ are related by a $b$-field transformation: $e^F(\mathcal{Q}_+)=\mathcal{Q}_-$, where $F\coloneqq 2\eta(K_++K_-)^{-1}$.
\end{proposition}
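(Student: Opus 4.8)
The plan is a direct conjugation computation that reduces the claimed identity $e^F(\mathcal{Q}_+)=\mathcal{Q}_-$ to a handful of algebraic relations among $K_+$, $K_-$ and $\eta$. Before anything else I would confirm that $F:=2\eta(K_++K_-)^{-1}$ is genuinely a two-form — that is, skew as a map $T\to T^*$ — so that $e^F$ is a bona fide $b$-field transformation in the sense of \eqref{eq:b-field}. This is where the invertibility hypothesis on $K_++K_-$ enters (it is exactly what makes $F$ well-defined). The skewness follows from the para-Hermitian anti-isometry condition, which in operator form reads $K_\pm^*=-\eta K_\pm\eta^{-1}$ (equivalently $\eta K_\pm=-K_\pm^*\eta$, the skewness of $\omega_\pm=\eta K_\pm$): from $(K_++K_-)^*=-\eta(K_++K_-)\eta^{-1}$ one computes $F^*=-F$ in a couple of lines.

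Next I would expand the conjugation in block form. Writing $e^F=\begin{pmatrix}\id & 0\\ F & \id\end{pmatrix}$ and multiplying out, one obtains
\[
e^F\mathcal{Q}_+ e^{-F}=\begin{pmatrix} K_+-QF & Q \\ FK_+-FQF+K_+^*F & FQ-K_+^* \end{pmatrix}.
\]
Matching this against $\mathcal{Q}_-=\begin{pmatrix} K_- & Q \\ 0 & -K_-^* \end{pmatrix}$ leaves (the upper-right block being automatic) exactly three conditions: $QF=K_+-K_-$, $FQ=K_+^*-K_-^*$, and $FK_+-FQF+K_+^*F=0$.

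For the first two I would substitute $Q=\tfrac12[K_+,K_-]\eta^{-1}$ and cancel $\eta^{-1}\eta$, reducing them to the two commutator factorizations $(K_+-K_-)(K_++K_-)=[K_+,K_-]$ and $(K_++K_-)(K_+-K_-)=-[K_+,K_-]$, each of which is immediate from $K_\pm^2=\id$; the dual identity $FQ=K_+^*-K_-^*$ makes use of the second expression for $Q$ in \eqref{eq:parahol_poisson_Q}, namely $Q=\tfrac12\eta^{-1}[K_+^*,K_-^*]$. The lower-left block is the one that looks most delicate and is the step I expect to be the main obstacle, but I anticipate it collapsing once the first two relations are in hand: substituting $FQ=K_+^*-K_-^*$ into $FQF$ reduces the block to $FK_++K_-^*F$, and then $F=2\eta(K_++K_-)^{-1}$ together with $K_-^*=-\eta K_-\eta^{-1}$ turns its vanishing into $(K_++K_-)^{-1}K_+=K_-(K_++K_-)^{-1}$, i.e. $K_+(K_++K_-)=(K_++K_-)K_-$, which is again nothing but $\id+K_+K_-=K_+K_-+\id$.

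In short, every step is routine block algebra driven by $K_\pm^2=\id$ and $K_\pm^*=-\eta K_\pm\eta^{-1}$; I would state the two factorization identities once and reuse them, so that the entire argument is a few lines. The only genuine caveat to flag is that the whole construction — and hence the statement — requires $K_++K_-$ to be invertible, precisely so that $F$ exists.
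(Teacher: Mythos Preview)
Your approach to the $b$-field identity $e^F(\mathcal{Q}_+)=\mathcal{Q}_-$ is correct and is essentially the same as the paper's: both reduce the claim to the block conditions $FQ=K_+^*-K_-^*$ and $FK_++K_-^*F=0$ (your third condition $QF=K_+-K_-$ is the transpose of the first, which is why the paper only lists two), and both verify these from $K_\pm^2=\id$ together with the para-Hermitian relation $K_\pm^*=-\eta K_\pm\eta^{-1}$. Your write-up is simply more explicit where the paper says ``easy to verify''.

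The one genuine omission is the first half of the statement: you do not address why $\mathcal{Q}_\pm$ are GpC structures at all. This is not automatic---the algebraic condition $\mathcal{Q}_\pm^2=\id$ requires $K_\pm Q=QK_\pm^*$, which is precisely the type $(2,0)+(0,2)$ property of $Q$ established in Theorem~\ref{theo:bi-paraH_Poisson}, and integrability of $\mathcal{Q}_\pm$ (vanishing of the generalized Nijenhuis tensor) uses that $K_\pm$ are integrable and $Q$ is para-holomorphic Poisson, again from that theorem. The paper dispatches this in one sentence by pointing back to those facts; you should at least do the same.
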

\begin{proof}
The fact that $\mathcal{Q}_\pm$ are GpC structures follows from the compatibility between $Q$ and $K_\pm$ and the fact that $K_\pm$ are integrable and $Q$ para-holomorphic. The fact that $e^F(\mathcal{Q}_+)=\mathcal{Q}_-$ is equivalent to the equations
\begin{align*}
K_+^*-K_-^*&=FQ\\
FK_++K_-^*F&=0,
\end{align*}
which are easy to verify. 
\end{proof}

\subsubsection{The Large and Small Courant Algebroids and the D-bracket}\label{sec:small_large_CA}
Recall that when $(\PS,\eta,K)$ is an $2n$-dimensional integrable para-Hermitian manifold, the eigenbundles $T^{(1,0)}\coloneqq T_+$ and $T^{(0,1)}\coloneqq T_-$ of $K$ corresponding to eigenvalues $\pm 1$ integrate to $n$-dimensional foliations $\FF_{(\pm)}$. We now consider the generalized tangent bundles $(\TT)\FF_\pm$, for which we will use the name coined in physics \cite{Chatzistavrakidis:2018ztm} {\bf small Courant algebroids} of $\PS$. The Courant algebroid on $(\TT)\PS$ will be called a {\bf large Courant algebroid} if the distinction needs to be emphasised.

There are vector bundle isomorphisms \cite{freidel2017generalised,Svoboda:2018rci}
\begin{align*}
\begin{aligned}
\rho_\pbmb:\ T\PS &= T_+\oplus T_-\rightarrow T_\pm\oplus T_\pm^*\\
X &= x_++x_- \mapsto x_\pm+\eta(x_\mp),\quad x_\pm \in \se(T_\pm).
\end{aligned}
\end{align*}
By definition, $T\FF_{(\pm)}=T_\pm$ and so $T^*\FF_{(\pm)}=T_\pm^*\overset{\rho_\pbmb}{\simeq}T_\mp$, which means that $\rho_\pbmb$ map the small courant algebroids $(\TT)\FF_{(\pm)}$ to $T\PS=T_+\oplus T_-$. This gives $T\PS$ a Courant algebroid structures coming from $(\TT)\FF_{(\pm)}$:
\begin{align*}
(\TT)\FF_\pbmb&\xrightarrow{\rho_\pbmb}T\PS \\
(\lara,\brac_{\pbmb},\pi_T)&\longmapsto (\eta,\bracd_\pbmb,P_\pm).
\end{align*}
In the above, $\brac_{\pbmb}$ denotes the standard Dorfman brackets on $(\TT)\FF_\pbmb$, $\lara$ the standard pairing and $\pi_T$ the projections onto the tangent factor $T\FF_\pbmb$. The brackets $\bracd_\pbmb$ are defined simply by
\begin{align*}
\rho(\bl X,Y\br_\pbmb)=[\rho X,\rho Y]_{\pbmb}.
\end{align*}

\begin{remark}
When $(\eta,K_\pm)$ is the bi-para-Hermitian data corresponding to a GpK structure, the corresponding small courant algebroids associated to $K_\pm$ can be seen as the real analog of the holomorphic Courant algebroids that appear as a holomorphic reduction of generalized K\"ahler geometry \cite{Gualtieri:2010fd}.
\end{remark}

\paragraph{The D-bracket From GpK Geometry.} In \cite{Svoboda:2018rci}, it is shown that the sum of the brackets $\bracd_\pbmb$ is a bracket called the D-bracket $\bracd$ which appears in physics literature, i.e.
\[\bl X,Y \br := \bl X, Y \br_{(+)} + \bl X,Y \br_{(-)}\]
It is also shown that this bracket can be expressed in terms of $\lc$, the Levi-Civita connection of $\eta$, and $\omega$, the fundamental form of the para-Hermitian structure:
\begin{align}\label{eq:D-bracket}
\begin{aligned}
&\eta(\bl X,Y\br,Z)=\eta(\lc_XY-\lc_YX,Z)+\eta(\lc_ZX,Y)\\
&-\frac{1}{2}[\rd\omega^{(3,0)}(X,Y,Z)+\rd\omega^{(2,1)}(X,Y,Z)
-\rd\omega^{(1,2)}(X,Y,Z)-\rd\omega^{(0,3)}(X,Y,Z)].
\end{aligned}
\end{align}

Let now $(\GG,\KK)$ be a GpK structure on $\PS$, i.e. $(\GG,\KK)$ are endomorphisms of the large Courant algebroid $(\TT)\PS$ with a flux $H$, and let $(\eta,K_\pm)$ be the corresponding bi-Hermitian data. According to the above discussion, there is a D-bracket on $T\PS$ associated to each $K_\pm$. We now show that in this case the D-brackets match the almost D-brackets (see Proposition \ref{prop:almost_D-brac}) associated to $\GG$.
\begin{theorem}\label{theo:D-bracket}
Let $(\GG,\KK)$ be a GpK structure on $\PS$ with a flux $H$ and $(\eta,K_\pm)$ the corresponding bi-Hermitian data. Then the D-brackets $\bracd_\pm$\footnote{Here the labels $\pm$ correspond to the structures $K_\pm$, not their $\pm 1$ eigenbundles.} associated to the para-Hermitian structures $(\eta,K_\pm)$ are given by
\begin{align}\label{eq:D-bracket_pi}
\eta(\bl X,Y\br_\pm,Z)=\pm \frac{1}{2}\la [\pi^{-1}_\pm X,\pi^{-1}_\pm Y],\pi^{-1}_\pm Z\ra,
 \end{align}
where $\brac$ is the Dorfman bracket on $(\TT)\PS$ and $\pi_\pm$ the projections \eqref{eq:pi_iso} associated to $\GG$.
\end{theorem}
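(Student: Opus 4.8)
The strategy is to reduce both sides of \eqref{eq:D-bracket_pi} to the common expression $\eta(\bl X,Y\br^{\lc},Z)\pm\tfrac12 H_b(X,Y,Z)$, with $H_b=H+\rd b$. The right-hand side of \eqref{eq:D-bracket_pi} is handled immediately, since it is literally the pure component of the Dorfman bracket already computed in Proposition \ref{prop:almost_D-brac}:
\begin{equation*}
\pm\frac12\la[\pi^{-1}_\pm X,\pi^{-1}_\pm Y],\pi^{-1}_\pm Z\ra=\eta(\bl X,Y\br^{\lc},Z)\pm\frac12 H_b(X,Y,Z).
\end{equation*}
It therefore remains only to show that the D-bracket $\bl X,Y\br_\pm$ of the para-Hermitian structure $(\eta,K_\pm)$ equals the same quantity.

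To do this I would apply the intrinsic formula \eqref{eq:D-bracket} to each structure $(\eta,K_\pm)$ separately, replacing $\omega$ by $\omega_\pm$ and taking the type decomposition with respect to $K_\pm$. Since the Levi-Civita terms of \eqref{eq:D-bracket} assemble into $\eta(\bl X,Y\br^{\lc},Z)$ by the definition in Proposition \ref{prop:almost_D-brac}, this gives
\begin{equation*}
\eta(\bl X,Y\br_\pm,Z)=\eta(\bl X,Y\br^{\lc},Z)-\frac12\big[\rd\omega_\pm^{(3,0)_\pm}+\rd\omega_\pm^{(2,1)_\pm}-\rd\omega_\pm^{(1,2)_\pm}-\rd\omega_\pm^{(0,3)_\pm}\big](X,Y,Z).
\end{equation*}
Because $(\GG,\KK)$ is a GpK structure, Theorem \ref{theo:integr_GpK}(1) guarantees that $K_\pm$ are integrable, so $\rd$ shifts bidegree only by $(1,0)_\pm$ and $(0,1)_\pm$; as $\omega_\pm$ is of type $(1,1)_\pm$ (because $K_\pm$ are anti-isometries of $\eta$), the components $\rd\omega_\pm^{(3,0)_\pm}$ and $\rd\omega_\pm^{(0,3)_\pm}$ vanish, and the bracketed expression collapses to $\rd\omega_\pm^{(2,1)_\pm}-\rd\omega_\pm^{(1,2)_\pm}=\rd^p_\pm\omega_\pm$.

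Finally, the integrability condition Theorem \ref{theo:integr_GpK}(2) reads $\rd^p_+\omega_+=-H_b$ and $\rd^p_-\omega_-=+H_b$, i.e. $\rd^p_\pm\omega_\pm=\mp H_b$, so that $-\tfrac12\rd^p_\pm\omega_\pm=\pm\tfrac12 H_b$. Substituting yields
\begin{equation*}
\eta(\bl X,Y\br_\pm,Z)=\eta(\bl X,Y\br^{\lc},Z)\pm\frac12 H_b(X,Y,Z),
\end{equation*}
which coincides with the right-hand side established above, proving \eqref{eq:D-bracket_pi}. The only delicate points are the sign bookkeeping in passing between the sign-alternating type decomposition in \eqref{eq:D-bracket} and the $\rd^p_\pm$ convention, and confirming that integrability genuinely removes the $(3,0)_\pm$ and $(0,3)_\pm$ pieces; once these are secured the identification is purely formal.
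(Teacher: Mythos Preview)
Your proof is correct and follows essentially the same approach as the paper: both sides are reduced to $\eta(\bl X,Y\br^{\lc},Z)\pm\tfrac12 H_b(X,Y,Z)$, the right-hand side via Proposition \ref{prop:almost_D-brac}, and the left-hand side via formula \eqref{eq:D-bracket} together with the integrability conditions of Theorem \ref{theo:integr_GpK} to eliminate the $(3,0)_\pm$ and $(0,3)_\pm$ components and identify the remaining terms with $\rd^p_\pm\omega_\pm=\mp H_b$.
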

\begin{proof}
From Proposition \ref{prop:almost_D-brac} it follows that 
\begin{align*}
\pm \frac{1}{2}\la [\pi^{-1}_\pm X,\pi^{-1}_\pm Y],\pi^{-1}_\pm Z\ra&=\eta(\lc_XY-\lc_YX,Z)+\eta(\lc_ZX,Y)\\
&\pm \frac{1}{2}H_b(X,Y,Z).
\end{align*}
It remains to relate this to the expressions for the $D$-bracket \eqref{eq:D-bracket} associated to $K_\pm$. Theorem \ref{theo:integr_GpK} tells us that $K_\pm$ are necessarily integrable and therefore the $(3,0)$ and $(0,3)$ components of $\rd \omega_\pm$ in \eqref{eq:D-bracket} vanish. Because $\rd^p\omega=\rd\omega^{(2,1)}-\rd\omega^{(1,2)}$, the $(2,1)$ and $(1,2)$ components then get matched (recalling again Theorem \ref{theo:integr_GpK}) by equation $\rd^p_\pm \omega_\pm=\mp (H+\rd b)$. This completes the proof.
\end{proof}

In \cite{Svoboda:2018rci,Szabo-paraherm} the DFT fluxes are understood as a {\it relative phenomenon} between two para-Hermitian structures $K$ and $K'$. More precisely, let $\bracd$ and $\bracd'$ denote the D-brackets associated to $K$ and $K'$, respectively. Then the DFT flux ${\cal F}$, which is a $3$-form on the para-Hermitian manifold, is given by:
\begin{align*}
{\cal F}(X,Y,Z)=\eta(\bl X,Y\br-\bl X,Y\br',Z).
\end{align*}
In the context GpK geometry, we also acquire two para-Hermitian structures $K_\pm$. It then follows from the above calculations that the relative DFT flux for this pair is then, remarkably, the $H_b$-flux of the underlying large courant algebroid:
\begin{align*}
{\cal F}(X,Y,Z)=H+\rd b.
\end{align*}

We conclude with the following observation about generalized structures on the small Courant algebroids.
\begin{remark}
Let $(\PS,\eta,K)$ be a para-Hermitian manifold and again denote its small Courant algebroids $(\TT)\FF_\pbmb$. Because we have the isomorphisms $\rho_\pm$ mapping between the tangent bundle $T\PS$ and $(\TT)\FF_\pbmb$, we can understood any tangent bundle endomorphism $A\in \End(T\PS)$ also as an endomorphism of $(\TT)\FF_\pbmb$ and in particular when $A^2=\pm \id$ and $\eta(A\cdot,A\cdot)=\pm\eta$, $A$ induces generalized almost structures on the small Courant algebroids $(\TT)\FF_\pbmb$ defined by
\begin{align*}
\Aa_\pbmb \rho_\pbmb=\rho_\pbmb A.
\end{align*}
Since $T\FF_\pb\simeq T_+\simeq T^*\FF_\mb$\footnote{Here the first equivalence is for vector bundles over $\FF_\pb$ and the second for vector bundles over $\FF_\mb$.} and $T^*\FF_\pb\simeq T_-\simeq T\FF_\mb$, the para-Hermitian structure $K$ itself induces GpC structures $K_\pbmb$, which are the trivial ones (Example \ref{ex:GpC_trivial}):
\begin{align}\label{eq:small_GpC}
K_\pb=
\begin{pmatrix}
\id & 0 \\
0 & -\id
\end{pmatrix},\quad
K_\mb=
\begin{pmatrix}
-\id & 0 \\
0 & \id
\end{pmatrix}.
\end{align}
Analogously, any (almost) para-Hermitian, pseudo-Hermitian, chiral and anti-Hermitian structure on a para-Hermitian manifold will therefore induce two generalized almost para-complex, complex, product and anti-complex structures, respectively. In particular, because any commuting pair $(\Aa_\pm)$ of generalized almost structures on the large Courant algebroid over $\PS$ yields a pair of tangent bundle endomorphisms $A_\pm$, it will induce pairs of generalized almost structures on the small Courant algebroids given the corresponding generalized metric $\GG=\Aa_+\Aa_-$ is of the form $\GG(\eta,b)$.
\end{remark}

\subsection{Generalized Chiral Structures}\label{sec:gen_chiral}
In this section, we explore more in-depth the commuting pair $(\GG,\JJ)$ giving the generalized chiral structure.

\begin{Def}
A \textbf{generalized chiral structure} (GCh) is a commuting pair $(\GG,\JJ_+)$ of a generalized metric $\GG=\GG(\eta,b)$ and a GP structure $\JJ_+$.
\end{Def}
Note that for the commuting pair $(\GG,\JJ_+)$, $\JJ_-\coloneqq \GG\JJ_+$ is another GP structure. All the generalized almost structures defining a generalized chiral structure are thus non-isotropic, and so there is no notion of integrability for such structures in terms of the Courant bracket as in GK/GpK geometry. We nonetheless introduce a related notion of {\it integrability} for these structures in Section \ref{sec:bismut_integr}. 

The canonical example of a generalized chiral structure is given by usual chiral geometry (see Section \ref{sec:chiral/antiH}):
\begin{Ex}[Chiral geometry]\label{ex:GCh_canon}
Let $(J,\eta)$ be an almost chiral structure. The contraction $\HH:=\eta J$ is then a pseudo-Riemannian metric and $(J,\HH)$ is also an almost chiral structure. Define
\begin{align*}
\GG(\eta)=
\begin{pmatrix}
0 & \eta^{-1} \\
\eta & 0
\end{pmatrix},\quad
\GG(\HH)=
\begin{pmatrix}
0 & \HH^{-1} \\
\HH & 0	
\end{pmatrix},\quad
\JJ_+=
\begin{pmatrix}
J & 0 \\
0 & J^*
\end{pmatrix}.
\end{align*}
Then, both $(\GG(\eta),\JJ_+)$ and $(\GG(\HH),\JJ_+)$ define GCh structures such that $\JJ_-=\GG(\eta)\JJ_+=\GG(\HH)$.
\end{Ex}

Let now $(\GG=\GG(\HH,b),\JJ_+)$ be a generalized chiral structure and denote the eigenbundles of $\GG$ by $C_\pm$. All the facts about the tangent bundle structures corresponding to commuting pairs outlined at the begining of Section \ref{sec:commuting_pairs} hold true. Denoting the isomorphisms \eqref{eq:pi_iso} associated to $\GG$ by $\pi_\pm$, we obtain a pair of product structures $J_\pm$ on the tangent bundle given by
\begin{align*}
J_+=\pi_+\JJ_\pm\pi_+^{-1},\quad J_-=\pm \pi_-\JJ_\pm\pi^{-1}_-,
\end{align*}
such that $(J_\pm,\HH)$ is a pair of chiral structures on the tangent bundle. 

Conversely, the formula that recovers the generalized chiral data from $(J_\pm,\HH,b)$ is given by
\begin{align}\label{eq:GCh_from_bichiral}
\JJ_\pm=\pi_+^{-1}J_+\pi_+P_{C_+}\pm \pi_-^{-1}J_-\pi_-P_{C_-},
\end{align}
where  $P_{C_\pm}=\frac{1}{2}(\id\pm \GG)$ are the projections onto $C_\pm$. The usual expressions in the matrix form are
\begin{align}\label{eq:GCh_genform}
\JJ_{\pm}=\frac{1}{2}
\begin{pmatrix}
\id & 0 \\
b & \id
\end{pmatrix}
\begin{pmatrix}
J_+\pm J_- & \eta^{-1}_+\mp \eta^{-1}_- \\
\eta_+\mp \eta_- & J_+^*\pm J_-^*
\end{pmatrix}
\begin{pmatrix}
\id & 0 \\
-b & \id
\end{pmatrix},
\end{align}
where $\eta_\pm:=\HH J_\pm$ denote the two metrics associated to $(J_\pm,\HH)$.

\subsubsection{Born Geometry as a Generalized Chiral Structure}
We now explain how Born geometry fits in the picture of commuting pairs as a generalized chiral structure with anti-commuting tangent bundle data. 
\begin{proposition}
Let $(\GG(\HH,b),\JJ)$ be a generalized chiral structure and let $(J_\pm,\HH)$ be the corresponding tangent bundle data. Then $\{J_+,J_-\}=0$ is equivalent to $(J_\pm,\HH)$ being an (almost) Born structure.
\end{proposition}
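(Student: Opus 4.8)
The plan is to reduce the statement to a pointwise identity in $\End(T)$, since for an \emph{almost} Born structure no integrability is involved and everything is purely linear-algebraic. Recall that the induced tangent-bundle tensors are $\eta_\pm = \HH J_\pm$, and that these are genuine symmetric metrics precisely because each $J_\pm$ is an $\HH$-isometry: combining $J_\pm^2 = \id$ with $\HH(J_\pm\cdot,J_\pm\cdot)=\HH(\cdot,\cdot)$ gives $J_\pm^*\HH = \HH J_\pm$, i.e. $J_\pm$ is $\HH$-self-adjoint. Following the dictionary in the introduction, I would set $\eta := \eta_+ = \HH J_+$, keep $\HH$, and put $\omega := \HH I$ with $I := J_+J_-$. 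The task is then to determine exactly when the triple $(\eta,\HH,\omega)$ satisfies the two Born relations $\eta^{-1}\HH = \HH^{-1}\eta$ and $\omega^{-1}\HH = -\HH^{-1}\omega$, with $\omega$ a genuine $2$-form.

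First I would dispose of the first Born relation, which is automatic: from $\eta = \HH J_+$ and $J_+^2=\id$ one gets $\eta^{-1}=J_+\HH^{-1}$, hence $\eta^{-1}\HH = J_+$ and likewise $\HH^{-1}\eta = J_+$. So $\eta^{-1}\HH = \HH^{-1}\eta = J_+$ holds with no constraint on $J_-$, and this relation carries no information about the anticommutator.

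The heart of the argument is the second Born relation. Using $\omega = \HH J_+J_-$ and $(J_+J_-)^{-1} = J_-J_+$, a direct computation gives $\omega^{-1}\HH = J_-J_+$ and $\HH^{-1}\omega = J_+J_-$, so the relation $\omega^{-1}\HH = -\HH^{-1}\omega$ reads exactly $J_-J_+ = -J_+J_-$, i.e. $\{J_+,J_-\}=0$; this yields both implications at once. For consistency I would also check that $\omega$ is skew: from $J_\pm^*\HH = \HH J_\pm$ one finds $\omega^* = \HH J_-J_+$, whence $\omega + \omega^* = \HH\{J_+,J_-\}$, so $\omega$ is a genuine $2$-form if and only if $\{J_+,J_-\}=0$ as well, matching the Born requirement. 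Conversely, once $\{J_+,J_-\}=0$, one verifies $I^2 = J_+J_-J_+J_- = -\id$ and $\HH(I\cdot,I\cdot)=\HH(\cdot,\cdot)$, so $(\HH,I)$ is almost pseudo-Hermitian and $\omega=\HH I$ is its fundamental form, confirming the Born data is genuine.

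There is no serious obstacle here; the only point requiring care is the bookkeeping of transpose maps (distinguishing $J_\pm^*\in\End(T^*)$ from the $\HH$-adjoint) and tracking whether $\eta$ and $\omega$ come out symmetric or skew. Once the identities $J_\pm^*\HH=\HH J_\pm$ are established, the whole proposition collapses to the single equivalence ``$\omega^{-1}\HH=-\HH^{-1}\omega \Leftrightarrow \{J_+,J_-\}=0$'', so the actual content is just organizing these manipulations cleanly.
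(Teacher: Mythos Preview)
Your proposal is correct. The computations check out: $\eta^{-1}\HH=\HH^{-1}\eta=J_+$ is automatic, while $\omega^{-1}\HH=-\HH^{-1}\omega$ reduces exactly to $J_-J_+=-J_+J_-$, and the skewness check $\omega+\omega^*=\HH\{J_+,J_-\}$ is a nice consistency verification.

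The route differs slightly from the paper's. The paper invokes an appendix result (Proposition~\ref{prop:born=bichiral=bihermitian}) which, starting from Definition~\ref{def:born_new} (the para-hypercomplex formulation $(\eta,I,J,K)$), shows that any Born structure yields a pair of anticommuting chiral structures $(J,\HH)$, $(K,\HH)$ with common metric $\HH=\eta J$, and conversely. You instead work directly with Definition~\ref{def:born_old} (the metric/form relations $\eta^{-1}\HH=\HH^{-1}\eta$, $\omega^{-1}\HH=-\HH^{-1}\omega$) and verify them by hand from the bi-chiral data. The underlying linear algebra is the same in both cases, but your argument is self-contained and makes explicit which Born relation carries the anticommutation content; the paper's version is terser but relies on having the appendix proposition in place.
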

\begin{proof}
Proposition \ref{prop:born=bichiral=bihermitian} tells us that the data $(\eta,I,J,K)$ of an (almost) Born structure induces a pair of chiral structures $(J,\HH)$ and $(K,\HH)$ with $\{J,K\}=0$, where $\HH=\eta J$. This pair is enough to construct the generalized chiral structure $(\GG(\HH,b),\JJ)$ with arbitrary $b$. The converse is obvious from the statement of Proposition \ref{prop:born=bichiral=bihermitian}.
\end{proof}

Because $J_\pm$ anti-commute, this situation is analogous to Example \ref{ex:parahyperkhlr}, where an (almost) para-hyperHermitian structure gives rise to an (almost) GpK structure. Indeed, let $(\eta, I,J,K)$ be an almost Born geometry and denote $\omega=\eta K$ and $\eta'=\eta I$. The commuting pair of generalized structures $\JJ_\pm$ \eqref{eq:GCh_genform} with $b$ set to zero then take the form (compare to \eqref{eq:GpK_parahyperkahler})
\begin{align*}
\JJ_\pm=
\begin{pmatrix}
J\pm K & \eta^{-1}\mp \eta'^{-1} \\
\eta\mp \eta' & J^*\pm K^*
\end{pmatrix}=
\begin{pmatrix}
\id & 0 \\
\mp\omega & \id
\end{pmatrix}
\begin{pmatrix}
0 & \frac{1}{2}(\eta^{-1}\mp\eta'^{-1}) \\
\eta\pm\eta' & 0
\end{pmatrix}
\begin{pmatrix}
\id & 0 \\
\pm \omega & \id
\end{pmatrix},
\end{align*}
which means the structures $\JJ_\pm$ are $b$-field transformations of anti-diagonal non-degenerate generalized product structures by $\omega$.

\subsection{Generalized Anti-K\"ahler Structures}
Generalized anti-K\"ahler geometry is the complex counterpart of generalized chiral geometry. 
Given the similarities, we only flesh out the basics.

\begin{Def}
A \textbf{generalized anti-K\"ahler structure} (GaK) is a commuting pair $(\GG,\JJ_+)$ of a generalized metric $\GG=\GG(\eta,b)$ and a generalized anti-complex structure $\JJ_+$.
\end{Def}

Generalized anti-K\"ahler structures $(\GG,\JJ_+)$ correspond to pairs of (almost) anti-Hermitian structures $(\eta,J_\pm)$ such that
\begin{align*}
\JJ_{\pm}=\frac{1}{2}
\begin{pmatrix}
\id & 0 \\
b & \id
\end{pmatrix}
\begin{pmatrix}
J_+\pm J_- & -(\HH^{-1}_+\mp \HH^{-1}_-) \\
\HH_+\mp \HH_- & J_+^*\pm J_-^*
\end{pmatrix}
\begin{pmatrix}
\id & 0 \\
-b & \id
\end{pmatrix},
\end{align*}
where $\JJ_- := \GG \JJ_+$ and $\HH_\pm:=\eta J_\pm$. 

As in the chiral case, the canonical example of a generalized anti-K\"ahler structure is given by usual anti-Hermitian geometry (see Section \ref{sec:chiral/antiH}):
\begin{Ex}[Anti-Hermitian geometry]\label{ex:GCh_canon}
	Let $(J,\eta)$ be an almost anti-Hermitian structure. The contraction $\HH:=\eta J$ is then a pseudo-Riemannian metric and $(J,\HH)$ is also an almost anti-Hermitian structure. Define
	\begin{align*}
	\GG(\eta)=
	\begin{pmatrix}
	0 & \eta^{-1} \\
	\eta & 0
	\end{pmatrix},\quad
	\GG(\HH)=
	\begin{pmatrix}
	0 & \HH^{-1} \\
	\HH & 0	
	\end{pmatrix},\quad
	\JJ_+=
	\begin{pmatrix}
	J & 0 \\
	0 & J^*
	\end{pmatrix}.
	\end{align*}
	Then, both $(\GG(\eta),\JJ_+)$ and $(\GG(\HH),\JJ_+)$ define GaK structures such that $\JJ_-=\GG(\eta)\JJ_+=\GG(\HH)$.
\end{Ex}

Moreover, almost anti-hyperHermitian structures correspond to almost generalized anti-K\"ahler structures whose almost complex structures anti-commute:
\begin{Ex}[Anti-hyperHermitian structures]
Let $(\eta, I, J, K)$ be an almost anti-hyperHermitian structure (see Appendix \ref{sec:anti-HH}). The pairs $(\eta, J)$ and $(\eta,K)$ are then both almost anti-Hermitian structures with $\{ J,K\} = 0$, and thus induce an almost generalized anti-K\"ahler structure whose associated almost complex structures $J,K$ anti-commute. Conversely, suppose $(\GG,\JJ_+)$ is a GaK structure whose tangent bundle data $(\eta, J_\pm)$ has anti-commuting almost complex structures $J_\pm$. Then, referring to Appendix \ref{sec:anti-HH}, $(\eta, I, J_+, J_-)$ is an anti-hyperHermitian structure.
\end{Ex}

Finally, as for GCh structures, one cannot define the integrability of GaK structures in terms of the Courant bracket because the generalized almost structures defining them are non-isotropic. We are however able to define their {\it integrability} in Section \ref{sec:bismut_integr} is the same way as for GCh structures.

\subsection{Generalized Bismut Connections and Integrability}\label{sec:bismut_integr}

A generalized Bismut connection $D$ associated to a generalized metric $\GG$ was introduced in \cite{Gualtieri:2007bq} as a Courant algebroid connection that parallelizes $\GG$ and has useful properties. In particular, it is proved in \cite{Gualtieri:2007bq} that a $GK$ structure $(\GG,\II)$ is integrable if and only if $D\II=0$ and the torsion of the connection is of an appropriate type. The idea of this section is to extend this observation to any commuting pair $(\GG,\Aa)$ and define integrability of generalized chiral and anti-K\"ahler structures in analogous way. As a middle step, we define a notion of {\it weak} integrability of commuting pairs by requiring only $D\Aa=0$. Further restrictions on the type of the generalized torsion of $D$ then defines {\it full} integrability; in the case of G(p)K geometry we require that the type is $(2,1)+(1,2)$ with respect to both generalized (para-)complex structures, while in the case of generalized chiral/anti-K\"ahler structures we require that the type is $(3,0)+(0,3)$. In this way, we can talk about integrability of generalized structures even if their eigenbundles are not isotropic (see discussion in Remark \ref{rem:integrability_non-iso}).

An additional advantage of this approach is that it provides a natural way to weaken the integrability. As we will see, weak integrability relaxes the Frobenius integrability of the corresponding tangent bundle structures, which can sometimes be desirable from the point of view of physics. For example, the para-Hermitian geometry of Double Field Theory (DFT) \cite{freidel2017generalised,Svoboda:2018rci,Szabo-paraherm,Hassler:2019wvn} may not always be fully integrable and various DFT fluxes enter as an obstruction to integrability. Moreover, in applications to non-linear supersymmetric sigma models, where the geometry of commuting generalized pairs $(\GG,\Aa)$ enters in the form of the pair of tangent bundle endomorphisms $A_\pm$, it has been observed that sometimes only the requirement that $A_\pm$ are parallelized by the connections \eqref{eq:n_pm}, $\n^\pm A_\pm$, might be sufficient \cite{SUSY_non-integrable,Stojevic:2009ub}. As we will show in Proposition \ref{prop:bismut_n_pm}, this is exactly the condition of weak integrability.

We start by presenting the integrability statement in terms of the generalized Bismut for GK structures:
\begin{theorem}[\cite{Gualtieri:2007bq}]\label{theo:gual_branes}
Let $(\GG,\II_\pm)$ be a commuting pair of generalized complex structures with metric $\GG=-\II_+\II_-$ and $D$ its generalized Bismut connection. Then $(\GG,\II_+)$ defines a GK structure and in particular both $\II_\pm$ are Courant integrable iff $D\II=0$ and $T^D$ is of type $(2,1)+(1,2)$ with respect to both $\II_\pm$.
\end{theorem}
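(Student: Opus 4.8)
The plan is to reduce the statement to a single identity relating the Dorfman bracket, the connection $D$, and its generalized torsion, and then read off both implications from it. Since $D$ automatically parallelizes $\GG=-\II_+\II_-$, and since $\II_-=\GG\II_+$ with $D(\GG\II_+)=\GG\,D\II_+$, the condition $D\II_+=0$ is equivalent to $D\II_-=0$; I therefore treat the two structures on an equal footing and write $\II$ for either. The central observation is the rewriting of \eqref{eq:gen_torsion} as
\begin{align*}
\la [u,v]_H,w\ra = \la D_uv-D_vu,w\ra+\la D_wu,v\ra - T^D(u,v,w).
\end{align*}
Let $\Lb\subset(\TT)\otimes\mathbb{C}$ be the $+i$-eigenbundle of $\II$, a maximal isotropic subbundle with $(\TT)\otimes\mathbb{C}=\Lb\oplus\overline{\Lb}$ and $\la\Lb,\Lb\ra=0$. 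By Lemma \ref{lem:integrability}, $\II$ is Courant integrable iff $\Lb$ is involutive, and since the $\overline{\Lb}$-component of a bracket of sections of $\Lb$ is detected by pairing against $\Lb$, this is equivalent to $\la[\ell_1,\ell_2]_H,\ell_3\ra=0$ for all $\ell_i\in\se(\Lb)$.

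For the direction $(\Leftarrow)$, assume $D\II=0$ and that $T^D$ is of type $(2,1)+(1,2)$ with respect to $\II$. Then $D$ preserves the eigenbundle decomposition, so $D_{\ell_1}\ell_2,\ D_{\ell_2}\ell_1,\ D_{\ell_3}\ell_1\in\se(\Lb)$, and by isotropy of $\Lb$ all three $D$-terms in the identity pair to zero, leaving $\la[\ell_1,\ell_2]_H,\ell_3\ra=-T^D(\ell_1,\ell_2,\ell_3)$. The right-hand side is precisely the pure component of $T^D$ along $\Lambda^3\Lb$, i.e.\ its $(3,0)$-part, which vanishes by hypothesis; hence $\Lb$ is involutive and $\II$ is integrable. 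Applying this to both $\II_\pm$ yields the GK structure.

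For the direction $(\Rightarrow)$, I must first produce $D\II_\pm=0$ from integrability alone, and this is the main obstacle: the displayed identity delivers the torsion conclusion only once parallelism is already known. Here I would pass to the bi-Hermitian description of Section \ref{sec:commuting_pairs}, where integrability of $\II_\pm$ is equivalent to $I_\pm$ being integrable complex structures together with $\rd^c_\pm\omega_\pm=\pm H_b$. Under the isomorphisms $\pi_\pm$ the connection $D$ decomposes into the pair $\n^\pm=\lc\pm\tfrac12\eta^{-1}H_b$ of \eqref{eq:n_pm}, and the condition $\rd^c_\pm\omega_\pm=\pm H_b$ is exactly the statement that $\n^\pm$ are the metric connections with totally skew torsion, i.e.\ the Bismut connections of the Hermitian manifolds $(g,I_\pm)$; these parallelize $I_\pm$, which translates back to $D\II_\pm=0$. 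With $D\II_\pm=0$ established, the identity again gives $\la[\ell_1,\ell_2]_H,\ell_3\ra=-T^D(\ell_1,\ell_2,\ell_3)$, whose left side vanishes by integrability; thus the $(3,0)$-part of $T^D$ vanishes, and since $T^D=2\pi_+^*H_b+2\pi_-^*H_b$ is a real $3$-form (Proposition \ref{prop:gen_torsion}) its $(0,3)$-part vanishes as well, giving type $(2,1)+(1,2)$ with respect to each $\II_\pm$.

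Beyond the classical Bismut identification, the remaining points are bookkeeping: confirming that $D\II_+=0$ and $D\II_-=0$ are genuinely interchangeable through $\II_-=\GG\II_+$ and $D\GG=0$, and checking that vanishing of $T^D$ on $\Lambda^3\Lb_\pm$ is the correct translation of ``no $(3,0)+(0,3)$ component'' for both complex structures at once. I expect these to be routine once the identity and the $\n^\pm$ decomposition are in place.
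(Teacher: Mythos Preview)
The paper does not give its own proof of this theorem; it is quoted from \cite{Gualtieri:2007bq} and used as the prototype for the GpK analogue (Theorem~\ref{theo:integrabilityGpK}). Your argument is correct, and it is essentially the original one. The $(\Leftarrow)$ direction via the torsion identity $\la [u,v]_H,w\ra=\la D_uv-D_vu,w\ra+\la D_wu,v\ra-T^D(u,v,w)$ evaluated on $\Lambda^3\Lb$ is exactly how one shows involutivity from parallelism plus the type constraint, and your $(\Rightarrow)$ direction via the bi-Hermitian Bismut characterisation (integrability plus $\rd^c_\pm\omega_\pm=\mp H_b$ forces $\n^\pm I_\pm=0$, hence $D\II_\pm=0$ by Proposition~\ref{prop:bismut_n_pm}) is the standard route.

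For comparison, the machinery the paper develops in this section would assemble the same proof slightly differently: Proposition~\ref{prop:bismut_n_pm} converts $D\II_\pm=0$ to $\n^\pm I_\pm=0$; the unnamed Lemma preceding Theorem~\ref{theo:integrabilityGpK} converts the type condition on $T^D$ to the corresponding type condition on $H_b$; and Proposition~\ref{prop:n_herm} together with its corollary translates $\n^\pm I_\pm=0$ plus $H_b$ of type $(2,1)+(1,2)$ into integrability of $I_\pm$ and $\rd^c_\pm\omega_\pm=\mp H_b$, which is the bi-Hermitian formulation of GK integrability. Your $(\Leftarrow)$ argument is a bit more direct (you never leave $\TT$), while the paper's route is more modular and is what lets the GpK, GCh and GaK cases be handled uniformly. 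One small remark: be careful with the sign in $\rd^c_\pm\omega_\pm=\pm H_b$ versus the corollary to Proposition~\ref{prop:n_herm}, which gives $\rd^c\omega=-h$; the conventions differ by a sign depending on the definition of $\rd^c$, but this does not affect the logic.
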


The relationship between the integrability of GK structures and the generalized Bismut connection can be summed up as follows:
\begin{align*}
(\GG,\II)\ \text{Generalized K\"ahler} \Longleftrightarrow D\II=0,\ T^D\ \text{type}\ (2,1)+(1,2).
\end{align*}
This can be understood as an integrability condition on the GC structure $\II$ induced by a data given by its partner metric $\GG$ in the commuting pair $(\GG,\II)$ and so we introduce the notion of weak integrability for an arbitrary commuting pair $(\GG,\Aa)$:
\begin{Def}
Let $(\GG,\Aa)$ be a commuting pair consisting of an indefinite generalized metric $\GG$ and arbitrary generalized structure $\Aa$ and let $D$ be the generalized Bismut connection of $\GG$. We say $\Aa$ is {\bf weakly integrable} when $D\Aa=0$.
\end{Def}
Note that it follows that in the above definition when $\Aa$ is weakly integrable then also $\Aa'=\GG\Aa$ is weakly integrable. We will now analyse what the condition $D\Aa=0$ means in terms of the tangent bundle data corresponding to $(\GG,\Aa)$. As we have seen previously, we get a pair of tangent bundle endomorphisms for any commuting pair $(\GG,\Aa)$ whenever $\GG$ is an (indefinite) generalized metric via the formula
\begin{align*}
A_\pm=\pm\pi_\pm\Aa\pi_\pm^{-1}.
\end{align*}
This can be inverted into a formula for $\Aa$ in terms of $A_\pm$:
\begin{align}\label{eq:JJ_from_J_pm}
\Aa=\pi_+^{-1}A_+\pi_+ P_++\pi_-^{-1}A_-\pi_- P_-,
\end{align}
where $P_\pm=\frac{1}{2}(\id\pm\GG)$ projects from $\TT$ to $C_\pm$. Using \eqref{eq:JJ_from_J_pm} and \eqref{eq:n_pm} We can now rephrase the equation $D\Aa=0$ in terms of $\n^\pm$ and $A_\pm$:

\begin{proposition}\label{prop:bismut_n_pm}
Let $(\GG,\Aa)$ be a commuting pair with $\GG$ a (indefinite) generalized metric and $D$ the generalized Bismut connection of $\GG$ given by \eqref{eq:n_pm}. Then $D\Aa=0$ if and only if $\n^\pm A_\pm=0$, $A_\pm$ being the tangent bundle endomorphisms corresponding to $\Aa$.
\end{proposition}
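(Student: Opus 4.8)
The plan is to compute the covariant derivative of the endomorphism $\Aa$ induced by $D$ directly, using the explicit form \eqref{eq:n_pm} of the generalized Bismut connection together with the fact that $\Aa$, since it commutes with $\GG$, preserves the eigenbundles $C_\pm$. Recall that for a connection $D$ on $\TT$ the induced derivative on $\End(\TT)$ is $(D_u\Aa)(v)=D_u(\Aa v)-\Aa(D_u v)$, so the goal is to show that this expression vanishes for all $u,v\in\se(\TT)$ if and only if $\n^\pm A_\pm=0$.

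First I would record the intertwining relations coming from $A_\pm=\pm\pi_\pm\Aa\pi_\pm^{-1}$, namely $\Aa\,\pi_\pm^{-1}=\pm\,\pi_\pm^{-1}A_\pm$, and dually $\pi_+\Aa=A_+\pi_+$ on $C_+$ while $\pi_-\Aa=-A_-\pi_-$ on $C_-$. Since $[\Aa,\GG]=0$, the structure $\Aa$ maps each $C_\pm$ to itself, so the $\GG$-decomposition of $\Aa v$ is simply $(\Aa v)_\pm=\Aa v_\pm$ and no cross terms between $C_+$ and $C_-$ occur. Substituting the formula $D_uv=\pi_+^{-1}\n^+_{\pi(u)}\pi_+v_++\pi_-^{-1}\n^-_{\pi(u)}\pi_-v_-$ from \eqref{eq:n_pm}, I would expand both $D_u(\Aa v)$ and $\Aa(D_uv)$, pushing the $A_\pm$ through the isomorphisms $\pi_\pm$ via these relations.

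The two expansions then combine into
\begin{align*}
(D_u\Aa)(v)=\pi_+^{-1}\big(\n^+_{\pi(u)}A_+-A_+\n^+_{\pi(u)}\big)\pi_+v_+-\pi_-^{-1}\big(\n^-_{\pi(u)}A_--A_-\n^-_{\pi(u)}\big)\pi_-v_-,
\end{align*}
and the bracketed operators are precisely the endomorphisms $(\n^\pm_{\pi(u)}A_\pm)$ of $T$. Because the $\pi_\pm$ are isomorphisms, the anchor $\pi(u)$ exhausts all of $T$, and the $C_+$ and $C_-$ summands are independent, the vanishing of $(D_u\Aa)(v)$ for all $u,v$ is equivalent to the simultaneous vanishing of $\n^+A_+$ and $\n^-A_-$, which is exactly the claim $\n^\pm A_\pm=0$.

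The calculation is essentially bookkeeping; the only point requiring care is the opposite sign in the restriction of $\Aa$ to $C_+$ versus $C_-$ (the $\pm$ in $A_\pm=\pm\pi_\pm\Aa\pi_\pm^{-1}$), which must be tracked consistently through both $D_u(\Aa v)$ and $\Aa(D_uv)$ so that it cancels cleanly and leaves the covariant-derivative commutators $(\n^\pm A_\pm)$ rather than some mixture of $A_+$ and $A_-$. I would guard against this by first verifying that $\Aa$ genuinely preserves each $C_\pm$, so that the two eigenbundle components decouple and the overall minus sign on the $C_-$ term multiplies a term of the same structural form $\n^-A_--A_-\n^-$ as the $C_+$ term.
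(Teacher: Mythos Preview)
Your proof is correct and follows essentially the same route as the paper: both expand $(D_u\Aa)v=D_u(\Aa v)-\Aa(D_uv)$ using the explicit expression \eqref{eq:n_pm} for $D$ together with the fact that $\Aa$ preserves $C_\pm$ and acts there via $A_\pm$, then read off the $C_+$ and $C_-$ components as $\pi_\pm^{-1}(\n^\pm A_\pm)\pi_\pm$. Your more careful tracking of the sign in $A_\pm=\pm\pi_\pm\Aa\pi_\pm^{-1}$ is a harmless refinement; the paper's version simply uses the equivalent formula \eqref{eq:JJ_from_J_pm} and suppresses that bookkeeping.
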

\begin{proof}
From $(D_u\Aa)v=D_u(\Aa v)-\Aa(D_u v)$ we get
\begin{align*}
(D_u\Aa)v&=\pi_+^{-1}\n_{\pi (u)}^+A_+(\pi_+v_+)+\pi_-^{-1}\n_{\pi (u)}^-A_-(\pi_-v_-)\\
&-\pi_+^{-1}A_+(\n_{\pi (u)}^+\pi_+v_+)-\pi_-^{-1}A_-(\n_{\pi (u)}^-\pi_-v_-),
\end{align*}
and combining the terms that take values in $C_\pm$ yields the result.
\end{proof}
As we mentioned above, the weak integrability condition is in the case of the $GK$ and $GpK$ commuting pairs simply a weakening of the usual integrability conditions (for $GpK$ formulated in Theorem \ref{theo:integr_GpK}, for $GK$ in \cite[Prop. 6.17]{Gualtieri:2003dx}):
\begin{proposition}\label{prop:weak_int_iso}
An almost $GK$ structure $(\GG,\II)$ is weakly integrable if and only if the fundamental forms $\omega_\pm$ of the induced bi-Hermitian data $(I_\pm,g)$ and the corresponding Nijenhuis tensors $N_{I_\pm}$ are related to the $H_b$-flux by
\begin{align}
\begin{aligned}
i(\rd\omega_\pm^{(3,0)_\pm}-\rd\omega_\pm^{(0,3)_\pm})&=\pm 3H_b^{(3,0)_\pm+(0,3)_\pm}=\pm\frac{3}{4}N_{I_\pm}\\
\rd\omega_\pm^{(2,1)_\pm+(1,2)_\pm}&=\mp i(H_b^{(2,1)_\pm}-H_b^{(1,2)_\pm}).
\end{aligned}
\end{align}

Similarly, an almost GpK structure $(\GG,\KK)$ is weakly integrable if and only if the fundamental forms $\omega_\pm$ of the induced bi-para-Hermitian data $(K_\pm,g)$ and the corresponding Nijenhuis tensors $N_{K_\pm}$ are related to the $H_b$-flux by
\begin{align}\label{eq:weak_int_GpK}
\begin{aligned}
\rd\omega_\pm^{(3,0)_\pm}&=\mp 3H_b^{(3,0)_\pm}=\pm\frac{3}{4} N^{(3,0)_\pm}_{K_\pm}\\
\rd\omega_\pm^{(0,3)_\pm}&=\pm 3H_b^{(0,3)_\pm}=\mp \frac{3}{4}N^{(0,3)_\pm}_{K_\pm}\\
\rd\omega_\pm^{(2,1)_\pm}&=\mp H_b^{(2,1)_\pm}\\
\rd\omega_\pm^{(1,2)_\pm}&=\pm H_b^{(1,2)_\pm}.
\end{aligned}
\end{align}
\end{proposition}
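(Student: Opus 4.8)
The plan is to reduce the statement to the parallelism condition for the Bismut data and then decompose everything by eigenbundle type. By Proposition \ref{prop:bismut_n_pm}, weak integrability of $(\GG,\KK)$ is equivalent to $\n^\pm K_\pm=0$, where $\n^\pm=\lc\pm\frac12\eta^{-1}H_b$ are the connections \eqref{eq:n_pm} and $\lc$ is the Levi-Civita connection of $\eta$. Since each $\n^\pm$ preserves $\eta$ and $\omega_\pm=\eta K_\pm$, this is in turn equivalent to $\n^\pm\omega_\pm=0$. I would then unpack this exactly as in the proof of Theorem \ref{theo:bi-paraH_Poisson}: writing $\n^\pm=\lc\pm\frac12\eta^{-1}H_b$ and using the para-Hermitian relation $\eta(K_\pm\cdot,\cdot)=-\eta(\cdot,K_\pm\cdot)$, one obtains the pointwise identity
\begin{align*}
(\lc_X\omega_\pm)(Y,Z)=\mp\tfrac12\bigl(H_b(X,K_\pm Y,Z)+H_b(X,Y,K_\pm Z)\bigr),
\end{align*}
which is the full content of weak integrability.

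Next I would extract the stated equations by type. Because $\lc$ is torsion-free, $\rd\omega_\pm$ is the cyclic antisymmetrization of $\lc\omega_\pm$, so cyclically summing the identity above expresses $\rd\omega_\pm$ as a contraction of $H_b$ with $K_\pm$. On each eigenbundle $K_\pm$ acts as $\pm\id$, so on a fixed type $(p,q)_\pm$ the insertions of $K_\pm$ become scalars and the cyclic sum collapses to a numerical multiple of the corresponding component of $H_b$. Carrying this out on the four types $(3,0)_\pm,(2,1)_\pm,(1,2)_\pm,(0,3)_\pm$ produces precisely $\rd\omega_\pm^{(3,0)_\pm}=\mp 3H_b^{(3,0)_\pm}$, $\rd\omega_\pm^{(2,1)_\pm}=\mp H_b^{(2,1)_\pm}$, $\rd\omega_\pm^{(1,2)_\pm}=\pm H_b^{(1,2)_\pm}$, and $\rd\omega_\pm^{(0,3)_\pm}=\pm 3H_b^{(0,3)_\pm}$. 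For the $(3,0)_\pm$ and $(0,3)_\pm$ rows I would then invoke the metric-independent identity relating the purely (anti-)para-holomorphic part of $\rd\omega_\pm$ to the Nijenhuis tensor of $K_\pm$ (Appendix \ref{sec:paracpx/product}, together with the $\rd^p$-operators \eqref{eq:dp-operator}), which rewrites those two equations in the Nijenhuis form of \eqref{eq:weak_int_GpK}.

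For the $GK$ case the argument is identical after complexifying: $I_\pm$ acts by $\pm i$ on its $\pm i$-eigenbundles, so the same cyclic computation produces the factors of $i$ and the combination $H_b^{(3,0)_\pm}+H_b^{(0,3)_\pm}$ appearing in the first displayed GK equation, while the $(2,1)_\pm+(1,2)_\pm$ row follows from the individual $(2,1)_\pm$ and $(1,2)_\pm$ components as above; the Nijenhuis identification $i(\rd\omega_\pm^{(3,0)_\pm}-\rd\omega_\pm^{(0,3)_\pm})=\pm\frac34 N_{I_\pm}$ is the Hermitian analogue of the one cited for $K_\pm$.

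The point requiring care — and the main obstacle — is the equivalence in both directions. The forward direction only uses that $\rd\omega_\pm$ is the antisymmetrization of $\lc\omega_\pm$, but the converse requires that the full tensor $\lc\omega_\pm$ be reconstructible from $\rd\omega_\pm$ together with $N_{K_\pm}$. This is the (para-)Hermitian intrinsic-torsion (Gray--Hervella-type) decomposition: since $\lc_X\omega_\pm=\eta(\lc_X K_\pm)$ and $\lc_X K_\pm$ anticommutes with $K_\pm$, the tensor $\lc_X\omega_\pm$ is of type $(2,0)_\pm+(0,2)_\pm$ in its last two slots, and hence splits into exactly the four irreducible pieces detected by the type-components of $\rd\omega_\pm$ and of $N_{K_\pm}$; matching these components is therefore equivalent to the full identity $\n^\pm\omega_\pm=0$. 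Verifying that no information is lost in this reconstruction (in particular that the weak-integrability constraint forces the non-skew $(3,0)_\pm+(0,3)_\pm$ part to vanish, so that the Nijenhuis tensor there is genuinely a $3$-form), and tracking the constants ($\tfrac34$, the factors of $i$) and the $\pm$-signs uniformly across the $+$ and $-$ structures, is where essentially all of the bookkeeping lies.
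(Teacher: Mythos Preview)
Your proposal is correct and follows essentially the same route as the paper: the paper reduces via Proposition~\ref{prop:bismut_n_pm} to $\n^\pm K_\pm=0$ and then simply invokes Proposition~\ref{prop:n_(para)herm} (and Gualtieri's Hermitian analogue), whose proof is exactly the computation you outline --- the identity $(\lc_X\omega)(Y,Z)=-\tfrac12(h(X,KY,Z)+h(X,Y,KZ))$, its cyclic sum for $\rd\omega$, and the explicit reconstruction formula $\eta((\lc_XK)Y,Z)=\tfrac12(\rd\omega(X,Y,Z)+\rd\omega(X,KY,KZ)-N(Y,KZ,X))$ for the converse. The only difference is packaging: the paper isolates this as an appendix lemma rather than doing it inline, and uses the explicit reconstruction identity in place of your more abstract Gray--Hervella argument.
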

\begin{proof}
For the para-Hermitian case, the proof follows directly from Proposition \ref{prop:n_(para)herm} which is applied to the bi-para-Hermitian data $(\eta,K_\pm)$ by taking $K=K_\pm$ and correspondingly $h=\pm H_\pm$. For the Hermitian side, this follows in the same way from \cite[Prop. 6.24]{Gualtieri:2003dx}.
\end{proof}

\begin{lemma}
Let $\Aa_\pm$ be a commuting pair with $\GG=\Aa_+\Aa_-$ and $A_\pm$ the corresponding tangent bundle endomorphisms and let $T^D=2\pi_+^*H_b+2\pi^*_-H_b$ be the generalized torsion of the generalized Bismut connection of $\GG$. Then the $(k,l)+(l,k)$ ($k+l=3$) components with respect to both $\Aa_\pm$ vanish if and only if the $(k,l)+(l,k)$ components of $H_b$ with respect to $A_\pm$ vanish.
\end{lemma}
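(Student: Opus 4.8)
The plan is to reduce everything to the explicit description of the generalized torsion from Proposition \ref{prop:gen_torsion}: the restriction of $T^D$ to $\Lambda^3 C_\pm$ equals $2\pi_\pm^* H_b$, while all mixed components (those pairing $C_+$ against $C_-$) vanish. Because of this vanishing, any type projection of $T^D$ splits as a sum of a piece supported on $\Lambda^3 C_+$ and a piece supported on $\Lambda^3 C_-$, so it suffices to identify each of these two pieces with a component of $H_b$ transported through the isomorphisms $\pi_\pm$.

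First I would record how the two structures act on the eigenbundles of $\GG$. Since $\Aa_-=\GG\Aa_+$ and $\GG|_{C_\pm}=\pm\id$, both $\Aa_+$ and $\Aa_-$ preserve $C_+$ and $C_-$; moreover $\Aa_\pm|_{C_+}$ correspond via $\pi_+$ to the same endomorphism $A_+$, whereas $\Aa_+|_{C_-}$ and $\Aa_-|_{C_-}$ correspond via $\pi_-$ to $A_-$ and $-A_-$ respectively (this is exactly the content of $A_\pm=\pm\pi_\pm\Aa\pi_\pm^{-1}$ together with the inversion \eqref{eq:JJ_from_J_pm}). Since $\pi_\pm$ intertwines $\Aa_\pm|_{C_\pm}$ with $(\pm)A_\pm$, it carries the eigenbundle decomposition of $C_+$ under either $\Aa_\pm$ to the $(1,0)_+/(0,1)_+$ decomposition of $T$ under $A_+$, and the eigenbundle decomposition of $C_-$ under $\Aa_+$ (resp. $\Aa_-$) to the decomposition of $T$ under $A_-$ (resp. $-A_-$). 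In the complex cases (GK, GaK) this identification is understood after complexifying.

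The key observation is then that passing from $A_-$ to $-A_-$ merely interchanges the $(1,0)_-$ and $(0,1)_-$ eigenbundles, hence swaps the types $(k,l)_-\leftrightarrow(l,k)_-$; the symmetric combination $(k,l)+(l,k)$ is therefore left invariant. It follows that the $(k,l)+(l,k)$ component of $T^D$ is the same object whether computed with respect to $\Aa_+$ or $\Aa_-$, and that under $\pi_\pm$ it is identified with
\begin{align*}
2\pi_+^*\!\big(H_b^{(k,l)_++(l,k)_+}\big)+2\pi_-^*\!\big(H_b^{(k,l)_-+(l,k)_-}\big),
\end{align*}
where $(\cdot)_\pm$ denotes the type taken with respect to $A_\pm$.

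To finish, I would use that $\pi_+$ and $\pi_-$ are isomorphisms onto the transversal subbundles $C_+$ and $C_-$, so the displayed sum vanishes if and only if each summand does, i.e. if and only if $H_b^{(k,l)_++(l,k)_+}=0$ and $H_b^{(k,l)_-+(l,k)_-}=0$; this is precisely the vanishing of the $(k,l)+(l,k)$ components of $H_b$ with respect to $A_+$ and $A_-$, and yields both directions of the equivalence simultaneously. I expect the main obstacle to be the second step, namely verifying carefully that $\pi_\pm$ really intertwines the type decomposition of $\Lambda^3 C_\pm$ induced by $\Aa_\pm$ with that of $\Lambda^3 T$ induced by $A_\pm$; this is where one must track the refinement $C_\pm=\ell_\pm\oplus\tilde{\ell}_\pm$ into simultaneous eigenbundles of $\Aa_+$ and $\Aa_-$ and keep the sign conventions consistent, after which the remaining argument is purely formal.
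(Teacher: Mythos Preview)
Your proposal is correct and follows essentially the same route as the paper: both use Proposition \ref{prop:gen_torsion} to reduce $T^D$ to its pure $\Lambda^3 C_\pm$ pieces $2\pi_\pm^*H_b$, then use the intertwining relations $\pi_+\Aa_\pm=A_+\pi_+$ and $\pi_-\Aa_\pm=\pm A_-\pi_-$ to identify the type components. Your observation that passing from $A_-$ to $-A_-$ merely swaps $(k,l)_-\leftrightarrow(l,k)_-$ (so the symmetric combination is unchanged) is exactly the content of the paper's two displayed equations for $(T^D)^{(k,l)_+}$ and $(T^D)^{(k,l)_-}$, just phrased a bit more conceptually.
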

\begin{proof}
Using Proposition \ref{prop:gen_torsion} and denoting the splitting of sections of $\TT$ corresponding to eigenbundles of $\GG$ by $u=u_++u_-$, we have
\begin{align*}
\frac{1}{2}T^D(u,v,w)=H_b(\pi_+u_+,\pi_+v_+,\pi_+w_+)+H_b(\pi_-u_-,\pi_-v_-,\pi_-w_-).
\end{align*}
The relationships between $\Aa_\pm$ and $A_\pm$ are
\begin{align*}
\pi_+\Aa_\pm=A_+\pi_+,\quad \pi_-\Aa_\pm=\pm A_- \pi_-,
\end{align*}
which implies the following equations
\begin{align*}
\frac{1}{2}(T^D)^{(k,l)_+}=\pi_+^*H_b^{(k,l)_+}+\pi_-^*H_b^{(k,l)_-}\\
\frac{1}{2}(T^D)^{(k,l)_-}=\pi_+^*H_b^{(k,l)_+}+\pi_-^*H_b^{(l,k)_-},
\end{align*}
where the bigrading $(k,l)_\pm$ of $T^D$ is with respect to $\Aa_\pm$, while the bigrading of $H_b$ is with respect to $A_\pm$. The above set of equations relating $(T^D)^{(k,l)_\pm}$ and $H_b^{(k,l)_\pm}$ then implies the statement of the Lemma.
\end{proof}

We see from Proposition \ref{prop:weak_int_iso} that, indeed, when not requiring that the generalized torsion is of type $(2,1)+(1,2)$, the involved (para-)Hermitian structures need not be integrable and their integrability is controlled by the (now non-vanishing) $(3,0)+(0,3)$ components of $H_b$. To summarize, the situation is for GpK entirely analogous to the GK case:
\begin{theorem}\label{theo:integrabilityGpK}
Let $\KK_\pm$ be a commuting pair of generalized complex structures with metric $\GG=\KK_+\KK_-$ and $D$ its generalized Bismut connection. Then $(\GG,\KK_+)$ defines a GpK structure and in particular both $\KK_\pm$ are Courant integrable iff $D\KK_\pm=0$ and $T^D$ is of type $(2,1)+(1,2)$ with respect to both $\KK_\pm$.
\end{theorem}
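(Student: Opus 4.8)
The plan is to synthesize the three results already in place---the tangent-bundle integrability criterion (Theorem \ref{theo:integr_GpK}), the characterization of weak integrability (Proposition \ref{prop:weak_int_iso}), and the preceding Lemma relating the type of $T^D$ to the $(3,0)+(0,3)$ components of $H_b$---so that the two hypotheses in the statement translate directly into the two conditions of Theorem \ref{theo:integr_GpK}. Throughout I would take all bigradings with respect to the bi-para-Hermitian data $(\eta,K_\pm)$ induced by $(\GG,\KK_+)$ and write $H_b = H + \rd b$. The point is that essentially all the analytic work has already been done, so this theorem is a clean matching of conditions, and I expect the only real difficulty to be careful sign and type bookkeeping across the two bigradings (with respect to $K_+$ versus $K_-$).

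First I would record the dictionary. By the preceding Lemma, $T^D$ is of type $(2,1)+(1,2)$ with respect to both $\KK_\pm$ if and only if $H_b^{(3,0)_\pm} = H_b^{(0,3)_\pm} = 0$. By Proposition \ref{prop:weak_int_iso} (together with Proposition \ref{prop:bismut_n_pm}), the weak integrability condition $D\KK_\pm = 0$ is equivalent to the four relations \eqref{eq:weak_int_GpK} tying $\rd\omega_\pm$, $N_{K_\pm}$ and $H_b$ together. Since the para-complex Nijenhuis tensor is captured entirely by its $(3,0)_\pm+(0,3)_\pm$ components, the vanishing of these components is equivalent to integrability of $K_\pm$.

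For the forward implication, suppose $(\GG,\KK_+)$ is Courant integrable. By Theorem \ref{theo:integr_GpK}, $K_\pm$ are integrable and $\rd^p_\pm\omega_\pm = \mp H_b$. Integrability of $K_\pm$ forces the $(3,0)_\pm$ and $(0,3)_\pm$ parts of $\rd\omega_\pm$ to vanish, so $\rd^p_\pm\omega_\pm = \rd\omega_\pm^{(2,1)_\pm} - \rd\omega_\pm^{(1,2)_\pm}$ is purely of type $(2,1)+(1,2)$; the identity $\rd^p_\pm\omega_\pm = \mp H_b$ then shows $H_b$ has no $(3,0)_\pm$ or $(0,3)_\pm$ part, which is exactly the type condition on $T^D$ by the Lemma. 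Matching the $(2,1)_\pm$ and $(1,2)_\pm$ components of $\rd^p_\pm\omega_\pm = \mp H_b$ reproduces the last two lines of \eqref{eq:weak_int_GpK}, while the first two lines hold trivially since all terms vanish; hence $D\KK_\pm = 0$ by Proposition \ref{prop:weak_int_iso}.

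For the converse, I would assume $D\KK_\pm = 0$ and that $T^D$ is of type $(2,1)+(1,2)$ with respect to both $\KK_\pm$. The type condition gives $H_b^{(3,0)_\pm} = H_b^{(0,3)_\pm} = 0$ by the Lemma. Feeding this into the first two lines of \eqref{eq:weak_int_GpK} forces $N^{(3,0)_\pm}_{K_\pm} = N^{(0,3)_\pm}_{K_\pm} = 0$, hence $K_\pm$ are integrable, which is condition (1) of Theorem \ref{theo:integr_GpK}. The last two lines of \eqref{eq:weak_int_GpK} then read $\rd\omega_\pm^{(2,1)_\pm} = \mp H_b^{(2,1)_\pm}$ and $\rd\omega_\pm^{(1,2)_\pm} = \pm H_b^{(1,2)_\pm}$; subtracting these and using $H_b = H_b^{(2,1)_\pm} + H_b^{(1,2)_\pm}$ yields $\rd^p_\pm\omega_\pm = \mp H_b$, which is condition (2). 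Theorem \ref{theo:integr_GpK} now delivers Courant integrability. The structural content is thus entirely carried by the three cited results; the only care needed is the sign and bigrading accounting, together with reading the theorem's ``generalized complex'' as ``generalized para-complex'' with $\GG = \KK_+\KK_-$.
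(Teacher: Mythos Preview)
Your proposal is correct and follows precisely the approach the paper intends: the paper does not give a separate detailed proof but presents the theorem as a direct synthesis of Proposition \ref{prop:weak_int_iso}, the preceding Lemma on the type of $T^D$, and Theorem \ref{theo:integr_GpK}, which is exactly what you have spelled out. Your bookkeeping of the bigradings and signs is accurate, including the observation that $N_{K_\pm}$ has only $(3,0)_\pm+(0,3)_\pm$ components and that $\rd^p_\pm\omega_\pm = \rd\omega_\pm^{(2,1)_\pm}-\rd\omega_\pm^{(1,2)_\pm}$ once $K_\pm$ is integrable.
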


\begin{remark}
In \cite{vaisman2015generalized}, Vaisman considers a commuting pair consisting of a generalized complex and a generalized para-complex structure $(\II,\KK)$ whose product is a nondegenerate generalized anti-complex structure (we call this generalized semi-K\"ahler geometry, see Remark \ref{rem:commutingpairs}). There, the integrability conditions on the tangent bundle data stemming from the integrability of the generalized (para-)complex structures is analogous to the presented case: the tangent bundle  (para-)Hermitian structures are forced to be integrable and parallel with respect to a connection with a fully skew torsion. This forces their fundamental form to be of type $(2,1)+(1,2)$ and related to the $H_b$-flux. In terms of the generalized Bismut connection, this would mean that $D\II=D\KK=0$ and $T^D$ is of type $(2,1)+(1,2)$.
\end{remark}

We now turn to non-isotropic generalized structures, particularly generalized chiral structures. In this case, we know that the results cannot be fully analogous because the corresponding tangent bundle geometry is very different; for example, the fundamental tensor of the tangent bundle chiral structure, $F(X,Y,Z)=\eta((\lc_XJ)Y,Z)$, is not fully skew and is of type $(2,1)+(1,2)$ (with respect to $J$) and so is the Nijenhuis tensor $N_J(X,Y,Z)=\eta(N_J(X,Y),Z)$. However, they can still be related to the flux $H_b$:

\begin{proposition}\label{prop:weak_int_chiral}
An {almost generalized chiral (anti-K\"ahler)} structure $(\GG,\JJ)$ is weakly integrable if and only if the fundamental tensors $F_\pm$ of the corresponding tangent bundle structures $(\eta,J_\pm)$ are related to the $H_b$-flux by
\begin{align}
F_\pm(X,Y,Z)=\mp\frac{1}{2}\left(H_b(X,J_\pm Y,Z) - H_b(X,Y,J_\pm Z)\right),
\end{align}
Equivalently, both $(g,J_\pm)$ are of type $\mathcal{W}_3$ almost product pseudo-Riemannian structures whose Nijenhuis tensors $N_\pm$ are related to $H$ by
	\begin{align*}
	N_\pm(X,Y,Z)=\pm 2\left(H_b^{(2,1)_\pm+(1,2)_\pm}(JX,Y,JZ) + H_b^{(2,1)_\pm+(1,2)_\pm}(X,J_\pm Y,J_\pm Z)\right)
	\end{align*}
\end{proposition}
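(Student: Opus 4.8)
The plan is to reduce weak integrability to the parallel condition $\n^\pm J_\pm = 0$ and then unpack that identity using the explicit skew-torsion connections. First I would invoke Proposition \ref{prop:bismut_n_pm}: $(\GG,\JJ)$ is weakly integrable, i.e.\ $D\JJ = 0$, precisely when $\n^\pm J_\pm = 0$, where $\n^\pm = \lc \pm \tfrac{1}{2}\eta^{-1}H_b$ are the connections of \eqref{eq:n_pm}. The one piece of linear algebra I would record and use throughout is that, in both the chiral case ($J_\pm^2 = \id$, $\eta(J_\pm\cdot,J_\pm\cdot) = \eta$) and the anti-Hermitian case ($J_\pm^2 = -\id$, $\eta(J_\pm\cdot,J_\pm\cdot) = -\eta$), the endomorphism $J_\pm$ is $\eta$-self-adjoint, $\eta(J_\pm X, Y) = \eta(X, J_\pm Y)$, and consequently $(\lc_X J_\pm)J_\pm = -J_\pm(\lc_X J_\pm)$.

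Next I would expand the parallel condition by the Leibniz rule,
\[
(\n^\pm_X J_\pm)Y = (\lc_X J_\pm)Y \pm \tfrac{1}{2}\big[(\eta^{-1}H_b)(X,J_\pm Y) - J_\pm(\eta^{-1}H_b)(X,Y)\big],
\]
pair with $\eta(\cdot,Z)$, and use the self-adjointness of $J_\pm$ to rewrite the two torsion terms as $H_b(X,J_\pm Y,Z)$ and $H_b(X,Y,J_\pm Z)$. Setting the result to zero yields exactly the first displayed identity, with $F_\pm(X,Y,Z) = \eta((\lc_X J_\pm)Y,Z)$. This establishes the equivalence between weak integrability and the fundamental-tensor formula cleanly and uniformly for both geometries.

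For the equivalent reformulation I would argue in two parts. For the $\mathcal{W}_3$ claim I would observe that the right-hand side of the fundamental-tensor formula is built from the $3$-form $H_b$, so all its $\eta$-traces vanish; this kills the Lee-type (trace) components of $F_\pm$, and together with the symmetry $F_\pm(X,J_\pm Y,Z) = -F_\pm(X,Y,J_\pm Z)$ coming from $(\lc J_\pm)J_\pm = -J_\pm(\lc J_\pm)$ this places $(\eta,J_\pm)$ in the class $\mathcal{W}_3$ of Appendix \ref{sec:chiral/antiH}. For the Nijenhuis relation I would start from the torsion-free expression
\[
N_\pm(X,Y,Z) = F_\pm(J_\pm X,Y,Z) - F_\pm(J_\pm Y,X,Z) - F_\pm(X,Y,J_\pm Z) + F_\pm(Y,X,J_\pm Z),
\]
substitute the formula for $F_\pm$, and simplify with the skewness of $H_b$ and $J_\pm^2 = \pm\id$. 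The pure $(3,0)_\pm + (0,3)_\pm$ part of $H_b$ then cancels identically, and the remaining $(2,1)_\pm+(1,2)_\pm$ part assembles into $N_\pm(X,Y,Z) = \pm 2\big(H_b^{(2,1)_\pm+(1,2)_\pm}(J_\pm X,Y,J_\pm Z) + H_b^{(2,1)_\pm+(1,2)_\pm}(X,J_\pm Y,J_\pm Z)\big)$. The converse is immediate, since the $\mathcal{W}_3$ condition fixes the type of $F_\pm$ while the Nijenhuis relation fixes its value, together recovering the fundamental-tensor formula. Alternatively, the whole reformulation follows at once by applying the chiral/anti-Hermitian analogue of Proposition \ref{prop:n_(para)herm} (Appendix \ref{sec:chiral/antiH}) to $\n^\pm J_\pm = 0$, exactly as Proposition \ref{prop:weak_int_iso} was deduced in the isotropic case.

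The main obstacle is this last step: extracting the $(2,1)_\pm+(1,2)_\pm$ projection in the Nijenhuis formula. After substituting $F_\pm$, one is left with double insertions $H_b(J_\pm X,J_\pm Y,Z)$, $H_b(J_\pm X,Y,J_\pm Z)$ and $H_b(X,J_\pm Y,J_\pm Z)$; since the $(3,0)_\pm+(0,3)_\pm$ and $(2,1)_\pm+(1,2)_\pm$ parts of a $3$-form scale differently under such double insertions (as one checks by expanding $H_b((\id+tJ_\pm)X,(\id+tJ_\pm)Y,(\id+tJ_\pm)Z)$ and reading off the $t^2$ coefficient on each pure type), the careful sign bookkeeping through the type decomposition is precisely what makes the pure $(3,0)_\pm+(0,3)_\pm$ contributions drop out and leaves only the projected terms. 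In practice I would package this bookkeeping in the appendix and cite it, rather than reproduce the index computation inline.
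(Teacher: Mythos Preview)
Your approach is essentially the paper's: reduce via Proposition~\ref{prop:bismut_n_pm} to $\n^\pm J_\pm=0$, then unpack this using the skew-torsion connections. The paper does exactly this in one line by citing Proposition~\ref{prop:DJ} from Appendix~\ref{sec:chiral/antiH} with the substitution $J\mapsto J_\pm$, $h\mapsto \pm H_b$; you instead reprove the content of that proposition inline (and you even note at the end that citing the appendix result would suffice). So the strategy is identical.

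There is one genuine slip in your $\mathcal{W}_3$ argument. In this paper $\mathcal{W}_3$ is \emph{defined} by the cyclic-sum condition $\sum_{\mathrm{cycl}}F(X,Y,Z)=0$ (Definition in Appendix~\ref{sec:chiral/antiH}), not by vanishing of Lee-type traces together with the symmetry $F(X,J\cdot,\cdot)=-F(X,\cdot,J\cdot)$. Your trace argument does not verify the paper's condition. The correct (and much shorter) step is the one contained in the proof of Proposition~\ref{prop:DJ}: once $F_\pm(X,Y,Z)=\mp\tfrac{1}{2}\big(H_b(X,J_\pm Y,Z)-H_b(X,Y,J_\pm Z)\big)$, the cyclic sum vanishes identically because $H_b$ is a $3$-form, so each term $H_b(\cdot,J_\pm\cdot,\cdot)$ pairs off with a cyclic partner of opposite sign. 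With that fix, the rest of your Nijenhuis computation and the converse go through as in Proposition~\ref{prop:DJ}.
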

\begin{proof}
The proof follows directly from the statement of Propositions \ref{prop:bismut_n_pm} and \ref{prop:DJ}, by replacing $J$ by $J_\pm$ and, simultaneously, $h$ by $\pm H_b$. 
\end{proof}

The properties of the fundamental tensor $F$ \eqref{eq:F_properties} imply that $H_b$ determines all non-zero components of $F$. Furthermore, in contrast to the G(p)K geometry where the weak integrability relates all components of $H_b$ to components of the fundamental forms $\omega_\pm$ and integrability of the tangent bundle structures is controlled by the $(3,0)+(0,3)$ parts, in the generalized chiral (anti-K\"ahler) case the weak integrability condition $D \JJ_\pm=0$ only fixes the $(2,1)+(1,2)$ components of $H_b$, which are also the components tied to integrability of the tangent bundle structures.
%
We therefore introduce the following definition of the (full) integrability for the non-isotropic case:
\begin{Def}
Let $(\GG,\JJ_+)$ be an almost generalized chiral (anti-K\"ahler) structure with $\JJ_-=\GG\JJ_+$. We say $(\GG,\JJ_+)$ is \textbf{integrable} when it is weakly integrable and the generalized torsion of the Bismut connection of $\GG$ is of type $(3,0)+(0,3)$ with respect to both $\JJ_\pm$.
\end{Def}

We then have the following statement, which follows from Corollary \ref{cor:W_0}: 
\begin{theorem}\label{theo:integrabilityGChGaK}
An almost generalized chiral (respectively, anti-K\"ahler) structure $(\GG,\JJ_+)$ is integrable if and only if the corresponding tangent bundle data $(\eta,J_\pm)$ are type ${\cal W}_0$ chiral 
(respectively, anti-Hermitian) structures.
\end{theorem}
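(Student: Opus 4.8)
The plan is to unwind the definition of integrability for $(\GG,\JJ_+)$ into its two defining conditions, translate each into an equivalent statement about the tangent bundle data $(\eta,J_\pm)$ (here $\eta$ denotes the metric underlying $\GG=\GG(\eta,b)$, which is the metric $\HH$ in the chiral case), and then read off the conclusion from the tangent bundle equivalence recorded in Corollary \ref{cor:W_0}. By the definition preceding the theorem, $(\GG,\JJ_+)$ is integrable exactly when (i) it is weakly integrable, $D^\GG\JJ_\pm=0$, and (ii) the generalized torsion $T^D$ is of type $(3,0)+(0,3)$ with respect to both $\JJ_\pm$. I would treat these two conditions in turn.

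For condition (i), Proposition \ref{prop:bismut_n_pm} gives immediately that $D^\GG\JJ_\pm=0$ is equivalent to $\n^\pm J_\pm=0$, where $\n^\pm=\lc\pm\tfrac12\eta^{-1}H_b$ are the connections \eqref{eq:n_pm} attached to $\GG$. Proposition \ref{prop:weak_int_chiral} then spells out what this parallelism costs: $(\eta,J_\pm)$ is forced to be of type $\mathcal{W}_3$, with its fundamental tensor $F_\pm(X,Y,Z)=\eta((\lc_XJ_\pm)Y,Z)$ — which for a chiral or anti-Hermitian structure is purely of type $(2,1)+(1,2)$ — given by $F_\pm(X,Y,Z)=\mp\tfrac12\big(H_b(X,J_\pm Y,Z)-H_b(X,Y,J_\pm Z)\big)$. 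For condition (ii), the Lemma relating the $(k,l)+(l,k)$ components of $T^D$ and of $H_b$ (stated just before the definition of integrability) shows that requiring $T^D$ to be of type $(3,0)+(0,3)$ with respect to $\JJ_\pm$, i.e. that its $(2,1)+(1,2)$ components vanish, is equivalent to the vanishing of the $(2,1)+(1,2)$ components of $H_b$ with respect to $J_\pm$.

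Combining the two translations, integrability of $(\GG,\JJ_+)$ is equivalent to $\n^\pm J_\pm=0$ together with $H_b^{(2,1)_\pm+(1,2)_\pm}=0$. A short eigenspace computation — the operator sending $H_b$ to $H_b(X,J_\pm Y,Z)-H_b(X,Y,J_\pm Z)$ annihilates the $(3,0)+(0,3)$ part of $H_b$ and acts as a nonzero multiple of the identity on the $(2,1)+(1,2)$ part (using $J_\pm^\top=J_\pm$ in the chiral case, and $\pm i$ on the complexified eigenspaces in the anti-Hermitian case) — shows that the right-hand side of the $F_\pm$ formula is exactly twice the $(2,1)+(1,2)$ component of $H_b$. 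Hence once condition (ii) holds, $F_\pm=0$, which means $\lc J_\pm=0$, i.e. $(\eta,J_\pm)$ is of type $\mathcal{W}_0$; this is the forward direction. The converse, that $\mathcal{W}_0$ returns both conditions, together with the chiral/anti-Hermitian case distinction, is precisely the equivalence packaged by Corollary \ref{cor:W_0}, applied with $J=J_\pm$, the connection $\n^\pm$, and flux $\pm H_b$. The genuinely substantive input — and the main obstacle — is this tangent bundle equivalence: one must verify that the fundamental tensor, being constrained to type $(2,1)+(1,2)$, is pinned down by exactly those flux components that the torsion condition kills, so that weak integrability and the $(3,0)+(0,3)$ torsion condition together leave no room for a nonzero $F_\pm$ and thereby collapse $\n^\pm J_\pm=0$ to $\lc J_\pm=0$. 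Everything else is bookkeeping with the $(p,q)$-decompositions induced by $J_\pm$ and with the identification of the metric underlying $\GG$.
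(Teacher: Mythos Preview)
Your argument is correct and follows the paper's approach: reduce integrability to $\n^\pm J_\pm=0$ together with $H_b$ of type $(3,0)+(0,3)$ via Proposition~\ref{prop:bismut_n_pm} and the torsion-type Lemma, then invoke Corollary~\ref{cor:W_0}. The detour through Proposition~\ref{prop:weak_int_chiral} and the eigenspace computation is redundant (Corollary~\ref{cor:W_0} already packages both directions once the translation is made), and the operator $h\mapsto h(X,J_\pm Y,Z)-h(X,Y,J_\pm Z)$ is not literally a scalar multiple of the identity on the $(2,1)+(1,2)$ part --- but all you actually need is that it annihilates the $(3,0)+(0,3)$ part, which is correct and is exactly what the remark preceding Corollary~\ref{cor:W_0} records.
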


\begin{Ex}
For examples of integrable generalized chiral structures, consider the Born structures of hyperK\"ahler-type given in example \ref{ex:BornHK}. Similarly, anti-hyperK\"ahler manifolds (see Appendix \ref{sec:anti-HH})
are examples of integrable anti-K\"ahler manifolds.
\end{Ex}

\section{Physical Interpretation and Relationship to Supersymmetry}
In this section we explain how the geometry introduced in this paper appears in physics in the context of 2D supersymmetric non-linear sigma models. The discussion is directed mostly at physicists and therefore we use physical terminology and introduce several objects customarily used in physics without explaining their exact meanings and definitions. For basics of supersymmetry (SUSY) and other details the reader can consult for example \cite{deligne1999quantum}.

\subsection{Twisted (2,2) SUSY and GpK Geometry}
In this subsection we explain how GpK geometry naturally appears in $2D$ $(2,2)$ supersymmetric sigma models, more concretely in {\it twisted} supersymmetric models introduced by Hull and Abou-Zeid in \cite{Hull_twisted_susy}. We compare this with the well-known story about how GK geometry appears in the usual $(2,2)$ supersymmetry. For a classic reference for the usual $(2,2)$ SUSY sigma models see \cite{Gates-hull-rocek-biherm} or a thesis \cite{susy_topological_thesis} containing many useful calculations.

We start by considering the general $(1,1)$ SUSY sigma model given by the action
\begin{align}\label{eq:(1,1)action}
S_{(1,1)}(\phi)=\int_{\hat{\Sigma}}[g(\phi)+b(\phi)]_{ij}D_+\phi^iD_-\phi^j,
\end{align}
where $\phi=(\phi_i)_{i=1\cdots n}$ are {\it fields}, i.e. maps $\phi: \hat{\Sigma} \rightarrow (M,g)$, where $\hat{\Sigma}$ is a super-Riemann surface with two formal odd directions, $(\theta_1,\theta_2)$, $(M,g)$ is (for now) arbitrary pseudo-Riemannian manifold and $b$ denotes a local two-form. $(1,1)$ supersymmetry means that this action is invariant under transformations generated by the two {\it supercharges} $Q_\pm$ obeying the supercommutation relations
\begin{align}\label{eq:susy}
\{Q_\pm,Q_\pm\}=P_\pm,
\end{align}
where $P_\pm$ are generators of translations.

The idea now is to study under which conditions the action $S_{(1,1)}$ admits additional supersymmetries. It turns out that this puts severe restrictions on the geometry of $M$. In particular, if we are to extend the supersymmetry to $(2,2)$, $(M,g)$ is necessarily a GK manifold. If the $(1,1)$ supersymmetry is to be extended to a {\it twisted} $(2,2)$ supersymmetry, the target manifold needs to be a GpK manifold, which we now explain in more detail.

In \cite{Hull_twisted_susy}, the authors derive that the $(2,2)$ twisted SUSY is equivalent to a bi-para-Hermitian geometry $(\eta,K_\pm)$ on the target $M$, which is then by the results of Section \ref{sec:GpK} equivelent to a GpK geometry. Here we briefly recall the arguments presented in \cite{Hull_twisted_susy} to show how the structures $K_\pm$ appear.

The extension of the $(1,1)$ SUSY to $(2,2)$ is  necessarily generated by the following transformations of the fields
\begin{align}\label{eq:extended_susy}
\delta\phi^i=\epsilon_+ (K_+)^i_jD_+\phi^j+\epsilon_- (K_-)^i_jD_-\phi^j,
\end{align}
for some (for now unspecified) tensors $K_\pm$. The requirement that the action \eqref{eq:(1,1)action} is invariant under this transformation forces the compatibility between $(g+b)$ and $K_\pm$:
\begin{align*}
g(K_\pm\cdot,\cdot)+g(\cdot,K_\pm\cdot)&=0\\
b(K_\pm\cdot,\cdot)+b(\cdot,K_\pm\cdot)&=0,
\end{align*}
along with the condition
\begin{align*}
\n^\pm K_\pm=0,
\end{align*}
where $\n^\pm$ are the connection defined in \eqref{eq:n_pm},
\begin{align*}
\n^\pm=\lc\pm \frac{1}{2}H.
\end{align*}
Here $\lc$ is the Levi-Civita connection of $g$ and $H$ is a closed global three-form, such that $b$ is locally its potential, $\rd b=H$\footnote{The expression \eqref{eq:(1,1)action} is local, which is the reason why the local two-form $b$ appears}. The additional requirement that the transformations \eqref{eq:extended_susy} indeed extend \eqref{eq:susy} to a $(2,2)$ twisted supersymmetry is equivalent to the conditions
\begin{align*}
K_\pm^2=\id\quad N_{K_\pm}=0,
\end{align*}
rendering $(g,K_\pm,b)$ a bi-para-Hermitian geometry, or equivalently, $M$ to be a GpK manifold.

When we require that the theory is {\it parity-symmetric}, we find that the $b$-field term in \eqref{eq:(1,1)action} has to vanish and additionally $K_+=K_-=T$, which gives the para-K\"ahler limit of the geometry. Additionally, one might require additional supersymmetry, which requires additional para-complex structure that anti-commutes with $T$, which is therefore desribed by the para-hyper-K\"ahler limit of GpK geometry. Various other heterotic $(p,1)$ supersymmetries can be realized as well, all as special cases of the GpK geometry.

We conclude this section by remarking that the integrability of $K_\pm$ can be relaxed \cite{SUSY_non-integrable}, giving the GpK (or GK in the case of usual SUSY) geometries which are only integrable in the weaker sense introduced in Section \ref{sec:bismut_integr}.

\subsection{(1,1) Superconformal Algebra and Generalized Chiral Geometry}
In \cite{Stojevic:2009ub}, it has been shown that the chiral geometry also plays an important role in introducing additional symmetries to $(1,1)$ sigma models \eqref{eq:(1,1)action}. While pairs of Hermitian and para-Hermitian structures naturally arise when considering an extended supersymmetry, pairs of chiral structures have different physical interpretation in terms of sigma models -- they correspond to introduction of additional copies of the $(1,1)$ superconformal algebra. Here we briefly review the results of \cite{Stojevic:2009ub}.

Consider a sigma model on a target $(M,g)$ given by the action \eqref{eq:(1,1)action}. For every such sigma model, there are so-called superconformal symmetries stemming from the fact that $M$ carries the metric $g$. The symmetries close to form an algebra, called a superconformal algebra. Now, it is shown in \cite{Stojevic:2009ub} that when $M$ admits two (almost-)project structures $J_\pm$ orthogonal with respect to $g$, that are also covariantly constant with respect to $\n^\pm$ \eqref{eq:n_pm},
\begin{align}\label{eq:bichiral}
g(J_\pm\cdot,J_\pm\cdot)=g,\quad \n^\pm J_\pm=0,
\end{align}
one can introduce additional symmetries $\delta_{P_\pm}$ and $\delta_{Q_\pm}$ associated to\footnote{We will not explain here how the symmetries are associated to the projectors $P_\pm$ and $Q_\pm$; we merely remark that the projectors are the only additional geometrical data entering the definitions of $\delta_{P_\pm}$ and $\delta_{Q_\pm}$.} the $+1$ and $-1$ projectors $P_\pm$ and $Q_\pm$, respectively
\begin{align*}
P_\pm=\frac{1}{2}(\id+ J_\pm),\quad Q_\pm=\frac{1}{2}(\id- J_\pm).
\end{align*}
The symmetries $\delta_{P_\pm}$ and $\delta_{Q_\pm}$ then form copies of the $(1,1)$ superconformal algebra. The conditions \eqref{eq:bichiral} are the only conditions on the tensors $J_\pm$, in particular there are no further requirements on integrability of $J_\pm$. By results of Section \ref{sec:gen_chiral} and Proposition \ref{prop:bismut_n_pm}, this means that $(J_\pm,g,b)$ defines a generalized chiral structure that is weakly integrable.

Because the additional symmetries $\delta_{P_\pm}$ and $\delta_{Q_\pm}$ form a superconformal algebra even when $J_\pm$ are not integrable, they lack a spacetime description in terms of a corresponding Riemannian manifold, contrary to the original algebra associated to $(M,g)$. The author of \cite{Stojevic:2009ub} then relates this fact to the existence of non-geometric string backgrounds.

\section{Conclusions and Outlook}
In this paper, we explored the properties of various commuting pairs of generalized structures, which was a natural next step in the study of generalized geometry. We showed that GpK geometry is -- as expected -- the para-complex analog of GK geometry and proved the basic GpK results that are well known in the GK case. Note that the similarity between GK and GpK geometries is also manifested in physics in the study of supersymmetric sigma models.

We also showed that Born geometry naturally fits into the framework of commuting pairs as the anti-commuting subcase of generalized chiral geometry. Additionally, we were able to circumvent the well-known issue of missing integrability conditions for non-isotropic generalized structures and define the integrability of non-isotropic commuting pairs in terms of the generalized Bismut connection, recovering results analogous to the isotropic case.

There are several natural directions for continuing this work. The first is to further explore GpK geometry and to further reproduce well-established results from GK geometry to the GpK setting. This includes the study of the para-holomorphic reduction of GpK geometry, which could yield an elegant way of imposing the ``section condition" -- one of the important problems in Double Field Theory -- for the corresponding para-Hermitian structures. Furthermore, GpK geometry can be used to further the results in \cite{Hull_twisted_susy} and study the twisted supersymmetric sigma models. In forthcoming work \cite{toappear}, the topological twists of such sigma models are explored. Nonetheless, there are still many questions to answer beyond this, particularly the notion of mirror symmetry for these sigma models, the relationship with T-duality on the underlying para-Hermitian manifolds as well as with the usual mirror symmetry, both its SYZ and homological incarnations.

Secondly, the realization of Born geometry in terms of the generalized chiral structures opens up a new point of view on this geometry. From the physics point of view, Born geometry is typically seen as an underlying (almost) para-Hermitian structure along with a choice of a metric structure on one of its eigenbundles. In \cite{Freidel:2018tkj}, it is shown that this fully determines a Born geometry and any Born structure is of this form. The para-Hermitian structure then gives rise to the {\it T-duality frame}, which is a local splitting of the base manifold, that represents the {\it extended space-time}), into the usual space-time with a metric (along the $+1$-eigendirections of the para-Hermitian structure) and its T-dual counterpart (along the $-1$-eigendirections). However, from the generalized chiral structures viewpoint, it is natural to see Born geometry as a pair of anti-commuting chiral structures and, in particular, the integrability conditions are formulated this way. The immediate question is therefore: what is the physical interpretation of this bi-chiral geometry and what do the integrability conditions imply for physics?

Lastly, after describing the commuting pairs, the natural next step is to discuss pairs of commuting pairs of generalized structures. This has been already done for pairs of GK structures that anti-commute, that is, two sets of commuting pairs of generalized complex structures $\II_\pm$, $\II'_\pm$ such that
\begin{align*}
\{\II_\pm,\II'_\pm\}=0.
\end{align*}
This yields generalized hyper-K\"ahler geometry \cite{Bredthauer:2006sz}, or equivalently bi-hyper-Hermitian geometry on the tangent bundle, describing targets of $(4,4)$ supersymmetric sigma models. One can therefore expect that the para-complex story will be similar: by considering a pair of GpK structures that mutually anticommute, that is, a generalized para-Hyper-K\"ahler geometry, one recovers a bi-para-hyper-Hermitian geometry on the tangent bundle, describing the targets of $(4,4)$ twisted supersymmetric sigma models. One can also consider a GpK structure $\KK_\pm$ and a generalized chiral structure $\JJ_\pm$ such that $\{\KK_\pm,\JJ_\pm\}=0$. This on the tangent bundle yields a pair of Born geometries. Clearly, there is a considerable amount of various combinations of commuting pairs that will potentially yield interesting geometries and results.

\appendix
\section{Product and para-complex structures}

	\subsection{Definitions}
	\label{sec:paracpx/product}	
	In this section, we briefly review important notions in para-complex geometry. For more details, see for example \cite{Cortes:2010ykx,Cortes:2003zd}, or the survey on paracomplex geometry \cite{Cruceanu} and the references therein.
	\begin{Def}\label{def:paracpx}
		An \textbf{almost product structure} on a smooth manifold $\PS$ is a smooth endomorphism $K\in \Endo(T\PS)$ that squares to the identity: $K^2=\id_{T\PS}$. An \textbf{almost para-complex structure} is a product structure $K$ whose $+1$- and $-1$-eigenbundles, denoted $T^{(1,0)}$ and $T^{(1,0)}$, respectively, have the same rank. Finally, an \textbf{almost product/para-complex manifold} is a manifold $\mathcal{P}$ endowed with a product/para-complex structure $K$, which we denote $(\mathcal{P},K)$.
	\end{Def}
	A direct consequence of above definition is that any para-complex manifold is even-dimensional. The use of the word \textbf{almost} as usual refers to integrability of the endomorphism, that is, whether its eigenbundles are involutive under the Lie bracket and therefore define a foliation of the underlying manifold. Similarly to the complex case, the integrability of an almost product/para-complex structure $K$ is governed by the \textbf{Nijenhuis tensor $N_K$}, which is given by
	\begin{align}\label{eq:nijenhuis}
	\begin{aligned}
	N_K(X,Y) := & \; [X,Y]+[KX,KY]-K([KX,Y]+[X,KY])\\
	=& \; (\n_{KX}K)Y+(\n_XK)KY-(\n_{KY}K)X-(\n_YK)KX\\
	=& \; 4(P_\pm[P_\mp X,P_\mp Y]+P_\mp[P_\pm X,P_\pm Y]),
	\end{aligned}
	\end{align}
	for all $X,Y \in \Gamma(T\mathcal{P})$, where $\n$ is any torsionless connection and $P_\pm\coloneqq\frac{1}{2}(\id\pm K)$ is projection onto the $\pm 1$-eigenbundle. Moreover, we say that $K$ is \textbf{integrable} and call it a \textbf{product/para-complex structure} if $N_K=0$, in which case, $(\mathcal{P},K)$ is a \textbf{product/para-complex manifold}. From \eqref{eq:nijenhuis}, it is apparent that $K$ is integrable if and only if {\it both} its eigenbundles are simultaneously Frobenius integrable (that is, involutive distributions in $T\PS$); the integrability of one of the eigenbundles is, however, not tied to the integrability of the other. This is one of the main differences between complex geometry and para-complex geomery: while in the complex case the eigenbundles are complex bundles related by complex conjugation, here the eigenbundles are real and therefore one can be integrable while the other is not. We call this phenomenon {\bf half-integrability}. More on this can be found for example in \cite{freidel2017generalised,Svoboda:2018rci} or \cite{Hassler:2019wvn}, where examples of half-integrable para-complex structures motivated by physics are given.

\subsection{Adapted coordinates and the Dolbeault complex}
	
	Let now $(\mathcal{P},K)$ be an almost para-complex manifold. If $K$ is integrable, we get a set of $2n$ coordinates $(x^i,\tilde{x}_i)$ called {\bf adapted coordinates}, $\mathcal{P}$ locally splits as $M\times \tilde{M}$, and $K$ acts as identity on $TM=T^{(1,0)}$ and negative identity on $T\tilde{M}=T^{(0,1)}$ (see for example \cite{Cortes:2003zd}). The splitting of the tangent bundle gives rise to a decomposition of tensors analogous to the $(p,q)$-decomposition in complex geometry. Denote $\Lambda^{(k,0)}(T^*\mathcal{P})\coloneqq \Lambda^k(L^*)$ and $\Lambda^{(0,k)}(T^*\mathcal{P})\coloneqq\Lambda^k(\Lt^*)$. The splitting is then
	\begin{align}\label{eq_plusminus_decomp}
	\Lambda^k (T^*\mathcal{P})=\bigoplus_{k=m+n}\Lambda^{(m,n)}(T^*\mathcal{P}),
	\end{align}
	with corresponding sections denoted as $\Omega^{(m,n)}(\mathcal{P})$. The bigrading \eqref{eq_plusminus_decomp} yields the natural projections
	\begin{align*}
	\Pi^{(p,q)}:\Lambda^k(T^*\mathcal{P})\rightarrow \Lambda^{(p,q)}(T^*\mathcal{P}),
	\end{align*}
	so that the de-Rham differential splits as $\rd=\p^{(1,0)}+\p^{(0,1)}$, where
	\begin{align*}
	&\p^{(1,0)} \coloneqq \Pi^{(p+1,q)}\circ \rd\\
	&\p^{(0,1)} \coloneqq \Pi^{(p,q+1)}\circ \rd,
	\end{align*}
	are the \textbf{para-complex Dolbeault operators}, acting on forms as
	\begin{align}\label{eq:partials_plusminus}
	\begin{aligned}
	\p^{(1,0)}&:\Omega^{(p,q)}(\mathcal{P})\rightarrow \Omega^{(p+1,q)}(\mathcal{P})\\
	\p^{(0,1)}&:\Omega^{(p,q)}(\mathcal{P})\rightarrow \Omega^{(p,q+1)}(\mathcal{P}),
	\end{aligned}
	\end{align}
	such that when $K$ is integrable, we have
	\begin{align*}
	(\p^{(0,1)})^2= 0, (\p^{(1,0)})^2=0, \text{ and } \p^{(1,0)}\p^{(0,1)}+\p^{(0,1)}\p^{(1,0)}=0.
	\end{align*}
	
	We also introduce the {\it twisted differential $\rd^p\coloneqq(\Lambda^{k+1}K)\circ\rd\circ (\Lambda^kK)$}:
	\begin{lemma}
		Let $(\PS,K)$ be a paracomplex manifold. Then $\rd^p\coloneqq(\Lambda^{k+1}K)\circ\rd\circ (\Lambda^kK)$ can be expressed as
		\begin{align}\label{eq:dp-operator}
		\rd^p=\p^{(1,0)}-\p^{(0,1)}.
		\end{align}
	\end{lemma}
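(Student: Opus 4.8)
The plan is to verify the identity on a bihomogeneous form and simply track signs. Fix $\beta\in\Omega^{(p,q)}(\PS)$ with $p+q=k$. Since $(\PS,K)$ is a para-complex (hence integrable) manifold, the de Rham differential splits into only two bihomogeneous pieces, $\rd\beta=\p^{(1,0)}\beta+\p^{(0,1)}\beta$, with $\p^{(1,0)}\beta\in\Omega^{(p+1,q)}(\PS)$ and $\p^{(0,1)}\beta\in\Omega^{(p,q+1)}(\PS)$. It is precisely here that integrability enters, ruling out the $(p+2,q-1)$ and $(p-1,q+2)$ pieces that would otherwise occur for a merely almost para-complex structure.

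The first step is to determine how $\Lambda^k K$ acts on a $(p,q)$-form. For $\alpha\in\Lambda^{(1,0)}(T^*\PS)=L^*$ and any $X\in T^{(1,0)}$ one has $(K^*\alpha)(X)=\alpha(KX)=\alpha(X)$, while $\alpha$ annihilates $T^{(0,1)}$; hence $K^*\alpha=\alpha$, so $\Lambda^{(1,0)}(T^*\PS)$ is the $(+1)$-eigenbundle of $K^*$ and, dually, $\Lambda^{(0,1)}(T^*\PS)$ is its $(-1)$-eigenbundle. Since $\Lambda^k K$ is the induced endomorphism of $\Lambda^k T^*\PS$ (and $K^{-1}=K$, so there is no ambiguity in how $K$ is extended to forms), it acts on $\Omega^{(p,q)}(\PS)$ by the scalar $(+1)^p(-1)^q=(-1)^q$.

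With this in hand the second step is a direct computation. Applying $\Lambda^k K$ to $\beta$ gives $(-1)^q\beta$, and then
\begin{align*}
\rd^p\beta=(\Lambda^{k+1}K)\,\rd\big((-1)^q\beta\big)=(-1)^q(\Lambda^{k+1}K)\big(\p^{(1,0)}\beta+\p^{(0,1)}\beta\big).
\end{align*}
Since $\p^{(1,0)}\beta$ has bidegree $(p+1,q)$ and $\p^{(0,1)}\beta$ has bidegree $(p,q+1)$, the outer factor $\Lambda^{k+1}K$ acts on them by $(-1)^q$ and $(-1)^{q+1}$ respectively, so that
\begin{align*}
\rd^p\beta=(-1)^q\big((-1)^q\p^{(1,0)}\beta+(-1)^{q+1}\p^{(0,1)}\beta\big)=\p^{(1,0)}\beta-\p^{(0,1)}\beta.
\end{align*}
As $(p,q)$ was arbitrary and every form decomposes into bihomogeneous components on which both sides are linear, this establishes $\rd^p=\p^{(1,0)}-\p^{(0,1)}$.

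There is no serious obstacle here; the computation is elementary once the eigenvalue bookkeeping is set up. The only two points requiring care are fixing the convention for the action of $\Lambda^k K$ on forms (obtaining the scalar $(-1)^q$ rather than $(-1)^p$, the latter of which would flip the overall sign) and invoking integrability so that $\rd$ contributes only the components $\p^{(1,0)}$ and $\p^{(0,1)}$; without it the extreme pieces survive and the right-hand side acquires two further terms.
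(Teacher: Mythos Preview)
Your proof is correct and follows essentially the same approach as the paper's: both fix a bihomogeneous form, note that $\Lambda^k K$ acts on $\Omega^{(p,q)}$ by the scalar $(-1)^q$, apply this to the inner and outer factors, and read off the signs. Your version is simply more explicit about why the eigenvalue is $(-1)^q$ and where integrability is used, but the argument is the same.
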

	\begin{proof}
		Let $\ap \in \Omega^{m,n}(\PS)$. Then we have
		\begin{align*}
		\rd^p\ap=(-1)^n(\Lambda^kK)\rd\ap=(-1)^{2n}\p^{(1,0)} \ap +(-1)^{2n+1}\p^{(0,1)}\ap=(\p^{(1,0)}-\p^{(0,1)})\ap,
		\end{align*}
	\end{proof}
\subsection{Para-holomorphic functions and bundles}
We will now explore the para-holomorphic structure of para-complex manifolds, and give important examples of para-holomorphic vector bundles.

We start with the natural definition of a para-Holomorphic map between para-complex vector spaces.
\begin{Def}
	Let $(M,K_M)$ and $(N,K_N)$ be para-complex manifolds. A map $f:M\rightarrow N$ is called para-holomorphic if
	\begin{align*}
	K_N\circ f_*=f_*\circ K_M
	\end{align*}
\end{Def}

Locally, the definition means the following. Let $V$ and $W$ be $2n$- and $2m$-dimensional vector spaces, respectively. Choose the respective adapted bases for $V$ and $W$ as $\{v_i,\tl{v}^j\}_{i,j=1\cdots n}$, $\{v_k,\tl{v}^l\}_{k,l=1\cdots m}$, so that $K_V$ and $K_W$ take the diagonal forms
$\begin{pmatrix}
\id & 0 \\
0 & -\id
\end{pmatrix}$. 
It is easy to check that a para-holomorphic map $f:V\rightarrow W$ then takes the form
\begin{align*}
f=(w_1(v_i),\cdots,w_m(v_i),\tl{w}^1(\tl{v}^i),\cdots,\tl{w}^m(\tl{v}^i)),
\end{align*}
i.e. the first $m$ components of $f$ are independent of the $\tl{v}$ variables, while the remaining last $m$ components are independent of the $v$ variables, meaning $(w_i,\tl{w}^j)$ satisfy the para-complex Cauchy-Riemann equations:
\begin{align}\label{eq:CR}
\frac{\p}{\p v_i} \tl{w}^j=\frac{\p}{\p \tl{v}^i} w_j=0,\quad \forall i=1,\cdots, n,\ j=1,\cdots, m.
\end{align}
Therefore, a para-holomorphic map $f$ of para-complex manifolds $(M,K_M)$ and $(N,K_N)$ takes on each pair of patches $U\subset M$ and $V\subset N$ the local form
\begin{align*}
f:\RR^{2n}\rightarrow \RR^{2m}:\ f=(y(x),\yt(\xt)),
\end{align*}
where $(x,\xt):U\rightarrow \RR^{2n}$ and $(y,\yt): V\rightarrow \RR^{2m}$ are the local adapted coordinates on $U$ and $V$.

Now, because on a $2n$-dimensional para-complex manifold, the adapted coordinates $(x,\xt)$ patch into two separate foliations which can be seen as two $n$-dimensional manifolds, the coordinates along these manifolds transform among themselves, which means the transition functions $\phi_{UV}:U\mid_{U\cap V}\rightarrow V\mid_{U\cap V}$ between two patches $U$ and $V$ have to have the form $(x'(x),\xt'(\xt))$, $(x,\xt)$ and $(x',\xt')$ being the adapted coordinates on $U$ and $V$, respectively. The transition functions on a para-complex manifold are therefore easily seen to be para-holomorphic functions, and we call this the para-holomorphic structure of the manifold.

Let us now explore para-holomorphic vector bundles, starting from the following definition
\begin{Def}
	A para-holomorphic vector bundle $E\xrightarrow{\pi} M$ over a para-complex manifold $M$ with is a para-complex vector bundle (i.e. the fibers are para-complex vector spaces $V$), such that its transition functions $g_{UV}:U\cap V\rightarrow GL(V, K_V)$ are para-holomorphic maps. A para-holomorphic section of $E$ is a section of the projection $\pi$ that is a para-holomorphic map.
\end{Def}

Here, $(V,K_V)$ is an even-dimensional vector space with the diagonal para-complex structure $K_V=\begin{pmatrix}
\id & 0 \\
0 & -\id
\end{pmatrix}$
and $GL(V,K_V)$ is the structure group preserving $K_V$. Denoting by $V_\pm$ the eigenbundles of $K_V$, it is easy to see that $GL(V,K_V)$ is simply given by two two copies of $GL$ for each eigenbundle,  $GL(V, K_V) \cong GL(V_+) \times GL(V_-)$ and additionally that $GL(V,K_V)$ itself has a para-complex structure, that acts by identity on the first factor and by minus identity on the second factor.

\begin{Ex}[Tangent bundle of a para-complex manifold]\label{ex:parahol_TM}
	Let $(M,K)$ be a $2n$-dimensional para-complex manifold. Its tangent bundle $TM$ is a para-holomorphic vector bundle. We noted above that due to the para-holomorphic structure of $M$, the gluing functions between two patches $\phi_{UV}:U\mid_{U\cap V}\rightarrow V\mid_{U\cap V}$ take the form $(x,\xt)\mapsto (y(x),\tl{y}(x))$. The transition function for $TM$ is then given by the push-forward of   $\phi_{UV}$:
	\begin{align*}
	(\phi_{UV})_*: U\mid_{U\cap V}\times \RR^{2n}&\rightarrow V\mid_{U\cap V}\times \RR^{2n}\\
	\left(x^i,\xt_i,\frac{\p}{\p x^i}=\p_i,\frac{\p}{\p \xt_i}=\pt^i\right)&\mapsto \left(y^j(x^i),\tl{y}_j(\xt_i),\frac{\p y^j}{\p x^i}\frac{\p}{\p y^j},\frac{\p \tl{y}_j}{\p \xt_i}\frac{\p}{\p \tl{y}_j}\right),
	\end{align*}
	which can also be seen as a map $g_{UV}:U\cap V\rightarrow GL_n^+ \times GL_n^- \cong GL(V, K_V)$
	\begin{align*}
	g_{UV}:(x,\xt)\mapsto 
	\begin{pmatrix}
	\left(\frac{\p y(x)}{\p x}\right) & 0 \\
	0 & \left(\frac{\p \tl{y}(\xt)}{\p \xt}\right)
	\end{pmatrix},
	\end{align*}
	which is para-holomorphic with respect to the para-complex structure of $GL_n^+ \times GL_n^-$ that acts diagonally by $\pm \id$ on the copies $GL_n^\pm$.
	
	The reason $TM=T^{(1,0)}\oplus T^{(0,1)}$ is para-holomorphic is because we can see the eigenbundles $T^{(1,0)}$ and $T^{(0,1)}$ as tangent bundles of two foliations $\FF_\pm$: $T^{(1,0)}=T\FF_+$, $T^{(0,1)}=T\FF_-$, which can be understood as individual $n$-dimensional manifolds. It is therefore clear that each factor in the sum $T^{(1,0)}\oplus T^{(0,1)}$ transforms with transition functions  only depending on the coordinates of the corresponding foliation manifold.
	
	\begin{Ex}[Wedge powers of $T^{(1,0)}$ and $T^{(0,1)}$]\label{ex:parahol_wedgepowers}
		Consider now the vector bundle $E=\Lambda^{k}(T^{(1,0)})\oplus \Lambda^k(T^{(0,1)})$ with sections the poly-vector fields $\XX^{(k,0)+(0,k)}$ for some $1<k<n$ over a $2n$-dimensional para-complex manifold $(\PS,K)$. From the discussion in Example \ref{ex:parahol_TM} we can see that the transition functions of such bundle is going to be given by $g_{UV}=\Lambda^k\left(\frac{\p y(x)}{\p x}\right)\oplus \Lambda^k \left(\frac{\p \tl{y}(\xt)}{\p \xt}\right)$, acting on the basis vectors $(e_I,e^I)$,
		\begin{align*}
		e_I=\p_{i_1}&\w\cdots\w\p_{i_k},\  \tl{e}^I=\pt^{i_1}\w\cdots\w\pt^{i_k},\\
		0&<i_1<\cdots<i_k<n,\ I=1,\cdots,r= {n \choose k},
		\end{align*}
		where the fibre para-complex structure $K_E$ is given by
		\begin{align*}
		K_E(e_I)=e_I,\quad K_E(\tl{e}^I)=-\tl{e}^I.
		\end{align*}
		Again, such vector bundle is para-holomorphic for the same reasons $TM=L\oplus \Lt$ itself is para-holomorphic. Let $\sigma:M\rightarrow E$ be a section of the bundle $E$. Expanding $\sigma$ in a local coordinates, we get
		\begin{align*}
		\sigma=a(x,\xt)^{i_1\cdots i_r}\p_{i_1}\w\cdots\w\p_{i_r}+\tl{a}(x,\xt)_{j_1\cdots j_r}\pt^{j_1}\w\cdots\w\pt^{j_r}.
		\end{align*}
		It is easy to see that in order for $\sigma$ to be a holomorphic section, the coefficient functions $(a,\tl{a})$ have to satisfy
		\begin{align}\label{eq:parahol_section}
		\pt^i a(x,\xt)=\p_i \tl{a}(x,\xt)=0,
		\end{align}
		meaning the coefficient functions satisfy the para-complex Cauchy-Riemann equations \eqref{eq:CR}.
	\end{Ex}
\end{Ex}

\section{Structures with a compatible metric}
We have reviewed para-complex and product structures and now we will add into the discussion an appropriately compatible metric structure. For the compatibility we now have two options: either the case of {\bf para-Hermitian geometry}, where we have a para-complex structure anti-orthogonal with respect to the metric, $\eta(K\cdot,K\cdot)=-\eta$, in which case $\eta$ is necessarily of split signature $(n,n)$, or we can consider the case of {\bf chiral geometry}\footnote{Such structures are in mathematics literature typically called {\it pseudo-Riemannian almost product structures}, here we invoke a terminology from physics reflecting the fact that in string theory this type of structure determines the chiral right and left moving sectors.}, which consists of an (almost) product structure $J$ orthogonal with respect to a metric $g$, $g(J\cdot,J\cdot)=g$. In the present discussion, we will mostly encounter cases, where $J$ is para-complex but in general the eigenbundles of $J$ need not have same rank.

When we start with a complex structure $I$, on the other hand, we acquire either Hermitian geometry, for which the underlying metric structure $g$ is Riemannian and $I$ is orthogonal, $g(I\cdot,I\cdot)=g$, or {\bf anti-Hermitian geometry}, where $I$ is anti-orthogonal, $\eta(I\cdot,I\cdot)=-\eta$ and the metric structure $\eta$ need to be of split signature.

Being most studied, we start by briefly recalling main definitions of Hermitian geometry. For more details, we refer the reader for example to \cite{moroianu2007lectures}.

\begin{Def}
	Let $(\mathcal{P},g)$ be a pseudo-Riemannian manifold and $I$ an (almost) complex structure orthogonal with respect to $g$:
	\begin{align*}
	g(IX,IY)=g(X,Y)
	\end{align*}
	for all $X,Y\in \Gamma(T\mathcal{P})$. We call $(\mathcal{P},I,g)$ an \textbf{(almost) pseudo-Hermitian manifold}. If $g$ is positive-definite, we simply say that $(\mathcal{P},I,g)$ is an \textbf{(almost) Hermitian manifold}.
\end{Def}

For any almost pseudo-Hermitian structure $(g,I)$ on $\mathcal{P}$, the contraction $\omega:=gI$ is a non-degenerate two-form called the \textbf{fundamental form} of $(g,I)$.
\begin{Def}
	An (almost) pseudo-Hermitian $(\mathcal{P},I,g)$ manifold is called \textbf{(almost) pseudo-K\"ahler manifold} if its fundamental form $\omega$ is closed: $d\omega = 0$.
\end{Def}

We also recall the following properties of connections with fully skew torsion that are compatible with a Hermitian structure $(g,I)$, which were proven in \cite{Gualtieri:2003dx}:

\begin{proposition}\label{prop:n_herm}
	Let $(g,I)$ be an almost Hermitian structure with fundamental form $\omega=g I$. Also let $\lc$ be the Levi-Civita connection of $g$ and $h$ be a 3-form. 
	Set 
	\[\n^h=\lc+\frac{1}{2}g^{-1}h.\] This is a metric connection with torsion $g^{-1}h$ and the following are equivalent:  
	\begin{enumerate}
		\item 
		$\n^h I=0$.
		\item 
		$N_I = -4g^{-1}h^{(3,0)+(0,3)}$ and
		$\rd\omega^{(2,1)+(1,2)} = -ih^{(2,1)}+ih^{(1,2)}$.
	\end{enumerate}
\end{proposition}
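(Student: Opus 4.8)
The plan is to reduce the equivalence to a single tensor identity and then analyse it type by type with respect to $I$. First I would dispose of the elementary assertions: since $\lc$ is torsion-free, the torsion of $\n^h$ is $\n^h_XY-\n^h_YX-[X,Y]=g^{-1}h(X,Y,\cdot)$, which is fully skew because $h\in\Lambda^3$, and $\n^h g=0$ follows from $\lc g=0$ together with the antisymmetry of $h$ in its last two slots. Next I would isolate the obstruction $\n^hI$. Writing $\psi(X,Y,Z):=g((\lc_XI)Y,Z)$ and using that $I$ is skew-adjoint for $g$ (a consequence of $g(I\cdot,I\cdot)=g$), a direct computation gives
\begin{equation*}
g((\n^h_XI)Y,Z)=\psi(X,Y,Z)+\tfrac12\big(h(X,IY,Z)+h(X,Y,IZ)\big),
\end{equation*}
so that $\n^hI=0$ is equivalent to the single identity
\begin{equation*}
\psi(X,Y,Z)=-\tfrac12\big(h(X,IY,Z)+h(X,Y,IZ)\big),\qquad(\star)
\end{equation*}
and the whole problem becomes showing that $(\star)$ is equivalent to the two conditions in (2).

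The structural input that organises the analysis is that $\lc_XI$ anticommutes with $I$ (differentiate $I^2=-\id$), so $\psi(X,\cdot,\cdot)$ is of type $(2,0)+(0,2)$ in its last two arguments; upon complexifying and splitting the first slot into $(1,0)$ and $(0,1)$ parts, $\psi$ carries only components of total type $(3,0),(2,1),(1,2),(0,3)$, and the splitting into $(3,0)+(0,3)$ versus $(2,1)+(1,2)$ is therefore exhaustive. I would then record the two identities expressing the invariants of (2) through $\psi$. Using $(\lc_X\omega)(Y,Z)=\psi(X,Y,Z)$ (from $\omega=gI$ and $\lc g=0$) and the Levi-Civita formula for $\rd$, the cyclic sum recovers the exterior derivative,
\begin{equation*}
\rd\omega(X,Y,Z)=\psi(X,Y,Z)+\psi(Y,Z,X)+\psi(Z,X,Y),
\end{equation*}
while an alternating combination with $I$ inserted recovers the Nijenhuis tensor, in the schematic form
\begin{equation*}
g(N_I(X,Y),Z)=\psi(IX,Y,Z)-\psi(IY,X,Z)+\psi(X,Y,IZ)-\psi(Y,X,IZ).
\end{equation*}
These exhibit $g(N_I(\cdot,\cdot),\cdot)$ as an invertible linear image of $\psi^{(3,0)+(0,3)}$ (and show it is itself of pure type $(3,0)+(0,3)$), while $\rd\omega^{(2,1)+(1,2)}$ recovers $\psi^{(2,1)+(1,2)}$ through the cyclic map, which is invertible on that type.

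Finally I would project $(\star)$ onto its two type components. Feeding $(\star)$ into the Nijenhuis identity and using that inserting $I$ scales a $(1,0)$ slot by $i$ and a $(0,1)$ slot by $-i$ collapses the right-hand side to a multiple of $h^{(3,0)+(0,3)}$, yielding $N_I=-4g^{-1}h^{(3,0)+(0,3)}$; substituting $(\star)$ into the cyclic-sum identity and performing the same $I$-insertion count yields $\rd\omega^{(2,1)+(1,2)}=-ih^{(2,1)}+ih^{(1,2)}$. Conversely, because these two projections exhaust all types carried by $\psi$ and each projection determines the corresponding component of $\psi$ uniquely, reassembling them reconstructs $(\star)$ in full, giving the reverse implication. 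The step I expect to be the main obstacle is precisely this last bookkeeping: tracking the factors of $i$ generated by the $I$-insertions so as to pin down the exact constants ($-4$, and the signs on $\pm i$) in each type, together with verifying that the cyclic-sum map really is invertible on the $(2,1)+(1,2)$ component so that no information is lost in passing between $(\star)$ and condition (2).
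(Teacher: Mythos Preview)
The paper does not actually prove this proposition; it cites Gualtieri's thesis for it. It does, however, prove the para-Hermitian analogue (Proposition~\ref{prop:n_(para)herm}) in detail, and that is the natural comparison.

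Your approach matches that proof through the first half: the reduction to the single identity $(\star)$, the expression of $\rd\omega$ as the cyclic sum of $\psi$, and the four-term expression of $N_I$ in terms of $\psi$ are exactly what the paper writes down (with the obvious sign changes coming from $I$ being skew-adjoint rather than self-adjoint). The forward implication is then the same substitution argument in both.

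The converse is where you diverge. You argue abstractly: $\psi$ lives in $T^*\otimes(\Lambda^{2,0}\oplus\Lambda^{0,2})$, the Nijenhuis map is an isomorphism on the $(3,0)+(0,3)$ piece, and the cyclic-sum map is an isomorphism on the $(2,1)+(1,2)$ piece, so condition (2) pins down $\psi$ uniquely and hence forces $(\star)$. This is correct, and your dimension/injectivity worry about the cyclic map on $\Lambda^{0,1}\otimes\Lambda^{2,0}\to\Lambda^{2,1}$ is easily settled (it is just $\alpha\otimes\beta\mapsto\alpha\wedge\beta$, an isomorphism). The paper instead writes down an explicit inverse,
\[
\psi(X,Y,Z)=\tfrac12\big(\rd\omega(X,Y,Z)+\rd\omega(X,IY,IZ)-N(Y,IZ,X)\big),
\]
and then substitutes the hypotheses of (2) directly to recover $(\star)$. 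This buys you exactly what you flagged as the main obstacle: no type-by-type bookkeeping of $i$-factors and no separate invertibility check are needed, since the explicit formula already packages both the $(3,0)+(0,3)$ and $(2,1)+(1,2)$ inversions at once. Your route is conceptually clean but leaves more to verify; the paper's explicit identity is the shortcut that dissolves the obstacle you anticipated.
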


If we also require $I$ to be integrable, we obtain:
\begin{corollary}
	Let $h$ be any 3-form on an almost Hermitian manifold $(\mathcal{P},g,I)$ with fundamental form $\omega=gI$. Moreover, let $\n^h=\lc+\frac{1}{2}g^{-1}h$, where $\lc$ be the Levi-Civita connection of $g$. Then, the following are equivalent:
	\begin{enumerate}
		\item 
		$\n^h I=0$ and $h$ is of type (2,1)+(1,2).
		\item 
		$I$ is integrable and $\rd^c\omega =-h$.
	\end{enumerate}
\end{corollary}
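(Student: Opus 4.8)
The plan is to derive this Corollary directly from Proposition~\ref{prop:n_herm}, taking its two equivalent conditions as the starting point: $\n^h I = 0$ holds if and only if (2a) $N_I = -4g^{-1}h^{(3,0)+(0,3)}$ and (2b) $\rd\omega^{(2,1)+(1,2)} = -ih^{(2,1)}+ih^{(1,2)}$. The whole argument then hinges on two observations. First, the hypothesis ``$h$ is of type $(2,1)+(1,2)$'' is exactly the statement $h^{(3,0)+(0,3)}=0$, which through (2a) is tied to the vanishing of $N_I$, i.e. to integrability of $I$. Second, for integrable $I$ the fundamental form $\omega$ is of type $(1,1)$ with $\rd\omega = \rd\omega^{(2,1)}+\rd\omega^{(1,2)}$, so the twisted differential $\rd^c\omega = I^*\,\rd\,(I^*\omega)$ can be written purely through the Dolbeault pieces of $\rd\omega$, turning the equation $\rd^c\omega=-h$ into precisely (2b).

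For the implication (1)$\Rightarrow$(2) I would start from $\n^h I = 0$, so that (2a) and (2b) are available, together with $h$ of type $(2,1)+(1,2)$. The type hypothesis gives $h^{(3,0)+(0,3)}=0$, whence (2a) forces $N_I=0$, i.e. $I$ is integrable. Integrability collapses $\rd\omega$ to its $(2,1)+(1,2)$ parts, so computing $\rd^c\omega$ by letting $I^*$ act on each Dolbeault component and then substituting (2b) yields $\rd^c\omega=-h$. The reverse implication (2)$\Rightarrow$(1) runs symmetrically: from $N_I=0$ the form $\rd^c\omega$ is automatically of type $(2,1)+(1,2)$ (since $I^*$ preserves bidegree and $\rd\omega$ has no $(3,0)$ or $(0,3)$ part), hence $h=-\rd^c\omega$ is of type $(2,1)+(1,2)$, giving $h^{(3,0)+(0,3)}=0$; then (2a) holds with both sides vanishing, and unpacking $\rd^c\omega=-h$ into Dolbeault pieces reproduces (2b). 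Thus both conditions of Proposition~\ref{prop:n_herm} hold, so $\n^h I=0$, and together with the established type of $h$ this is exactly (1).

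The step requiring the most care, and the only genuinely computational one, is the identity expressing $\rd^c\omega$ in terms of $\rd\omega^{(2,1)}$ and $\rd\omega^{(1,2)}$ with the correct factors of $i$. Using $I^*\omega=\omega$ (valid because $\omega$ is of type $(1,1)$, which follows from $g$-orthogonality of $I$) one gets $\rd^c\omega = I^*\,\rd\omega$, and the action of $I^*$ on the $(2,1)$ and $(1,2)$ components produces a combination of $\rd\omega^{(2,1)}$ and $\rd\omega^{(1,2)}$ that must be matched against $-h$ with $h^{(2,1)},h^{(1,2)}$ read off from (2b) (equivalently, one checks $\rd^c = i(\bar{\partial}-\p)$ on forms of type $(1,1)$). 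I expect this sign-and-normalization matching to be the main obstacle, since it is the only place where the conventions for $I^*$, the Dolbeault decomposition, and the sign in $\n^h=\lc+\frac12 g^{-1}h$ all interact simultaneously; everything else is a formal consequence of Proposition~\ref{prop:n_herm} and of the vanishing of the $(3,0)+(0,3)$ parts under integrability.
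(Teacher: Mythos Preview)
Your proposal is correct and follows exactly the approach the paper intends: the corollary is stated without proof as an immediate consequence of Proposition~\ref{prop:n_herm}, and your argument supplies precisely the missing details, reducing both implications to the two conditions (2a) and (2b) together with the standard identity $\rd^c\omega = i(\bar\partial-\partial)\omega$ on the type-$(1,1)$ form $\omega$. The sign check you flag as the delicate point indeed works out: from (2b) one has $\rd\omega^{(2,1)}=-ih^{(2,1)}$ and $\rd\omega^{(1,2)}=ih^{(1,2)}$, so $\rd^c\omega = i(\rd\omega^{(1,2)}-\rd\omega^{(2,1)}) = -h^{(1,2)}-h^{(2,1)} = -h$.
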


\subsection{Para-Hermitian structures}
Para-Hermitian geometry should be thought of as the para-complex version of Hermitian geometry, i.e. an additional metric structure compatible with the para-complex structure is introduced. This metric then induces a non-degenerate two-form, which can be closed, giving rise to a para-K\"ahler geometry.
\begin{Def}
	Let $(\PS,K)$ be a para-complex manifold and let $\eta$ be a pseudo-Riemannian metric that satisfies $\eta(K\cdot,K\cdot)=-\eta$. Then we call $(\PS,K,\eta)$ a \textbf{para-Hermitian manifold}\footnote{If $K$ is not integrable, i.e. $(\PS,K)$ is almost para-complex, we would call $(\PS,K,\eta)$ an almost para-Hermitian manifold.}.
\end{Def}
The above definition implies that the tensor $\omega\coloneqq \eta K$ is skew
\begin{align*}
\omega(X,Y)=\eta(KX,Y)=-\eta(X,KY)=-\omega(Y,X),
\end{align*}
and nondegenrate (because $\eta$ is nondegenerate), therefore $\omega$ is an almost symplectic form, sometimes called the \textbf{fundamental form}. From $K^2=\id$ we also have $K=\eta^{-1}\omega=\omega^{-1}\eta$. Another observation is that since the eigenbundles of $K$ have the same rank, $\eta$ has split signature $(n,n)$. Furthermore, the eigenbundles of $K$ are isotropic with respect to both $\eta$ and $\omega$. This means that the almost symplectic form $\omega$ is of the type $(1,1)$, $\omega \in \Omega^{(1,1)}$.
\begin{remark*}
	As shown above, the data $(\PS,K,\eta)$, $(\PS,\eta,\omega)$ and $(\PS,K,\omega)$ are on a para-Hermitian manifold equivalent and so we may use the different triples interchangeably to refer to a para-Hermitian manifold.
\end{remark*}
\begin{Def}
	Let $(\PS,\eta,\omega)$ be a para-Hermitian manifold with $\rd\omega=0$. We call $(\PS,\eta,\omega)$ a \textbf{para-K\"ahler manifold}.
\end{Def}

\begin{Ex}[Local structures]\label{local_paraH}
	Almost para-Hermitian structures all look the same locally.
	Indeed, let $(\mathcal{P},K,\eta)$ be a $2n$-dimensional almost para-Hermitian manifold. 
	Bejan then shows \cite{bejan1993existence} that there exist local frames of $T\mathcal{P}$ with respect to which 
	\begin{align*}
	K=
	\begin{pmatrix}
	0 & \id_n \\
	\id_n & 0
	\end{pmatrix},\quad
	\eta=
	\begin{pmatrix}
	\id_n & 0 \\
	0 & -\id_n
	\end{pmatrix},
	\end{align*}
	or
	\begin{align*}
	K=
	\begin{pmatrix}
	\id_n & 0 \\
	0 & -\id_n
	\end{pmatrix},\quad
	\eta=
	\begin{pmatrix}
	0 & \id_n \\
	\id_n & 0
	\end{pmatrix},
	\end{align*}
	where $\id_n$ is the $n \times n$ identity matrix.
\end{Ex}

We will need the following property concerning connections with a fully skew torsion compatible with a para-hermitian structure $(\eta,K)$:

\begin{proposition}\label{prop:n_(para)herm}
	Let $(\eta,K)$ be an almost para-Hermitian structure with fundamental form $\omega=\eta K$. Also let $\lc$ be the Levi-Civita connection of $\eta$ and $h$ be a 3-form. 
	Set $\n^h=\lc+\frac{1}{2}\eta^{-1}h$. This is a metric connection with torsion $\eta^{-1}h$ and the following are equivalent:  
	\begin{enumerate}
		\item 
		$\n^h K=0$.
		\item 
		$(\eta,K)$ and $h$ satisfy the equations:
		\begin{align}\label{eq:h-components}
		\begin{aligned}
		N &= -4h^{(3,0)+(0,3)} \\
		\rd\omega^{(3,0)+(0,3)} &= -3h^{(3,0)}+3h^{(0,3)} \\
		\rd\omega^{(2,1)+(1,2)} &= -h^{(2,1)}+h^{(2,1)}.
		\end{aligned}
		\end{align}
	\end{enumerate}
\end{proposition}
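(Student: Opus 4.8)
The plan is to mirror the proof of the Hermitian analogue (Proposition~\ref{prop:n_herm}), adjusting every sign that depends on the compatibility type, since here $K^2=\id$ and $\eta(K\cdot,K\cdot)=-\eta$ replace $I^2=-\id$ and $g(I\cdot,I\cdot)=g$. The first step is to convert the connection equation $\n^h K=0$ into a single tensorial identity relating $\lc\omega$ and $h$, and then to decompose that identity according to the $(p,q)$-bigrading of $K$ (Appendix~\ref{sec:paracpx/product}); each bihomogeneous piece reproduces one of the three equations in \eqref{eq:h-components}.

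Concretely, I would first compute, for all $X,Y,Z\in\se(T\PS)$,
\[
(\n^h_X K)Y=(\lc_X K)Y+\tfrac12\bigl(\eta^{-1}h(X,KY,\cdot)-K\,\eta^{-1}h(X,Y,\cdot)\bigr),
\]
and then lower the last index with $\eta$. Using $\lc\eta=0$, so that $\eta((\lc_X K)Y,Z)=(\lc_X\omega)(Y,Z)$, together with the anti-isometry identity $\eta(KW,Z)=-\eta(W,KZ)$, this yields the master equation
\[
\eta\bigl((\n^h_X K)Y,Z\bigr)=(\lc_X\omega)(Y,Z)+\tfrac12\bigl(h(X,KY,Z)+h(X,Y,KZ)\bigr).
\]
Hence $\n^h K=0$ is equivalent to $(\lc_X\omega)(Y,Z)=-\tfrac12\bigl(h(X,KY,Z)+h(X,Y,KZ)\bigr)$ for all $X,Y,Z$.

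With this master equation in hand, the remaining work is purely the extraction of types. For the $\rd\omega$ equations I would use that $\lc$ is torsion-free, so $\rd\omega(X,Y,Z)=\sum_{\mathrm{cyc}}(\lc_X\omega)(Y,Z)$; substituting the master equation and projecting onto $T^{(1,0)}$/$T^{(0,1)}$ inputs (where $K$ acts as $\pm\id$, the para-analogue of the factor $\pm i$ in the complex case) separates $\rd\omega$ into its $(3,0)+(0,3)$ and $(2,1)+(1,2)$ parts and matches them to $-3h^{(3,0)}+3h^{(0,3)}$ and $-h^{(2,1)}+h^{(1,2)}$ respectively (note that the printed right-hand side of the third line of \eqref{eq:h-components} should read $-h^{(2,1)}+h^{(1,2)}$). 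For the Nijenhuis equation I would insert the master equation into the Levi-Civita expression \eqref{eq:nijenhuis} for $N_K$, namely $N_K(X,Y)=(\lc_{KX}K)Y+(\lc_X K)KY-(\lc_{KY}K)X-(\lc_Y K)KX$; the pieces that survive are exactly the purely $(3,0)$ and $(0,3)$ components, giving $N=-4h^{(3,0)+(0,3)}$. Since the full master equation is recovered by reassembling all of its bihomogeneous components, this also gives the converse, so the three equations of \eqref{eq:h-components} are jointly equivalent to $\n^h K=0$.

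The main obstacle I anticipate is the type bookkeeping rather than any conceptual difficulty: one must track how the anti-isometry $\eta(K\cdot,K\cdot)=-\eta$ flips signs relative to the Hermitian computation, verify that the $(3,0)+(0,3)$ sector genuinely splits into the two independent conditions on $N$ and on $\rd\omega^{(3,0)+(0,3)}$ (and not a single combined one), and confirm completeness, that is, that no component of the master equation is left unaccounted for once $N$ and the four type-pieces of $\rd\omega$ are used. Because $\omega$ is of type $(1,1)$, $\rd\omega$ a priori has no $(3,0)$ or $(0,3)$ part when $K$ is integrable, so the appearance of these components is precisely the obstruction measured by $N$; keeping this correspondence straight across the two sectors is where care is needed.
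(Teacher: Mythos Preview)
Your proposal is correct and follows the same route as the paper: derive the master equation $(\lc_X\omega)(Y,Z)=-\tfrac12\bigl(h(X,KY,Z)+h(X,Y,KZ)\bigr)$, then feed it into the Levi-Civita expressions for $N$ and $\rd\omega$ to obtain the three type-decomposed equations. For the converse, the paper resolves exactly the ``completeness'' concern you raise by invoking the identity $\eta((\lc_XK)Y,Z)=\tfrac12\bigl(\rd\omega(X,Y,Z)+\rd\omega(X,KY,KZ)-N(Y,KZ,X)\bigr)$, which expresses $\lc K$ entirely in terms of $\rd\omega$ and $N$ and hence shows that the three equations in \eqref{eq:h-components} suffice to recover the master equation; this is the one step you should make explicit rather than appeal to ``reassembly of bihomogeneous components''.
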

\begin{proof}
	Let us first note that, by definition of $\n^h$, 
	\[ \eta((\n^h_XK)Y,Z) = \eta((\lc_XK)Y,Z) -\frac{1}{2}\left(h(X,KY,Z) - h(X,Y,KZ)\right)\]
	for any $X,Y,Z \in \Gamma(T)$. Consequently, $\n^h K=0$ if and only if 
	\begin{align}\label{LC}
	\eta((\lc_XK)Y,Z)=-\frac{1}{2}\left(h(X,KY,Z) + h(X,Y,KZ)\right).
	\end{align}
	Moreover, recall that $N$ and $\rd\omega$ can be expressed in terms of $\lc$ as follows:
	\begin{align}\label{N:LC}
	N(X,Y,Z) = \eta((\lc_{KX}K)Y - (\lc_{KY}K)X + (\lc_XK)KY - (\lc_YK)KX,Z)
	\end{align}
	and 
	\begin{align}\label{d:LC}
	\rd \omega(X,Y,Z)=\sum_{Cycl. X,Y,Z} \eta((\lc_XK)Y,Z)
	\end{align}
	for all $X,Y,Z \in \Gamma(T)$.
	
	Let us first assume that $\n^h K=0$. Then, substituting \eqref{LC} in the expressions of $N$ and $\rd\omega$, we obtain:
	\begin{align*}
	\begin{aligned}
	N(X,Y,Z) & = -h(KX,Y,KZ) - h(X,KY,KZ) - h(KX,KY,Z) - h(X,Y,Z)\\
	& = -4h^{(3,0)+(0,3)}(X,Y,Z)
	\end{aligned}
	\end{align*}
	and
	\begin{align*}
	\rd\omega = -h(KX,Y,Z) - h(X,KY,Z) - h(X,Y,KZ) 
	\end{align*}
	for all $X,Y,Z \in \Gamma(T)$. This implies that $N = -4\eta^{-1}h^{(3,0)+(0,3)}$, $\rd\omega^{(3,0)+(0,3)} = -3h^{(3,0)} + 3h^{(0,3)}$ and
	$\rd\omega^{(2,1)+(1,2)} = -h^{(2,1)}+h^{(1,2)}$. 
	
	For the converse, we note that formulas \eqref{N:LC} and \eqref{d:LC} imply that
	\begin{align}
	\eta((\lc_XK)Y,Z) = \frac{1}{2}\left(\rd\omega(X,Y,Z) + \rd\omega(X,KY,KZ) - N(Y,KZ,X) \right) 
	\end{align}
	for all $X,Y,Z \in \Gamma(T)$.
	Consequently, if $N = -4\eta^{-1}h^{(3,0)+(0,3)}$, $\rd\omega^{(3,0)+(0,3)} = -3h^{(3,0)} + 3h^{(0,3)}$ and
	$\rd\omega^{(2,1)+(1,2)} = -h^{(2,1)}+h^{(1,2)}$, 
	\[ 	\eta((\lc_XK)Y,Z) = -\frac{1}{2}\left(h(X,KY,Z) + h(X,Y,KZ)\right).\]
	In other words, \eqref{LC} holds, proving that $\n^hK = 0$.
\end{proof}

If we also require $K$ to be integrable, we obtain:
\begin{corollary}
	Let $(\eta,K)$ be an almost para-Hermitian structure with fundamental form $\omega=\eta K$. Moreover, let $\lc$ be the Levi-Civita connection of $\eta$ and $h$ be any 3-form. 
	Set $\n^h=\lc+\frac{1}{2}\eta^{-1}h$. Then, the following are equivalent:
	\begin{enumerate}
		\item 
		$\n^h K=0$ and $h$ is of type (2,1)+(1,2).
		\item 
		$K$ is integrable and $\rd^p\omega =-h$.
	\end{enumerate}
\end{corollary}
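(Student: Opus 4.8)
The plan is to deduce this corollary directly from Proposition \ref{prop:n_(para)herm}, using two auxiliary facts recorded earlier: that the fundamental form $\omega = \eta K$ of a para-Hermitian structure is always of pure type $(1,1)$, and that the twisted differential satisfies $\rd^p = \p^{(1,0)} - \p^{(0,1)}$ by \eqref{eq:dp-operator}. The observation tying the two formulations together is that, once $K$ is integrable, the de Rham differential shifts the bigrading by at most one step, so that $\rd\omega = \rd\omega^{(2,1)} + \rd\omega^{(1,2)}$ with vanishing $(3,0)$ and $(0,3)$ parts, whence $\rd^p\omega = \rd\omega^{(2,1)} - \rd\omega^{(1,2)}$.

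First I would prove $(1)\Rightarrow(2)$. Assuming $h$ is of type $(2,1)+(1,2)$, its $(3,0)$ and $(0,3)$ components vanish, so the first two equations of \eqref{eq:h-components} give $N = 0$, i.e.\ $K$ is integrable, together with $\rd\omega^{(3,0)+(0,3)} = 0$ (consistent with integrability). The third equation reads $\rd\omega^{(2,1)+(1,2)} = -h^{(2,1)} + h^{(1,2)}$; projecting onto each bidegree yields $\rd\omega^{(2,1)} = -h^{(2,1)}$ and $\rd\omega^{(1,2)} = h^{(1,2)}$. Substituting into $\rd^p\omega = \rd\omega^{(2,1)} - \rd\omega^{(1,2)}$ gives $\rd^p\omega = -h^{(2,1)} - h^{(1,2)} = -h$, as required.

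For the converse $(2)\Rightarrow(1)$, integrability of $K$ gives $N=0$ and forces $\rd\omega$ to have only $(2,1)+(1,2)$ components, so again $\rd^p\omega = \rd\omega^{(2,1)} - \rd\omega^{(1,2)}$. The hypothesis $\rd^p\omega = -h$ then reads $h = -\rd\omega^{(2,1)} + \rd\omega^{(1,2)}$, which is manifestly of type $(2,1)+(1,2)$; in particular $h^{(2,1)} = -\rd\omega^{(2,1)}$, $h^{(1,2)} = \rd\omega^{(1,2)}$, and $h^{(3,0)} = h^{(0,3)} = 0$. Feeding this back into the three equations \eqref{eq:h-components}, I would verify them in turn: the first two hold because both sides vanish under integrability, and the third reduces to the tautology $\rd\omega^{(2,1)+(1,2)} = \rd\omega^{(2,1)} + \rd\omega^{(1,2)}$. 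Proposition \ref{prop:n_(para)herm} then gives $\n^h K = 0$, completing $(1)$.

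The only genuine subtlety, and the step I would handle most carefully, is the bookkeeping of bidegrees: I must invoke that $\omega$ is of pure type $(1,1)$ and that integrability of $K$ kills the $\p^{(2,-1)}$ and $\p^{(-1,2)}$ parts of $\rd$ (equivalently the $(3,0)$ and $(0,3)$ parts of $\rd\omega$), so that $\rd^p\omega$ truly equals $\rd\omega^{(2,1)} - \rd\omega^{(1,2)}$. With that identification secured, the corollary is purely formal, being the specialization of Proposition \ref{prop:n_(para)herm} to the case where $h$ carries no $(3,0)+(0,3)$ component.
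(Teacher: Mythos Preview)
Your argument is correct and is exactly the intended one: the paper states this corollary immediately after Proposition \ref{prop:n_(para)herm} without an explicit proof, leaving it as the evident specialization obtained by setting $h^{(3,0)+(0,3)}=0$ in \eqref{eq:h-components}. Your careful handling of the bidegree bookkeeping (in particular that integrability of $K$ forces $\rd\omega^{(3,0)+(0,3)}=0$ so that $\rd^p\omega=\rd\omega^{(2,1)}-\rd\omega^{(1,2)}$) fills in precisely the details the paper omits.
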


\subsection{Chiral and anti-Hermitian structures}\label{sec:chiral/antiH}

In this section, we consider pairs $(J,\eta)$ consisting of an (almost) complex or product structure $J$ together with a metric $\eta$ with respect to which $J$ is an anti-isometry or isometry, respectively, giving rise to anti-Hermitian or chiral structures, respectively.

We first consider (almost) anti-Hermitian structures.
\begin{Def}
	An \textbf{(almost) anti-Hermitian structure}\footnote{(Almost) anti-Hermitian manifolds are sometimes called (almost) K\"ahler-Norden manifolds \cite{norden2}, (almost) complex manifolds with Norden metric \cite{norden1} or (almost) generalized B-manifolds \cite{B-manifolds} in the literature.} is given by a pair $(J,\eta)$ where $J$ is an (almost) complex structure and $\eta$ is a pseudo-Riemannian metric such that $\eta(J\cdot,J\cdot)=-\eta(\cdot,\cdot)$. An \textbf{(almost) anti-Hermitian manifold} is manifold endowed with an (almost) anti-Hermitian structure.
\end{Def}

Note that the compatibility condition $\eta(J\cdot,J\cdot)=-\eta(\cdot,\cdot)$ along with the fact that $J$ is almost complex implies that $\eta$ is necessarily of split signature $(n,n)$. Moreover, contrary to Hermitian geometry, $J$ is an {\it anti-isometry} of $\eta$ as opposed to an isometry so that the tensor $\eta J$ is again a pseudo-Riemannian structure as opposed to a two-form.

\begin{Ex}[Local structures]\label{local_AntiH}
	One proves as in the para-Hermitian case \ref{local_paraH} that anti-Hermitian structures all have similar local descriptions.
	To be precise, any $2n$-dimensional almost anti-Hermitian manifold $(\mathcal{P},J,\eta)$ admits local frames of $T\mathcal{P}$ with respect to which 
	\begin{align*}
	J=
	\begin{pmatrix}
	0 & -\id_n \\
	\id_n & 0
	\end{pmatrix},\quad
	\eta=
	\begin{pmatrix}
	\id_n & A \\
	A & -\id_n
	\end{pmatrix},
	\end{align*}
	where $\id_n$ is the identity $n \times n$ matrix and $A$ is a symmetric matrix such that $(\det A)^2 \neq (-1)^n$ at every point.
\end{Ex}

We now consider the case of (almost) chiral structures.
\begin{Def}
	An \textbf{(almost) chiral structure}\footnote{The name chiral comes from physics, in mathematical literature such structures are called (almost) product pseudo-Riemannian.} is given by a pair $(J,\eta)$ where $J$ is an (almost) product structure and $\eta$ is a pseudo-Riemannian metric such that $\eta(J\cdot,J\cdot)=\eta(\cdot,\cdot)$.
\end{Def}

Contrary to para-Hermitian geometry, the structure $J$ need {\it not} be para-complex and $J$ is an {\it isometry} of $\eta$ as opposed to an anti-isometry. A consequence of this is that the tensor $\eta J$ is now again a pseudo-Riemannian structure as opposed to a two-form. Moreover, the $+1$-eigenbundle of $J$ is orthogonal to its $-1$-eigenbundle with respect to $\eta$. 

As for almost para-Hermitian and anti-Hermitian structures (see examples \ref{local_paraH} and \ref{local_AntiH}), one has a canonical local description of almost chiral structures:


\begin{Ex}[Local structures]
	Let $(\mathcal{P},J,\eta)$ be an almost chiral manifold of dimension $m$ with $T^{(1,0)}$ of rank $s$ and $T^{(0,1)}$ of rank $r$ (so that $s+t = m$). Suppose that the restriction of $\eta$ to $T^{(1,0)}$ (respectively, $T^{(0,1)}$) has signature $(l,s-l)$ (respectively, $(k,t-k)$).  
	Then, 
	\[
	J=
	\begin{pmatrix}
	\id_s & 0 \\
	0 & -\id_t
	\end{pmatrix},\quad
	\eta=
	\begin{pmatrix}
	\begin{pmatrix}
	\id_l & 0 \\
	0 & -\id_{s-l}
	\end{pmatrix} & 0 \\
	0 & \begin{pmatrix}
	\id_k & 0 \\
	0 & -\id_{t-k}
	\end{pmatrix}
	\end{pmatrix},
	\]
	where $\id_r$ is the identity $r \times r$ matrix, with repect to some local frame of $T\mathcal{P}$.
	For instance, if $\mathcal{P} = \mathbb{R}^{2n}$, then
	\[
	J=
	\begin{pmatrix}
	\id_n & 0 \\
	0 & -\id_n
	\end{pmatrix},\quad
	\eta=
	\begin{pmatrix}
	\id_n & 0 \\
	0 & \id_n
	\end{pmatrix}
	\]
	is a chiral structure on $\mathcal{P}$.
\end{Ex}

Almost anti-Hermitian and almost chiral structures have similar geometries. We thus present some of their properties simultaneously. Let $(J,\eta)$ be an almost anti-Hermitian structure \emph{or} an almost chiral structure. As we have just seen, the tensor $\eta J$ is a again pseudo-Riemannian metric in both cases, which is why anti-Hermitian and chiral structures have similar geometries.
A full classification of such structures is known in terms of 36 classes \cite{naviera_classification} characterized among else by the {\bf fundamental tensor} 
\[ F(X,Y,Z)\coloneqq \eta((\lc_X J)Y,Z), \] 
$X,Y,Z \in \Gamma(T)$, where $\lc$ denotes the Levi-Civita connection of $\eta$. Two of the 36 classes are the classes $\mathcal{W}_0$ and $\mathcal{W}_3$ \cite{chiral1}, which are defined as follows:

\begin{Def}
	Let $(J,\eta)$ be an almost anti-Hermitian or an almost chiral structure.
	Then, $(J,\eta)$ is said to be:
	\begin{enumerate}
		\item 
		\emph{of class $\mathcal{W}_0$} if $F=0$;
		\item 
		\emph{of class $\mathcal{W}_3$} if
		\begin{align}\label{eq:classW3}
		\sum_{Cycl. X,Y,Z} F(X,Y,Z)=0\,
		\end{align}
		for all $X,Y,Z \in \Gamma(T)$.
	\end{enumerate}
\end{Def}

\begin{remark*}
	By definition of $F$ and non-degenracy of $\eta$, we have $F = 0$ if and only if $\lc J = 0$. In other words, $(J,\eta)$ is of class $\mathcal{W}_0$ if and only if $J$ is parallel with respect to the Levi-Civita connection of $\eta$. Furthermore, if we set $N(X,Y,Z) := \eta(N(X,Y),Z)$, then
	\[ N(X,Y,Z) = F(JX,Y,Z) - F(JY,X,Z) + F(X,JY,Z) - F(Y,JX,Z) \]
	for all $X,Y,Z \in \Gamma(T)$, so that $J$ is integrable if $F = 0$. All structures $(J,\eta)$ of class $\mathcal{W}_0$ are thus integrable. Note that if $F = 0$, then $(J,\eta)$ is also of class $\mathcal{W}_3$ so that the class $\mathcal{W}_0$ is contained in the class $\mathcal{W}_3$. We in fact prove below that $\mathcal{W}_0$ consists of the integrable structures $(J,\eta)$ of class $\mathcal{W}_3$ (see Lemma \ref{F:properties}).
\end{remark*}
\begin{remark*}
	Recall that for an almost (para-)Hermitian structure $(K,g)$ with fundamental form $\omega = g K$, the exterior derivative $\rd \omega$ can be written in terms of the Levi-Civita connection $\n^{LC}$ of $g$ as follows:
	\begin{align*}
	\rd \omega(X,Y,Z)=\sum_{Cycl.\ X,Y,Z} g((\n^{LC}_X K)Y,Z).
	\end{align*}
	The characteristic property \eqref{eq:classW3} of almost anti-Hermitian or chiral structures of class $\mathcal{W}_3$ is then analogous to the requirement $\rd \omega =0$ in (para-)Hermitian geometry for the almost (para-)Hermitian structure $(K,g)$ to be K\"ahler.
	Almost anti-Hermitian or chiral structures of class $\mathcal{W}_3$ thus correspond to almost K\"ahler or para-K\"ahler structures, respectively.
\end{remark*}

Here are some useful properties of the fundamental tensor $F$:
\begin{lemma}\label{F:properties}
	Let $(J,\eta)$ be an almost anti-Hermitian or an almost chiral structure with fundamental tensor $F$. Then:
	\begin{enumerate}
		\item 
		For all $X,Y,Z \in \Gamma(T)$,
		\begin{align}\label{eq:F_properties}
		F(X,Y,Z)=F(X,Z,Y)=\pm F(X,JY,JZ)
		\end{align}
		if $J^2 = \mp \id$.
		\item 
		If $(J,\eta)$ is of class $\mathcal{W}_3$, then
		\begin{align}\label{eq:N_W3} 
		N(X,Y,Z) := \eta(N(X,Y),Z) = 2\left( F(X,JY,Z) + F(JX,Y,Z)\right) 
		\end{align}
		for  all $X,Y,Z \in \Gamma(T)$. 
	\end{enumerate}
\end{lemma}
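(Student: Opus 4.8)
The plan is to reduce both statements to the single observation that $J$ is self-adjoint with respect to $\eta$. Indeed, in both cases the compatibility condition reads $\eta(J\cdot,J\cdot)=\mp\eta$ when $J^2=\mp\id$, and replacing one argument by $J$ of it yields $\eta(JX,Y)=\eta(X,JY)$ for all $X,Y\in\Gamma(T)$. Hence $h:=\eta J$ is a \emph{symmetric} $2$-tensor. The first thing I would record is the reformulation $F(X,Y,Z)=(\lc_X h)(Y,Z)$: expanding $(\lc_X h)(Y,Z)=X\,h(Y,Z)-h(\lc_X Y,Z)-h(Y,\lc_X Z)$ and using that $\lc$ is metric ($\lc\eta=0$) collapses the right-hand side to $\eta((\lc_X J)Y,Z)=F(X,Y,Z)$. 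Since $h$ is symmetric and $\lc$ is a torsion-free connection, $\lc_X h$ is symmetric in its last two slots, which gives $F(X,Y,Z)=F(X,Z,Y)$ immediately.

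For the $J$-invariance $F(X,Y,Z)=\pm F(X,JY,JZ)$, I would first establish that $\lc_X J$ anti-commutes with $J$, that is $(\lc_X J)J=-J(\lc_X J)$. This follows by applying $\lc_X$ to $J\circ J=\mp\id$, which gives $(\lc_X J)J+J(\lc_X J)=0$. Then $F(X,JY,JZ)=\eta((\lc_X J)JY,JZ)=-\eta(J(\lc_X J)Y,JZ)$, and applying the compatibility $\eta(J\cdot,J\cdot)=\mp\eta$ to the pair $(\lc_X J)Y,\,Z$ turns this into $-(\mp F(X,Y,Z))=\pm F(X,Y,Z)$. Throughout this step the two conventions, anti-Hermitian ($J^2=-\id$, sign $+$) and chiral ($J^2=+\id$, sign $-$), must be kept aligned.

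For the second claim I would take as input the formula already recorded in the remark preceding the lemma, namely $N(X,Y,Z)=F(JX,Y,Z)-F(JY,X,Z)+F(X,JY,Z)-F(Y,JX,Z)$, so that it remains only to verify $-F(JY,X,Z)-F(Y,JX,Z)=F(X,JY,Z)+F(JX,Y,Z)$ under the $\mathcal{W}_3$ hypothesis. The strategy is to apply the cyclic identity $\sum_{\mathrm{cycl.}}F=0$ to the two triples $(X,JY,Z)$ and $(JX,Y,Z)$ and add the results; after rewriting the mixed terms with the last-slot symmetry from part (1), this reproduces the four terms $F(X,JY,Z)+F(JX,Y,Z)+F(JY,X,Z)+F(Y,JX,Z)$ and leaves a residue $F(Z,X,JY)+F(Z,JX,Y)$. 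The final step is to kill this residue: the $J$-invariance of part (1), applied to the last two slots of $F(Z,JX,Y)$ together with $J^2=\mp\id$, gives $F(Z,JX,Y)=-F(Z,X,JY)$ in both cases, so the residue vanishes and the desired identity, hence $N=2\big(F(X,JY,Z)+F(JX,Y,Z)\big)$, follows.

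The routine parts are the two covariant-derivative expansions in part (1). The step most likely to cause trouble is the bookkeeping of the intertwined $\pm/\mp$ signs, since the anti-Hermitian and chiral cases are handled in one stroke, and the cancellation of the residual term $F(Z,X,JY)+F(Z,JX,Y)$ depends on extracting exactly the right sign from part (1); I would therefore double-check each case separately at that point.
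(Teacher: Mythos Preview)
Your proof is correct and follows essentially the same approach as the paper: part (1) is handled via the Levi-Civita compatibility and the anticommutation $(\lc_X J)J=-J(\lc_X J)$, and part (2) via the $\mathcal{W}_3$ cyclic identity combined with the symmetries from part (1). The only cosmetic difference is that the paper applies the cyclic identity to the triples $(JY,X,Z)$ and $(Y,JX,Z)$ rather than $(X,JY,Z)$ and $(JX,Y,Z)$, and uses the same key observation $F(Z,JX,Y)=-F(Z,X,JY)$ to achieve the cancellation.
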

\begin{proof}
	The first property follows directly from the definition of $F$ and the properties of the Levi-Civita connection (also see e.g. \cite{chiral1} for the almost chiral case).
	Moreover, note that 
	\[ N(X,Y,Z) = F(JX,Y,Z) - F(JY,X,Z) + F(X,JY,Z) - F(Y,JX,Z) \]
	for all $X,Y,Z \in \Gamma(T)$. Consequently, if $(J,\eta)$ is of class $\mathcal{W}_3$, 
	\[ F(JY,X,Z) = -F(X,Z,JY) - F(Z,JY,X) = -F(X,JY,Z) - F(Z,JY,X)\]
	and
	\[ F(Y,JX,Z) = -F(JX,Z,Y) - F(Z,Y,JX) = -F(JX,Y,Z) + F(Z,JY,X),\]
	so that $N(X,Y,Z) = 2\left( F(X,JY,Z) + F(JX,Y,Z)\right)$, proving the second property.
\end{proof}

As a direct consequence of the above lemma, we have:
\begin{proposition}
	An almost anti-Hermitian or almost chiral structure $(J,\eta)$ is integrable of class $\mathcal{W}_3$ if and only if it is of class $\mathcal{W}_0$ if and only if $\lc J = 0$.
\end{proposition}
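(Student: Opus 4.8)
The plan is to prove the three-way equivalence as a short cycle of implications, leaning entirely on Lemma \ref{F:properties} and the remark that $F = 0$ if and only if $\lc J = 0$. The equivalence between class $\mathcal{W}_0$ and $\lc J = 0$ is essentially definitional: by definition $(J,\eta)$ is of class $\mathcal{W}_0$ precisely when $F = 0$, and since $\eta$ is nondegenerate this is the same as $\lc J = 0$. All the real content therefore lies in showing that ``integrable of class $\mathcal{W}_3$'' is equivalent to ``class $\mathcal{W}_0$''.

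First I would dispatch the easy direction $\mathcal{W}_0 \Rightarrow$ integrable and $\mathcal{W}_3$. If $F = 0$ then the cyclic condition \eqref{eq:classW3} holds trivially, so $(J,\eta)$ is of class $\mathcal{W}_3$; moreover the expression for $N$ in terms of $F$ recorded in the remark preceding Lemma \ref{F:properties} immediately gives $N = 0$, so $J$ is integrable. The substantive direction is integrable $+\ \mathcal{W}_3 \Rightarrow \mathcal{W}_0$, and this is where the work is. Assuming $N = 0$ and $(J,\eta)$ of class $\mathcal{W}_3$, the second property of Lemma \ref{F:properties} collapses to the identity $F(X,JY,Z) + F(JX,Y,Z) = 0$ for all $X,Y,Z \in \Gamma(T)$.

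From here the strategy is purely algebraic. Applying the symmetry $F(X,Y,Z) = F(X,Z,Y)$ from the first property of the Lemma to both sides of this identity and relabelling the dummy arguments, I would upgrade it to $F(JX,Y,Z) = -F(X,Y,JZ)$; comparing the two relations then yields the key slot-exchange identity $F(X,JY,Z) = F(X,Y,JZ)$. Iterating this identity once more gives $F(X,JY,JZ) = F(X,Y,J^2 Z) = \varepsilon\, F(X,Y,Z)$, where $\varepsilon = \pm 1$ according to whether $J^2 = \pm\id$. On the other hand, the first property of Lemma \ref{F:properties} reads $F(X,Y,Z) = -\varepsilon\, F(X,JY,JZ)$. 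Substituting the former into the latter produces $F(X,Y,Z) = -\varepsilon^2\, F(X,Y,Z) = -F(X,Y,Z)$, forcing $F = 0$, i.e.\ $(J,\eta)$ is of class $\mathcal{W}_0$.

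I expect the main (indeed only) obstacle to be keeping the signs from the two geometries straight: the anti-Hermitian case has $J^2 = -\id$ while the chiral case has $J^2 = +\id$, and the first property of the Lemma carries a sign tied to this. The point that makes the argument uniform is that only $\varepsilon^2 = 1$ enters the final cancellation, so both cases close simultaneously without a separate computation. Since the cycle $\mathcal{W}_0 \Rightarrow \text{(integrable, } \mathcal{W}_3) \Rightarrow \mathcal{W}_0$ is then complete and $\mathcal{W}_0 \Leftrightarrow \lc J = 0$ is immediate, the full equivalence follows.
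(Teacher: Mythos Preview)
Your proposal is correct and follows essentially the same route as the paper's proof: both invoke the $\mathcal{W}_3$ formula $N(X,Y,Z) = 2\bigl(F(X,JY,Z) + F(JX,Y,Z)\bigr)$ from Lemma \ref{F:properties} and argue that $N = 0$ forces $F = 0$. The paper simply asserts this implication without further comment, whereas you actually carry out the algebra (using the symmetry $F(X,Y,Z) = F(X,Z,Y)$ and the relation $F(X,Y,Z) = -\varepsilon F(X,JY,JZ)$ to close the loop), so your argument is in fact more complete than the paper's at this step.
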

\begin{proof}
	Suppose that $(J,\eta)$ is of class $\mathcal{W}_3$. Then, given the expression \eqref{eq:N_W3} of $N$ in terms of $F$, we see that $J$ is integrable if and only if $F = 0$. Finally, by the definition of $F$, we see that $F = 0$ if and only if $\lc J = 0$.
\end{proof}

\begin{remark*}
	This proposition tells us that anti-Hermitian/chiral structures of class $\mathcal{W}_0$ correspond to K\"ahler/para-K\"ahler structures in Hermitian/para-Hermitian geometry.
\end{remark*}

We now state an analog of Proposition \ref{prop:n_(para)herm} in the anti-Hermitian/chiral setting that highlights a special set of almost anti-Hermitian/chiral structures of class $\mathcal{W}_3$:

\begin{proposition}\label{prop:DJ}
	Let $(\eta,J)$ be an anti-Hermitian or almost chiral structure and $\lc$ be the Levi-Civita connection of $\eta$. Also let $h$ be a 3-form and set 
	\[\n^h :=\lc+\frac{1}{2}g^{-1}h.\] 
	Then,
	$\nabla^h$ is a metric connection with torsion $\eta^{-1}h$ and the following are equivalent:
	\begin{enumerate}
		\item 
		$\n^h J=0$.
		\item 
		$F$ satisfies
		\begin{align}\label{eq:F_h_reln}
		F(X,Y,Z)=-\frac{1}{2}\left(h(X,JY,Z) - h(X,Y,JZ)\right)
		\end{align}
		for any $X,Y,Z \in \Gamma(T)$. 
		\item 
		$(\eta,J)$ is of class $\mathcal{W}_3$ and its Nijenhuis tensor is related to $h$ in the following way:
		\begin{align}\label{eq:N_h_reln}
		N(X,Y,Z)=2\left(h^{(2,1)+(1,2)}(JX,Y,JZ) + h^{(2,1)+(1,2)}(X,JY,JZ)\right)
		\end{align}
		for all $X,Y,Z \in \Gamma(T)$.
	\end{enumerate}
\end{proposition}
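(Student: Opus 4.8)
The plan is to prove the chain $(1)\Leftrightarrow(2)\Leftrightarrow(3)$, closely following the structure of the para-Hermitian Proposition \ref{prop:n_(para)herm} but adapting it to the fact that here $J$ is \emph{self-adjoint} (rather than anti-self-adjoint) with respect to $\eta$, since $\eta J=\HH$ is symmetric in both the anti-Hermitian and chiral cases. The equivalence $(1)\Leftrightarrow(2)$ is the quick part. Writing $\n^h_X Y=\lc_X Y+\tfrac12\eta^{-1}h(X,Y,\cdot)$, a direct expansion of $(\n^h_X J)Y=\n^h_X(JY)-J(\n^h_X Y)$ gives
\[\eta((\n^h_X J)Y,Z)=\eta((\lc_X J)Y,Z)+\tfrac12 h(X,JY,Z)-\tfrac12\eta(J\,\eta^{-1}h(X,Y,\cdot),Z).\]
Using only $\eta(JA,B)=\eta(A,JB)$, the last term becomes $-\tfrac12 h(X,Y,JZ)$. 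Since $F(X,Y,Z)=\eta((\lc_X J)Y,Z)$ by definition, this shows that $\n^h J=0$ is equivalent to $F(X,Y,Z)=-\tfrac12\big(h(X,JY,Z)-h(X,Y,JZ)\big)$, which is exactly $(2)$.

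For $(2)\Rightarrow(3)$ I would first establish the $\mathcal{W}_3$ condition and then the Nijenhuis formula. The cyclic sum of $h(X,JY,Z)-h(X,Y,JZ)$ vanishes identically by total antisymmetry of $h$: both $\sum_{\mathrm{cycl}}h(X,JY,Z)$ and $\sum_{\mathrm{cycl}}h(X,Y,JZ)$ reduce to the same three terms. Hence $(2)$ forces $\sum_{\mathrm{cycl}}F=0$, i.e.\ class $\mathcal{W}_3$. With $\mathcal{W}_3$ in hand, the second part of Lemma \ref{F:properties} gives $N(X,Y,Z)=2\big(F(X,JY,Z)+F(JX,Y,Z)\big)$; substituting $(2)$ and using $J^2=\mp\id$ reduces the claim to a purely linear-algebraic identity: for the type decomposition $h=h^{(3,0)+(0,3)}+h^{(2,1)+(1,2)}$ induced by $J$, the $(3,0)+(0,3)$ part must drop out of this combination while the $(2,1)+(1,2)$ part must reproduce $2\big(h^{(2,1)+(1,2)}(JX,Y,JZ)+h^{(2,1)+(1,2)}(X,JY,JZ)\big)$.

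This type identity is where I expect the real work to be, and it is the step I would treat most carefully. I would verify it by passing to the eigenbundle decomposition (the $\pm i$-eigenspaces in the complex case, the real $\pm1$-eigenbundles in the product case) and tracking how each pure-type component behaves under the insertions $h(\cdot,J\cdot,J\cdot)$, $h(J\cdot,J\cdot,\cdot)$ and $h(J\cdot,\cdot,J\cdot)$. One finds $h^{(3,0)+(0,3)}(JX,JY,Z)$ equals $h^{(3,0)+(0,3)}(X,Y,Z)$ up to the sign of $J^2$, which makes its total contribution cancel, whereas the $(2,1)+(1,2)$ component obeys a relation of the form $\pm h''(X,Y,Z)-h''(JX,JY,Z)=h''(JX,Y,JZ)+h''(X,JY,JZ)$ (sign dictated by $J^2=\mp\id$). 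Carrying out both the $J^2=-\id$ and $J^2=+\id$ cases uniformly, keeping the signs straight, is the bookkeeping-heavy core of the argument and the main obstacle.

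Finally, for $(3)\Rightarrow(2)$ I would argue by subtraction and uniqueness. Let $\Phi(X,Y,Z):=-\tfrac12\big(h(X,JY,Z)-h(X,Y,JZ)\big)$ be the candidate right-hand side of $(2)$. A short check shows $\Phi$ is symmetric in its last two arguments, satisfies the same $J$-compatibility as $F$ in the first part of Lemma \ref{F:properties}, and is of class $\mathcal{W}_3$; moreover the computation of the previous paragraph shows $\Phi$ obeys $2\big(\Phi(X,JY,Z)+\Phi(JX,Y,Z)\big)=2\big(h''(JX,Y,JZ)+h''(X,JY,JZ)\big)$. Under hypothesis $(3)$, combining the prescribed expression for $N$ with Lemma \ref{F:properties} shows $F$ satisfies the identical relation, so the difference $D:=F-\Phi$ inherits the symmetries of $F$, is of class $\mathcal{W}_3$, and satisfies $D(X,JY,Z)+D(JX,Y,Z)=0$. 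The same linear-algebra manipulation underlying the preceding proposition (that a class-$\mathcal{W}_3$ structure with vanishing Nijenhuis tensor is of class $\mathcal{W}_0$) then forces $D=0$, i.e.\ $F=\Phi$, which is precisely $(2)$ and closes the loop.
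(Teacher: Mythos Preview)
Your proposal is correct and follows essentially the same route as the paper. The equivalence $(1)\Leftrightarrow(2)$ is identical; for $(2)\Rightarrow(3)$ the paper substitutes \eqref{eq:F_h_reln} into the four-term expression $N=F(JX,Y,Z)-F(JY,X,Z)+F(X,JY,Z)-F(Y,JX,Z)$ rather than first passing through Lemma~\ref{F:properties}, but both computations collapse to the same combination $h(JX,Y,JZ)+h(X,JY,JZ)-h(JX,JY,Z)\pm h(X,Y,Z)$ and the same type identity you describe.

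Where your plan actually improves on the paper is the step $(3)\Rightarrow(2)$. The paper derives $F(X,JY,Z)+F(JX,Y,Z)=h^{(2,1)+(1,2)}(JX,Y,JZ)+h^{(2,1)+(1,2)}(X,JY,JZ)$ and then simply asserts that this ``implies'' \eqref{eq:F_h_reln}, without further comment. Your subtraction argument makes this honest: setting $D=F-\Phi$, the properties you list (symmetry in the last two slots, the $J$-compatibility $D(X,JY,JZ)=\pm D(X,Y,Z)$, and $D(X,JY,Z)+D(JX,Y,Z)=0$) are indeed enough to force $D=0$. One sees this most cleanly by evaluating $D$ on eigenvectors of $J$ (real $\pm1$ in the chiral case, $\pm i$ after complexifying in the anti-Hermitian case): the $J$-compatibility kills the components with the last two arguments of the same type, the relation $D(X,JY,Z)=-D(JX,Y,Z)$ kills those with the first two of the same type, and then symmetry in the last pair kills what remains. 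The cyclic $\mathcal{W}_3$ condition on $D$ turns out not even to be needed for this step.
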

\begin{proof}
	Let us first note that, by definition of $\n^h$, 
	\[ \eta((\n^h_XJ)Y,Z) = F(X,Y,Z) -\frac{1}{2}\left(h(X,JY,Z) - h(X,Y,JZ)\right)\]
	for any $X,Y,Z \in \Gamma(T)$. Consequently, $\n^h J=0$ if and only if 
	\[F(X,Y,Z)=-\frac{1}{2}\left(h(X,JY,Z) - h(X,Y,JZ)\right),\] 
	proving that (1) and (2) are equivalent.
	
	Let us now suppose that $\n^h J=0$ so that $F$ satisfies \eqref{eq:F_h_reln}. As direct computation then gives us
	\[ F(X,Y,Z) + F(Y,Z,X) + F(Z,X,Y) = 0,\]
	implying that $(J,\eta)$ is of class $\mathcal{W}_3$. Moreover, if $J^2 = \mp \id$,
	\[ 
	\begin{aligned}
	N(X,Y,Z)&=F(JX,Y,Z) - F(JY,X,Z) + F(X,JY,Z) - F(Y,JX,Z)\\
	&=h(JX,Y,JZ)+h(X,JY,JZ)-h(JX,JY,Z)\pm h(X,Y,Z)\\
	& = 2\left(h^{(2,1)+(1,2)}(JX,Y,JZ) + h^{(2,1)+(1,2)}(X,JY,JZ)\right).
	\end{aligned}\]
	
	Conversely, suppose that $(\eta,J)$ is of class $\mathcal{W}_3$ and $N$ satisfies \eqref{eq:N_h_reln}. Since $(J,\eta)$ is of class $\mathcal{W}_3$, then
	\[ N(X,Y,Z) = 2\left( F(X,JY,Z) + F(JX,Y,Z)\right) \]
	by \eqref{eq:N_W3}.
	Hence, since $N$ satisfies \eqref{eq:N_h_reln}, we have that 
	\[ F(X,JY,Z) + F(JX,Y,Z) = h^{(2,1)+(1,2)}(JX,Y,JZ) + h^{(2,1)+(1,2)}(X,JY,JZ) \]
	for all $X,Y,Z \in \Gamma(T)$, implying that $F$ satisfies equation \eqref{eq:F_h_reln}.
	This proves that (2) is equivalent to (3).
\end{proof}

\begin{remark*}
	We see from the proposition that the $(3,0)+(0,3)$-component of $h$ has no bearing on whether or not $\n^h J = 0$ since only the $(2,1)+(1,2)$-component of $h$ comes into play in \eqref{eq:N_h_reln}. In fact, if $\n^hJ = 0$, referring to \eqref{eq:N_h_reln}, $J$ is integrable if and only if
	\[ h^{(2,1)+(1,2)}(JX,Y,JZ) = - h^{(2,1)+(1,2)}(X,JY,JZ)\]
	for all $X,Y,Z \in \Gamma(T)$, which is equivalent to $h^{(2,1)+(1,2)} = 0$. In addition, if $h^{(2,1)+(1,2)} = 0$, equation \eqref{eq:F_h_reln} tells us that $F = 0$, which is equivalent to $\lc J = 0$. Putting it all together, we obtain:
\end{remark*}

\begin{corollary}\label{cor:W_0}
	Let $(\eta,J)$ be an anti-Hermitian or almost chiral structure and $\lc$ be the Levi-Civita connection of $\eta$. Also let $h$ be a 3-form and set 
	\[\n^h :=\lc+\frac{1}{2}\eta^{-1}h.\] 
	The following are equivalent:
	\begin{enumerate}
		\item 
		$\n^h J=0$ and $h$ is of type (3,0)+(0,3).
		
		\item 
		$(J,\eta)$ is of class $\mathcal{W}_0$.
		
		\item 
		$\lc J = 0$.
	\end{enumerate}
\end{corollary}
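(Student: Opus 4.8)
The plan is to reduce everything to Proposition \ref{prop:DJ} together with one elementary observation about the type decomposition of a $3$-form. The equivalence $(2) \Leftrightarrow (3)$ is immediate and requires no calculation: by the very definition of the fundamental tensor $F(X,Y,Z) = \eta((\lc_X J)Y, Z)$ and the non-degeneracy of $\eta$, we have $F = 0 \Leftrightarrow \lc J = 0$, while $(J,\eta)$ is of class $\mathcal{W}_0$ precisely when $F = 0$. So the entire content lies in relating these two conditions to condition $(1)$.

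The key lemma I would isolate first is the following type characterization: a $3$-form $h$ is of type $(3,0)+(0,3)$ with respect to $J$ if and only if $h(X,JY,Z) = h(X,Y,JZ)$ for all $X,Y,Z$, equivalently $h^{(2,1)+(1,2)} = 0$. I would prove it by evaluating $h(X,JY,Z) - h(X,Y,JZ)$ on eigenvectors of $J$ (the real eigenbundles when $J^2 = \id$, the complexified eigenbundles $T^{(1,0)}, T^{(0,1)}$ when $J^2 = -\id$): on a pure $(3,0)$ or $(0,3)$ triple the two terms cancel, whereas on any mixed triple the expression is a nonzero scalar multiple of $h$ (namely $2h$ in the chiral case and $2ih$ in the anti-Hermitian case), so the difference vanishes identically exactly when the mixed components of $h$ are zero. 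This single computation treats the anti-Hermitian and chiral cases uniformly.

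With this in hand, $(1) \Rightarrow (2)$ (hence $(3)$) is short: assuming $\n^h J = 0$, Proposition \ref{prop:DJ} gives \eqref{eq:F_h_reln}, namely $F(X,Y,Z) = -\tfrac{1}{2}\left(h(X,JY,Z) - h(X,Y,JZ)\right)$; if in addition $h$ is of type $(3,0)+(0,3)$, the lemma makes the right-hand side vanish, forcing $F = 0$, i.e.\ class $\mathcal{W}_0$ and $\lc J = 0$. For the converse I would follow the observation recorded in the remark preceding the statement: the $(3,0)+(0,3)$ part of $h$ never enters \eqref{eq:N_h_reln}, so once $\n^h J = 0$ the Nijenhuis tensor $N$ is governed solely by $h^{(2,1)+(1,2)}$; requiring $\lc J = 0$ (class $\mathcal{W}_0$) makes $F = 0$ and $J$ integrable, whence $N = 0$ and \eqref{eq:N_h_reln} forces $h^{(2,1)+(1,2)} = 0$, which by the lemma is precisely the statement that $h$ is of type $(3,0)+(0,3)$. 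Assembling these yields the three-way equivalence.

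The main obstacle --- really the only non-formal point --- is the type-characterization lemma, and in particular carrying out the eigenbundle bookkeeping so that the identity $h(X,JY,Z)=h(X,Y,JZ) \Leftrightarrow h^{(2,1)+(1,2)}=0$ is seen to hold simultaneously in the product case $J^2 = \id$ and the complex case $J^2 = -\id$. Once this identity is established, the remaining implications are purely a matter of reading off \eqref{eq:F_h_reln} and \eqref{eq:N_h_reln} from Proposition \ref{prop:DJ}, since both the forward and backward directions hinge on the same fact that the $(2,1)+(1,2)$-component of $h$ is the sole obstruction to $\n^h J = 0$ and to integrability.
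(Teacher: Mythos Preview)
Your arguments for $(2)\Leftrightarrow(3)$ and $(1)\Rightarrow(2)$ are correct and match the paper's remark preceding the corollary: once $\n^h J=0$, equation \eqref{eq:F_h_reln} together with your type lemma forces $F=0$.

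There is a genuine gap in your $(3)\Rightarrow(1)$. You invoke \eqref{eq:N_h_reln} to conclude $h^{(2,1)+(1,2)}=0$ from $N=0$, but \eqref{eq:N_h_reln} is, by Proposition \ref{prop:DJ}, a \emph{consequence} of $\n^h J=0$ --- which is half of condition $(1)$, the very thing you are trying to prove. From $\lc J=0$ you only obtain $F=0$; substituting this into the identity $\eta((\n^h_X J)Y,Z)=F(X,Y,Z)-\tfrac12\bigl(h(X,JY,Z)-h(X,Y,JZ)\bigr)$ shows that, under $(3)$, one has $\n^h J=0$ \emph{if and only if} $h^{(2,1)+(1,2)}=0$. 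So the two halves of $(1)$ become equivalent to each other, but neither follows from $(3)$ alone. Indeed $(3)\Rightarrow(1)$ is false as literally stated: conditions $(2)$ and $(3)$ place no constraint on $h$, so on any flat model with $\lc J=0$ one may choose $h$ with a nonzero $(2,1)$-component and $(1)$ fails. The paper's remark in fact only argues $(1)\Rightarrow(3)$; the corollary should be read either as that one-sided implication (together with the trivial $(2)\Leftrightarrow(3)$), or with $(1)$ reinterpreted existentially in $h$, in which case $h=0$ already witnesses it and the equivalence is immediate.
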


\section{Para-hyperHermitian and Born structures}
In this section, we consider pairs $J,K$ of almost product structures on a manifold $\mathcal{P}$ that anti-commute: 
\[ J^2 = K^2 = \id\] 
and 
\[ \{J,K\}:=JK+KJ=0.\] 
Note that since $J$ and $K$ anti-commute, they in fact both have to be almost complex structures. Indeed, by anti-commutativity, $J$ maps the $\pm 1$-eigenbundle of $K$ isomorphically onto the $\mp 1$-eigenbundle of $K$, implying that the eigenbundles of $K$ have the same rank. A similar argument applies to $J$.

Furthermore, the product 
\[ I\coloneqq JK=-KJ\]
defines an almost complex structure since 
\[ I^2=-\id,\] 
and any pair from the triple $I,J,K$ anticommutes: 
\[ \{ I,J\} = \{ I,K \} = \{ J,K \} = 0.\]
Such a triple of endomorphisms is called a \textbf{para-hypercomplex} or 
\textbf{para-quaternionic structure} on the manifold $\mathcal{P}$ \cite{Cortes:2010ykx,hitchin1990hypersymplectic}. 

In the following, we explore the possibility of adding a metric $\eta$ that is compatible with the three endomorphisms $I,J,K$ of a para-hypercomplex structure. 
We assume that each endomorphism is either orthogonal or anti-orthogonal with respect to $\eta$. Since $K = JI$, we have three cases:
\begin{itemize}
	\item If $I$ is orthogonal with respect to $\eta$, then $J$ and $K$ are either both orthogonal or both anti-orthogonal, giving us the two options:
	\begin{enumerate}
		\item $\eta(I\cdot,I\cdot)=\eta(J\cdot,J\cdot)=\eta(K\cdot,K\cdot)=\eta(\cdot,\cdot)$.
		\item $\eta(I\cdot,I\cdot)=-\eta(J\cdot,J\cdot)=-\eta(K\cdot,K\cdot)=\eta(\cdot,\cdot)$.
	\end{enumerate}
	\item If $I$ is anti-orthogonal with respect to $\eta$, then $J$ and $K$ have different orthogonalities. Without loss of generality, this mean that $J$ is orthogonal and $K$ is anti-orthogonal:
	\begin{enumerate}
		\item[3.] $\eta(I\cdot,I\cdot)=-\eta(J\cdot,J\cdot)=\eta(K\cdot,K\cdot)=-\eta(\cdot,\cdot)$.
	\end{enumerate}
\end{itemize} 
Option 2. gives rise to {\bf para-hyperHermitian geometry} whereas options 1. and 3. give rise to {\bf Born geometry}, which we describe in the next two sections.

\subsection{Para-hyperHermitian geometry}\label{app:parahyperhermitian}
Para-hyperHermitian geometry is the para-complex analogue to hyperHermitian geometry. It has been studied for example in \cite{hitchin1990hypersymplectic,kamada1999neutral} and other works, including in physics \cite{Ooguri:1990ww,Ooguri:1991fp}.

\begin{Def}
	An \textbf{(almost) para-hyperHermitian structure} on a manifold $\mathcal{P}$ is a quadruple $(\eta,I,J,K)$ consisting of an (almost) para-hypercomplex triple $I,J,K$:
	\begin{align*}
	-I^2=J^2=K^2=\id,\quad \{I,J\}=\{J,K\}=\{K,I\}=0,\quad I=JK,
	\end{align*}
	and a pseudo-Riemannian metric $\eta$ such that
	\begin{align*}
	\eta(I\cdot,I\cdot)=-\eta(J\cdot,J\cdot)=-\eta(K\cdot,K\cdot)=\eta(\cdot,\cdot).
	\end{align*}
\end{Def}

The definition of an (almost) para-hyperHermitian structure $(\eta,I,J,K)$ implies that $(J,\eta)$ and $(K,\eta)$ are both (almost) para-Hermitian structures with respect to the same metric $\eta$. Consequently, $\eta$ is necessarily of signature $(n,n)$ and $(I,\eta)$ is an (almost) pseudo-Hermitian structure. Finally, the contractions of $\eta$ with each of the three structures $I,J,K$ are almost symplectic two-form, called the \textbf{fundamental forms} of the (almost) para-Hermitian structure, which we denote:
\begin{align*}
\omega_{I} := \eta I,\quad \omega_J := \eta J,\quad \omega_K := \eta K.
\end{align*}
When $\rd \omega_I = \rd \omega_J = \rd \omega_K = 0$, we call the structure $(\eta,I,J,K)$ {\bf (almost) para-hyperK\"ahler} or {\bf hypersymplectic}. Para-hyperK\"ahler geometry has been extensively studied by both mathematicians and physicists (see \cite{Cortes:2010ykx, hitchin1990hypersymplectic,kamada1999neutral} and the references therein). 

\begin{Ex}[Linear structures]\label{linear_pHK} 
	We first consider para-hyperHermitian structures on vector spaces, which give us the canonical local forms of such structures on any manifold.
	Let $V$ be a $2n$-dimensional real vector space and let $(\eta,I,J,K)$ be a para-hyperHermitian structure on $V$. Choose the form of $(\eta,K)$ to be the standard form of a para-Hermitian structure diagonalizing $K$:
	\begin{align*}
	\eta=
	\begin{pmatrix}
	0 & \id \\
	\id & 0
	\end{pmatrix},\quad
	K=
	\begin{pmatrix}
	\id & 0 \\
	0 & -\id
	\end{pmatrix},
	\end{align*}
	where the blocks are $n\times n$ matrices. Because $I$ anti-commutes with $K$, squares to $-\id$ and is orthogonal with respect to $\eta$, it is of the form
	\begin{align*}
	I=
	\begin{pmatrix}
	0	& -\Omega^{-1} \\
	\Omega & 0
	\end{pmatrix}
	\end{align*}
	with $\Omega$ a skew-symmetric, non-degenerate $n\times n$ matrix. This implies, in particular, that $n=2k$ for some $k\in \mathbb{N}$, so that para-hyperHermitian structures only exist in dimension $4k$. Moreover, $J$ takes the form
	\begin{align*}
	J=
	\begin{pmatrix}
	0 & \Omega^{-1} \\
	\Omega & 0
	\end{pmatrix}.
	\end{align*}
	We can also start from a frame where $\eta$ is diagonal, in which case
	\begin{align*}
	\eta=
	\begin{pmatrix}
	\id & 0 \\
	0 & -\id
	\end{pmatrix},\quad
	K=
	\begin{pmatrix}
	0 & \id \\
	\id & 0
	\end{pmatrix},\quad
	I=
	\begin{pmatrix}
	\hat{I} & 0 \\
	0 & -\hat{I}
	\end{pmatrix},\quad
	J=
	\begin{pmatrix}
	0	 & \hat{I} \\
	-\hat{I} & 0
	\end{pmatrix},
	\end{align*}
	where $\hat{I}$ is a skew-symmetric $n\times n$ matrix such that $\hat{I}^2=-\id$.
\end{Ex}

The above discussion then tells us the following:
\begin{proposition}
	Let $(\mathcal{P}, \eta, I, J, K)$ be a para-hyperHermitian manifold. Then, $\mathcal{P}$ has real dimension $4k$ for some $k \in \mathbb{N}$, in which case $\eta$ has signature $(2k,2k)$. Moreover, there exist local frames of $T\mathcal{P}$ with respect to which
	\begin{align*}
	\eta=
	\begin{pmatrix}
	0 & \id \\
	\id & 0
	\end{pmatrix},\quad
	I=
	\begin{pmatrix}
	0	& -\Omega^{-1} \\
	\Omega & 0
	\end{pmatrix}, \quad
	J=
	\begin{pmatrix}
	0 & \Omega^{-1} \\
	\Omega & 0
	\end{pmatrix}, \quad
	K=
	\begin{pmatrix}
	\id & 0 \\
	0 & -\id
	\end{pmatrix}
	\end{align*}
	where $\Omega$ is a non-degenerate skew-symmetric $n\times n$ matrix, or
	\begin{align*}
	\eta=
	\begin{pmatrix}
	\id & 0 \\
	0 & -\id
	\end{pmatrix},\quad
	I=
	\begin{pmatrix}
	\hat{I} & 0 \\
	0 & -\hat{I}
	\end{pmatrix},\quad
	J=
	\begin{pmatrix}
	0	 & \hat{I} \\
	-\hat{I} & 0
	\end{pmatrix}, \quad
	K=
	\begin{pmatrix}
	0 & \id \\
	\id & 0
	\end{pmatrix}
	\end{align*}
	where $\hat{I}$ is a skew-symmetric $n\times n$ matrix such that $\hat{I}^2=-\id$.
\end{proposition}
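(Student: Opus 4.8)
The plan is to reduce the statement to the fibrewise linear algebra already carried out in Example \ref{linear_pHK} and then to promote the pointwise normal forms to smooth local frames. At each point $p \in \mathcal{P}$, the fibre $(T_p\mathcal{P}, \eta_p, I_p, J_p, K_p)$ is a para-hyperHermitian vector space, so Example \ref{linear_pHK} applies verbatim: it forces $\dim T_p\mathcal{P} = 4k$ for some $k \in \mathbb{N}$ and $\eta_p$ of signature $(2k,2k)$, and it exhibits adapted bases in which $\eta, I, J, K$ take either of the two stated block forms. Since $\dim \mathcal{P}$ and the signature of $\eta$ are integer-valued and continuous, they are constant on connected components, which gives the first assertion. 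It then remains only to arrange these bases to depend smoothly on $p$.

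For the first normal form, I would build the frame out of the eigenbundles of $K$. Because $K$ is a smooth endomorphism with $K^2 = \id$ and eigenbundles $T^{(1,0)}, T^{(0,1)}$ of constant rank $n = 2k$, these are smooth subbundles of $T\mathcal{P}$, and the para-Hermitian compatibility $\eta(K\cdot,K\cdot) = -\eta$ makes each of them $\eta$-isotropic with $\eta$ inducing a perfect pairing $T^{(1,0)} \times T^{(0,1)} \to \mathbb{R}$. I would choose any smooth local frame $f_1, \dots, f_n$ of $T^{(1,0)}$ and let $g_1, \dots, g_n$ be the smooth frame of $T^{(0,1)}$ dual to it under this pairing, i.e. $\eta(f_i, g_j) = \delta_{ij}$. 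In the frame $(f_i, g_j)$ the endomorphism $K$ is $\mathrm{diag}(\id, -\id)$ and $\eta$ is the off-diagonal identity, as required.

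The two remaining blocks then come for free from $I = JK$. Since $K = \pm \id$ on $T^{(1,0)}$ resp. $T^{(0,1)}$, we have $I|_{T^{(1,0)}} = J|_{T^{(1,0)}}$ and $I|_{T^{(0,1)}} = -J|_{T^{(0,1)}}$, while anticommutation with $K$ sends each eigenbundle to the other. Defining $\Omega$ by $J f_i = \sum_j \Omega_{ji}\, g_j$, a short computation using $\eta(g_k, f_j) = \delta_{kj}$ gives $\Omega_{ji} = \omega_J(f_i, f_j)$ with $\omega_J = \eta J$; this is smooth, skew-symmetric because $\omega_J$ is a $2$-form, and non-degenerate because $J$ is invertible. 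One then reads off that $J$ and $I$ take exactly the claimed forms, with lower-left block $\Omega$ and upper-right blocks $\pm \Omega^{-1}$, and non-degeneracy of the skew matrix $\Omega$ re-confirms $n = 2k$. The second normal form would be obtained the same way, starting instead from a smooth $\eta$-pseudo-orthonormal frame in which $K$ is off-diagonal (its existence is again the fibrewise statement of Example \ref{linear_pHK} combined with a smooth Gram--Schmidt process) and defining $\hat I$ as the matrix of $I$ on the corresponding subbundle.

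I expect the only genuine subtlety to be the smoothness bookkeeping, that is, the construction of the smooth dual frame $g_j$ and of the smooth pseudo-orthonormal frame, rather than any of the algebraic identities, which are all fibrewise and already contained in Example \ref{linear_pHK}.
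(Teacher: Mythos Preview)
Your proposal is correct and follows essentially the same approach as the paper: the paper simply states ``The above discussion then tells us the following'' immediately after Example \ref{linear_pHK}, so its proof is nothing more than an appeal to the fibrewise linear algebra there. Your write-up is in fact more complete than the paper's, since you explicitly address the passage from pointwise bases to smooth local frames (via the smooth eigenbundle decomposition of $K$ and the $\eta$-dual frame construction), a step the paper leaves implicit.
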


\begin{Ex}
	If $\mathcal{P} = \mathbb{R}^{4k}$ and $(\eta,I,J,K)$ is one of the linear para-hyperHermitian structures described in example \ref{linear_pHK}, then it is para-hyperK\"ahler. Moreover, this linear para-hyperK\"ahler structure also descends to a para-hyperK\"ahler structure on the $4k$-torus $T^{4k} = \mathbb{R}^{4k}/\Lambda$, where $\Lambda$ is a $4k$-dimensional lattice in $\mathbb{R}^{4k}$. It is also known that para-hyperK\"ahler structures exist on Kodaira surfaces \cite{kamada1999neutral}.
\end{Ex}

\begin{Ex}
	We now give an example of a para-hyperHermitian structure that is {\em not} para-hyperK\"ahler. Let $\mathcal{P} = \mathbb{R}^4$ with coordinates $(x_1,x_2,x_3,x_4)$. Consider the coordinate frames $\{ \partial/\partial x_i \}_{i=1}^4$ and $\{ dx_i \}_{i=1}^4$ of $T\mathbb{R}^4$ and $T^*\mathbb{R}^4$, respectively. Then, with respect to these frames,
	\[ \eta = dx_1 \otimes dx_1 + dx_2 \otimes dx_2 - dx_3 \otimes dx_3 - dx_4 \otimes dx_4,\]
	\[ J(\partial/\partial x_1) = -\partial/\partial x_4, \quad
	J(\partial/\partial x_2) = \partial/\partial x_3, \]
	and
	\[ K(\partial/\partial x_1) = \partial/\partial y_1, \quad
	K(\partial/\partial x_2) = \partial/\partial y_2, \]
	determine a para-hyperK\"ahler structure on $\mathbb{R}^4$ with $I = JK$.
	
	Let $X = \mathbb{R}^4\backslash \{ 0 \}/\sim$, where $x \sim 2x$ for all 
	$x = (x_1,x_2,x_3,x_4) \in \mathbb{R}^4\backslash \{ 0 \}$. 
	Note that $X$ is a compact 4-manifold known as a {\em Hopf surface} that diffeomorphic to $S^3 \times S^1$, and can therefore not admit K\"ahler structures (because $b_1(X)$ is odd).
	The para-hypercomplex structure $(I,J,K)$ we defined on $\mathbb{R}^4$ nonetheless descends to $X$.
	Let
	\[ |x|^2 := x_1^2 + x_2^2 + x_3^2 + x_4^2\]
	for all 
	$x = (x_1,x_2,x_3,x_4) \in \mathbb{R}^4\backslash \{ 0 \}$. Then, if we set $\tilde{\eta} = \eta/| \cdot|^2$, the quadruple $(\tilde{\eta},I,J,K)$ is a para-hyperHermitian structure on $X$ whose fundamental forms $\tilde{\omega}_I := \tilde{\eta}I, \tilde{\omega}_J := \tilde{\eta}J, \tilde{\omega}_K := \tilde{\eta}_K$
	are such that 
	\[ \rd \tilde{\omega}_I \neq 0 \]
	and  
	\[ \rd^p_J \tilde{\omega}_J = \rd^p_K \tilde{\omega}_K \neq 0,\]
	implying that $(\tilde{\eta},I,J,K)$ is not para-hyperK\"ahler. 
	
	For other examples of para-hyperHermitian structures that are not para-hyperK\"ahler, we refer the reader to \cite{Davidov:2009uc,Davidov2012Surfaces} and the references therein.
\end{Ex}

\subsection{Born Geometry}\label{sec:born}
Born geometry was introduced in physics \cite{freidel2014born,freidel2015metastring} and its properties later discussed in \cite{freidel2017generalised,Freidel:2018tkj,Szabo-paraherm}. Here we present Born geometry from a slightly different point of view, keeping the analogy with para-hyperHermitian geometry explicit:

\begin{Def}\label{def:born_new}
	An \textbf{(almost) Born structure} on a manifold $\mathcal{P}$ is a quadruple $(\eta,I,J,K)$ consisting of an (almost) para-hypercomplex triple $I,J,K$:
	\begin{align*}
	-I^2=J^2=K^2=\id,\quad \{I,J\}=\{J,K\}=\{K,I\}=0,\quad I=JK,
	\end{align*}
	and a pseudo-Riemannian metric $\eta$ such that
	\begin{align*}
	\eta(I\cdot,I\cdot)=-\eta(J\cdot,J\cdot)=\eta(K\cdot,K\cdot)=-\eta(\cdot,\cdot).
	\end{align*}
\end{Def}
This again implies that $\eta$ is of signature $(n,n)$. However, $(\eta,I)$ is now an anti-Hermitian pair and the symmetry between $J$ and $K$ is seemingly broken: $(\eta,K)$ is (almost) para-Hermitian while $(\eta,J)$ is (almost) chiral. This means that the contractions between the individual pairs are
\begin{align*}
\eta I=g_I,\quad \eta J=g_J,\quad \eta K=\omega_K,
\end{align*}
where $g_I$ and $g_J$ are metric structures and $\omega_K$ is an almost-symplectic form. We now compare the definition \ref{def:born_new} with the one found in literature\footnote{The definition is slightly tweaked so that it fits the language in the present discussion.}

\begin{Def}[\cite{Freidel:2018tkj}, Def. 10]\label{def:born_old}
	Let $(\PS,\eta,\omega)$ be an almost para-Hermitian manifold and let $\HH$ be a Riemannian metric satisfying
	\begin{align}
	\eta^{-1}\HH=\HH^{-1}\eta,\quad \omega^{-1}\HH=-\HH^{-1}\omega .
	\end{align}
	Then we call the triple $(\eta,\omega,\HH)$ an (almost) Born structure on $\PS$ where $\PS$ is called an (almost) Born manifold and $(\PS,\eta,\omega,\HH)$ a Born geometry.
\end{Def}

\begin{corollary}
	The Definition \ref{def:born_old} is equivalent to the Definition \ref{def:born_new} with $\eta J=g_J$ a Riemannian metric.
\end{corollary}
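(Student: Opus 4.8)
The plan is to exhibit an explicit dictionary between the two sets of data and to check that each defining condition on one side translates into a defining condition on the other. Throughout I regard $\eta$, $\HH$ and $\omega$ as the musical maps $T\to T^*$, so that $\eta^{-1}\HH$, $\omega^{-1}\HH$ and the like are genuine endomorphisms of $T$. The correspondence I would use is
\[
\HH = g_J = \eta J,\qquad \omega = \omega_K = \eta K,\qquad I = JK,
\]
equivalently $J=\eta^{-1}\HH$ and $K=\eta^{-1}\omega$. Under this dictionary the distinguished metric $\HH$ of Definition \ref{def:born_old} is precisely the contraction $g_J=\eta J$ singled out in the statement, and the requirement that $\HH$ be Riemannian matches the hypothesis that $g_J$ be Riemannian.

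First I would show that Definition \ref{def:born_new} (with $g_J$ Riemannian) implies Definition \ref{def:born_old}. That $(\PS,\eta,\omega)$ is almost para-Hermitian is immediate, since $(\eta,K)$ is a para-Hermitian pair by the compatibility $\eta(K\cdot,K\cdot)=-\eta(\cdot,\cdot)$ together with $K^2=\id$, so $\omega=\eta K$ is a non-degenerate two-form. For the two Born conditions I would compute directly with the dictionary: using $J^2=\id$ one gets $\eta^{-1}\HH = J = J^{-1} = \HH^{-1}\eta$, which is the first condition; and using $K^2=J^2=\id$ together with $\{J,K\}=0$ one gets $\omega^{-1}\HH = KJ = -JK = -I$ while $\HH^{-1}\omega = JK = I$, so that $\omega^{-1}\HH=-\HH^{-1}\omega$, the second condition. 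This direction is a short computation.

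For the converse I would start from Definition \ref{def:born_old} and \emph{define} $K=\eta^{-1}\omega$, $J=\eta^{-1}\HH$ and $I=JK$. The key observation is that the two Born conditions are exactly the two algebraic identities making $(I,J,K)$ a para-hypercomplex triple: the first condition $\eta^{-1}\HH=\HH^{-1}\eta$ says precisely $J=J^{-1}$, i.e.\ $J^2=\id$; and, since $\omega^{-1}\HH=KJ$ and $\HH^{-1}\omega=JK$, the second condition $\omega^{-1}\HH=-\HH^{-1}\omega$ says precisely $\{J,K\}=0$. From $J^2=K^2=\id$ and $\{J,K\}=0$ I would then read off $I^2=JKJK=-J^2K^2=-\id$ and the remaining anticommutators $\{I,J\}=\{I,K\}=0$, so that $(I,J,K)$ is para-hypercomplex with $I=JK$. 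The compatibility signs follow from the symmetry type of $\HH$ and $\omega$: since $\HH$ is symmetric, $J=\eta^{-1}\HH$ is $\eta$-self-adjoint, whence $\eta(J\cdot,J\cdot)=\eta(\cdot,\cdot)$; since $\omega$ is skew, $K=\eta^{-1}\omega$ is $\eta$-anti-self-adjoint, whence $\eta(K\cdot,K\cdot)=-\eta(\cdot,\cdot)$; and then $\eta(I\cdot,I\cdot)=\eta(JK\cdot,JK\cdot)=\eta(K\cdot,K\cdot)=-\eta(\cdot,\cdot)$. This recovers exactly the compatibility of Definition \ref{def:born_new}, with $g_J=\HH$ Riemannian by assumption.

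I do not expect a genuine obstacle here, as the content is entirely linear-algebraic and pointwise. The only place requiring care is the bookkeeping with the musical isomorphisms, in particular verifying that $\omega^{-1}\HH=KJ$ and $\HH^{-1}\omega=JK$ with the correct operator ordering, since it is the anticommutator $\{J,K\}$ and not the commutator that must appear, together with confirming that the symmetry of $\HH$ and the skewness of $\omega$ produce the correct signs in the $\eta$-compatibility of $I,J,K$. That the eigenbundles of $K$ have equal rank, needed for $K$ to be genuinely para-complex rather than merely a product structure, is already guaranteed by the almost para-Hermitian hypothesis in Definition \ref{def:born_old} (and, in Definition \ref{def:born_new}, by anticommutation with the almost complex structure $I$).
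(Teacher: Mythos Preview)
Your proposal is correct and follows essentially the same approach as the paper's proof: set up the dictionary $\HH=\eta J$, $\omega=\eta K$, $J=\eta^{-1}\HH$, $K=\eta^{-1}\omega$, and verify that the Born conditions $\eta^{-1}\HH=\HH^{-1}\eta$ and $\omega^{-1}\HH=-\HH^{-1}\omega$ correspond precisely to $J^2=\id$ and $\{J,K\}=0$. You are in fact more thorough than the paper, which leaves the converse direction as ``follows similarly'' and does not spell out the $\eta$-compatibility signs for $I,J,K$; your explicit checks that symmetry of $\HH$ gives $\eta$-self-adjointness of $J$ and skewness of $\omega$ gives $\eta$-anti-self-adjointness of $K$ are exactly what is needed there.
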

\begin{proof}
	Assume the properties in Definition \ref{def:born_new} with $g_J$ Riemannian. Then upon identifying $\omega_K\coloneqq \eta K$ and $g_J=\eta J$ with $\omega$ and $\HH$ in Definition \ref{def:born_new}, we get $(\eta,\omega)$ is indeed almost para-Hermitian and $\HH$ is Riemannian by assumption. The condition $\eta^{-1}\HH=\HH^{-1}\eta$ follows directly from the fact that the pair $(\eta,J)$ is chiral. Now, because $(\eta,K)$ is para-Hermitian, $K=\omega^{-1}\eta=\eta^{-1}\omega$ and because $(\eta,J)$ is chiral, $J=g_J^{-1}\eta=\eta^{-1}g_J$. The relation $JK=-KJ$ then implies $\omega^{-1}\HH=-\HH^{-1}\omega$.
	
	The converse statement Definition \ref{def:born_old} $\Rightarrow$ Definition \ref{def:born_new} follows similarly.
\end{proof}

We see that apart from the restriction on the signature of $g_J/\HH$, the two definitions are equivalent. From now on, we will be using the customary notations $\HH$ for $g_J$ and $\omega$ for $\omega_K$.

We will now show that the metric $g_I\coloneqq \eta'$ is in fact another signature $(n,n)$ metric such that $(\eta',J)$ is para-Hermitian and $(\eta',K)$ is chiral, exchanging the roles of $J$ and $K$.

\begin{proposition}\label{prop:born=bichiral=bihermitian}
	Let $(\eta,I,J,K)$ be an (almost) Born geometry, such that $(\eta,I)$, $(\eta,J)$ and $(\eta,K)$ are (almost) anti-Hermitian, chiral and para-Hermitian, respectively. Then $(\eta',I)$,  $(\eta',J)$ and $(\eta',K)$, where $\eta'=\eta I$ are (almost) anti-Hermitian, para-Hermitian and chiral, respectively. In particular, $(\eta,J)$ and $(\eta',K)$ is a pair of chiral structures sharing the same metric $\HH=\eta J=\eta' K$ and $(\eta,K)$ and $(\eta',J)$ is a pair of para-Hermitian structures sharing the same fundamental form $\omega=\eta K=\eta' J$.
\end{proposition}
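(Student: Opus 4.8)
The plan is to treat the statement as a pointwise linear-algebra computation resting on two ingredients: the multiplication table of the para-hypercomplex triple $(I,J,K)$ and the (anti-)self-adjointness of $I,J,K$ with respect to $\eta$. First I would extract the adjointness relations from the compatibility conditions in Definition~\ref{def:born_new}. Replacing $Y$ by $IY$ in $\eta(IX,IY)=-\eta(X,Y)$ and using $I^2=-\id$ gives $\eta(IX,Y)=\eta(X,IY)$, so $I$ is $\eta$-self-adjoint; the same manipulation applied to $\eta(JX,JY)=\eta(X,Y)$ with $J^2=\id$ shows $J$ is $\eta$-self-adjoint, while $\eta(KX,KY)=-\eta(X,Y)$ with $K^2=\id$ shows $K$ is $\eta$-anti-self-adjoint. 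The self-adjointness of $I$ makes $\eta'=\eta I$ symmetric, and non-degeneracy is inherited from $\eta$ since $I$ is invertible, so $\eta'$ is a genuine pseudo-Riemannian metric. In parallel I would record the products forced by $I=JK$ and anti-commutativity: $JK=I$, $IK=J$, $JI=K$, together with $KJ=-I$, $KI=-J$, $IJ=-K$.

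With these in hand each compatibility for $\eta'$ collapses to a one-line reduction. For $J$: $\eta'(JX,JY)=\eta(IJX,JY)=-\eta(KX,JY)=-\eta(JKX,Y)=-\eta(IX,Y)=-\eta'(X,Y)$, using $IJ=-K$, self-adjointness of $J$, and $JK=I$; hence $(\eta',J)$ is para-Hermitian. For $K$: $\eta'(KX,KY)=\eta(IKX,KY)=\eta(JX,KY)=-\eta(KJX,Y)=\eta(IX,Y)=\eta'(X,Y)$, using $IK=J$, anti-self-adjointness of $K$, and $KJ=-I$; hence $(\eta',K)$ is chiral. For $I$: $\eta'(IX,IY)=\eta(I^2X,IY)=-\eta(X,IY)=-\eta'(X,Y)$, using $I^2=-\id$ and self-adjointness of $I$; hence $(\eta',I)$ is anti-Hermitian. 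The split signature $(n,n)$ needed for the para-Hermitian and anti-Hermitian conclusions is automatic from the general fact, recalled earlier in the excerpt, that anti-orthogonality of an almost (para-)complex structure forces split signature.

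For the ``in particular'' assertions I would simply contract the new metric with the structures: $\eta'K=\eta(IK)=\eta J=\HH$, exhibiting $(\eta,J)$ and $(\eta',K)$ as two chiral structures with the common metric $\HH$; and $\eta'J=\eta(IJ)=-\eta K=-\omega$, exhibiting $(\eta,K)$ and $(\eta',J)$ as two para-Hermitian structures whose fundamental forms agree up to the sign fixed by the orientation convention for $\omega$. I do not expect a genuine obstacle: the whole argument is algebraic and fibrewise, and the only delicate point is the sign bookkeeping in the multiplication table and in the anti-self-adjointness of $K$ --- precisely the source of the sign in $\eta'J=-\omega$, which should be reconciled with the stated convention $\omega=\eta'J$.
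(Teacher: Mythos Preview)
Your proof is correct and follows essentially the same route as the paper: a direct fibrewise linear-algebra check using the para-hypercomplex multiplication table and the $\eta$-(anti-)self-adjointness of $I,J,K$. Your version is in fact more careful than the paper's, which simply writes down the three orthogonality lines and then asserts that $\HH=\eta J=\eta' K$ and $\omega=\eta K=\eta' J$ ``follow from $JK=I$'' without further comment.

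The sign discrepancy you flagged is real, not a slip on your part. With the paper's conventions $\eta'(X,Y)=\eta(IX,Y)$ and $(\eta A)(X,Y)=\eta(AX,Y)$, one has $IJ=JKJ=-J^2K=-K$, so $\eta' J=\eta(IJ)=-\eta K=-\omega$. Thus $(\eta',J)$ and $(\eta,K)$ have fundamental forms differing by a global sign; the equality $\omega=\eta' J$ in the statement is a minor sign inconsistency in the paper itself. Your remark that this should be reconciled with the orientation convention is exactly the right way to handle it.
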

\begin{proof}
	Let us check the required orthogonality properties. Using $JK=I$ and the orthogonality properties with $\eta$,
	\begin{align*}
	\eta'&(IX,IY)=-\eta(X,IY)=-\eta(IX,Y)=-\eta'(X,Y),\\
	\eta'&(JX,JY)=-\eta(KX,JY)=-\eta(IX,Y)=-\eta'(X,Y),\\
	\eta'&(KX,KY)=\eta(JX,KY)=\eta(IX,Y)=\eta'(X,Y).
	\end{align*}
	The equalities $\HH=\eta J=\eta' K$ and $\omega=\eta K=\eta' J$ again follow from  $JK=I$.
\end{proof}

\begin{remark}
	Proposition \ref{prop:born=bichiral=bihermitian} also shows that apart from para-hyper-Hermitian geometry, Born Geometry is the only other option for the choice of orthogonality for the para-hypercomplex structure $(I,J,K)$ with respect to some metric. Denoting orthogonal by $+$ and anti-orthogonal by $-$ and keeping the notation consistent with the above discussion, the options for $(I,J,K)$ are:
	\begin{itemize}
		\item $(+,-,-)$: para-Hyper-Hermitian.
		\item $(+,+,+)$: Born with respect to $\HH$.
		\item $(-,+,-)$: Born with respect to $\eta$.
		\item $(-,-,+)$: Born with respect to $\eta'$.
	\end{itemize}	
\end{remark}

\begin{Ex}[Linear structures]\label{linear_born} 
	We now discuss Born structures on a vector space. This can also be understood as the canonical local form of the geometry on manifolds.
	Let $V$ be a $2n$-dimensional real vector space and $(\eta, I, J, K)$ be a para-hyperHermitian structure on $V$. Pick a basis of $V$ with respect to which $(\eta,K)$ has the standard form of a para-Hermitian structure diagonalizing $K$ (see example \ref{local_paraH}):
	\begin{align*}
	\eta=
	\begin{pmatrix}
	0 & \id \\
	\id & 0
	\end{pmatrix},\quad
	K=
	\begin{pmatrix}
	\id & 0 \\
	0 & -\id
	\end{pmatrix}.
	\end{align*}
	The anti-Hermitian and chiral structures then have the form
	\begin{align*}
	I=
	\begin{pmatrix}
	0 & -g^{-1} \\
	g & 0
	\end{pmatrix},\quad
	J=
	\begin{pmatrix}
	0 & g^{-1} \\
	g & 0
	\end{pmatrix}
	\end{align*}
	with respect to this basis,
	for some non-degenerate symmetric $n\times n$ matrix $g$. The tensors $\eta'=\eta I$, $\HH=\eta J$ and $\omega=\eta K$ then take the form
	\begin{align*}
	\eta'=
	\begin{pmatrix}
	g & 0 \\
	0 & -g^{-1}
	\end{pmatrix},\quad
	\HH=
	\begin{pmatrix}
	g & 0 \\
	0 & g^{-1}
	\end{pmatrix},\quad
	\omega=
	\begin{pmatrix}
	0 & -\id \\
	\id & 0
	\end{pmatrix}.
	\end{align*}
\end{Ex}

The above discussion then tells us the following:
\begin{proposition}
	Let $(\mathcal{P}, \eta, I, J, K)$ be an almost Born manifold. Then, there exist local frames of $T\mathcal{P}$ with respect to which
	\begin{align*}
	\eta=
	\begin{pmatrix}
	0 & \id \\
	\id & 0
	\end{pmatrix},\quad
	I=
	\begin{pmatrix}
	0 & -g^{-1} \\
	g & 0
	\end{pmatrix}, \quad
	J=
	\begin{pmatrix}
	0 & g^{-1} \\
	g & 0
	\end{pmatrix}, \quad
	K=
	\begin{pmatrix}
	\id	& 0 \\
	0 & -\id
	\end{pmatrix},
	\end{align*}
	where $g$ is a non-degenerate symmetric $n\times n$ matrix. 
\end{proposition}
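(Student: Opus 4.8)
The plan is to reduce the statement to the linear normal form established in Example~\ref{linear_born}, and to upgrade that pointwise form to a smooth local frame by first normalising the para-Hermitian part of the data. The starting observation is that the defining relations of an almost Born structure already exhibit $(\eta,K)$ as an almost para-Hermitian structure: since $K^2=\id$ and $\eta(K\cdot,K\cdot)=-\eta(\cdot,\cdot)$, the contraction $\omega=\eta K$ is a non-degenerate two-form and $\eta$ has split signature $(n,n)$, as already noted in Proposition~\ref{prop:born=bichiral=bihermitian}.

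First I would apply the local normal form for almost para-Hermitian structures recalled in Example~\ref{local_paraH} (Bejan's theorem) to the pair $(\eta,K)$. This furnishes, about any point of $\PS$, a smooth local frame of $T\PS$ in which
\[
\eta=\begin{pmatrix} 0 & \id \\ \id & 0 \end{pmatrix},
\qquad
K=\begin{pmatrix} \id & 0 \\ 0 & -\id \end{pmatrix},
\]
which is exactly the form demanded of $\eta$ and $K$ in the statement. This is the only step that uses genuinely analytic, as opposed to pointwise algebraic, input, and it is precisely what makes the conclusion hold on a neighbourhood rather than merely at a single point; I therefore expect it to be the main obstacle, or rather the only nontrivial ingredient, of the argument.

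Next I would pin down $I$ and $J$ by fibrewise linear algebra within this fixed frame, exactly as in Example~\ref{linear_born}. Since $\{I,K\}=0$, the endomorphism $I$ interchanges the $\pm1$-eigenbundles of $K$ and is hence block off-diagonal, $I=\left(\begin{smallmatrix} 0 & B \\ C & 0 \end{smallmatrix}\right)$. The relation $I^2=-\id$ forces $BC=CB=-\id$, and the anti-Hermitian compatibility $\eta(I\cdot,I\cdot)=-\eta$, i.e. $I^{\top}\eta I=-\eta$ in this frame, forces $C^{\top}=C$; writing $g:=C$ then gives $B=-g^{-1}$ and
\[
I=\begin{pmatrix} 0 & -g^{-1} \\ g & 0 \end{pmatrix},
\]
with $g$ a symmetric, non-degenerate $n\times n$ matrix. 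Using $I=JK$ together with $K^2=\id$ gives $J=IK$, and a direct multiplication yields $J=\left(\begin{smallmatrix} 0 & g^{-1} \\ g & 0 \end{smallmatrix}\right)$, as claimed.

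Finally, I would observe that the matrix $g$ varies smoothly with the base point: the frame is smooth and the tensors $\eta,I,J,K$ are smooth, so the constant-coefficient matrix expressions above read off $g$ as a smooth field of symmetric, non-degenerate $n\times n$ matrices, its symmetry and invertibility holding at each point by the computation just performed. No integrability hypothesis enters anywhere, and no further adaptation of the frame is needed once $(\eta,K)$ has been brought to standard form. This completes the reduction of the local frame statement to the purely linear content of Example~\ref{linear_born}.
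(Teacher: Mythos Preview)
Your proposal is correct and follows essentially the same route as the paper: the paper's proof is simply the sentence ``The above discussion then tells us the following,'' referring to Example~\ref{linear_born}, which in turn invokes Example~\ref{local_paraH} to put $(\eta,K)$ in standard para-Hermitian form and then determines $I$ and $J$ by the same block-off-diagonal linear algebra you carry out. You are just more explicit than the paper about the passage from the pointwise normal form to a smooth local frame and about the verification that $g$ is symmetric.
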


\begin{Ex}\label{ex:BornHK}
	If $\mathcal{P} = \mathbb{R}^{4k}$ and $(\eta,I,J,K)$ is one of the linear Born structures described in example \ref{linear_born}, then $(\HH,I)$ is pseudo-K\"ahler and $(\HH,J)$, $(\HH,K)$ are both chiral of class $\mathcal{W}_0$. In other words, $(\eta,I,J,K)$ is a Born structure of \textbf{hyperK\"ahler-type}. Furthermore, this linear structure descends to a Born structure of hyperK\"ahler-type on the $4k$-torus $T^{4k} = \mathbb{R}^{4k}/\Lambda$, where $\Lambda$ is a $4k$-dimensional lattice in $\mathbb{R}^{4k}$. 
\end{Ex}

\section{Anti-hyperHermitian structures}\label{sec:anti-HH}
Finally, we consider pairs $J,K$ of almost complex structures on a manifold $\mathcal{P}$ that anti-commute: 
\[ J^2 = K^2 = - \id\] 
and 
\[ \{J,K\}:=JK+KJ=0.\] 
The product 
\[ I\coloneqq JK=-KJ\]
then  defines an almost complex structure since 
\[ I^2=-\id,\] 
and any pair from the triple $I,J,K$ anticommutes: 
\[ \{ I,J\} = \{ I,K \} = \{ J,K \} = 0.\]
In other words, the triple $I, J, K$ is an \textbf{almost hypercomplex structure} on $\mathcal{P}$. This implies, in particular, that $\mathcal{P}$ must have dimension $4k$ for some $k\in \mathbb{N}$ (because hypercomplex structures only exist in dimension $4k$).

Let $\eta$ be a metric that is compatible with the three endomorphisms $I,J,K$ of an almost hypercomplex structure $(\eta,I,J,K)$. 
We assume that each endomorphism is either orthogonal or anti-orthogonal with respect to $\eta$. Since $I = JK$, we obtain three cases:
\begin{itemize}
	\item If $I$ is orthogonal with respect to $\eta$, then $J$ and $K$ are either both orthogonal or both anti-orthogonal, giving us the two options:
	\begin{enumerate}
		\item $\eta(I\cdot,I\cdot)=\eta(J\cdot,J\cdot)=\eta(K\cdot,K\cdot)=\eta(\cdot,\cdot)$.
		\item $\eta(I\cdot,I\cdot)=-\eta(J\cdot,J\cdot)=-\eta(K\cdot,K\cdot)=\eta(\cdot,\cdot)$.
	\end{enumerate}
	\item If $I$ is anti-orthogonal with respect to $\eta$, then $J$ and $K$ have different orthogonalities. Without loss of generality, this mean that $J$ is orthogonal and $K$ is anti-orthogonal:
	\begin{enumerate}
		\item[3.] $\eta(I\cdot,I\cdot)=-\eta(J\cdot,J\cdot)=\eta(K\cdot,K\cdot)=-\eta(\cdot,\cdot)$.
	\end{enumerate}
\end{itemize} 
Option 1. gives rise to {\bf hyperHermitian geometry} whereas options 2. and 3. give rise to {\bf anti-hyperHermitian geometry}, which we describe in the next two sections.

\begin{Def}
	An \textbf{(almost) anti-hyperHermitian structure} on a manifold $\mathcal{P}$ is a quadruple $(\eta,I,J,K)$ consisting of an (almost) hypercomplex triple $I,J,K$:
	\begin{align*}
	I^2=J^2=K^2=-\id,\quad \{I,J\}=\{J,K\}=\{K,I\}=0,\quad I=JK,
	\end{align*}
	and a pseudo-Riemannian metric $\eta$ such that
	\begin{align*}
	\eta(I\cdot,I\cdot)=-\eta(J\cdot,J\cdot)=-\eta(K\cdot,K\cdot)=\eta(\cdot,\cdot).
	\end{align*}
\end{Def}

The definition of an (almost) anti-hyperHermitian structure $(\eta,I,J,K)$ implies that $(J,\eta)$ and $(K,\eta)$ are both (almost) anti-Hermitian structures with respect to the same metric $\eta$. Consequently, $\eta$ is necessarily of signature $(n,n)$ and $(I,\eta)$ is an (almost) pseudo-Hermitian structure. Moreover, if $J$ and $K$ are both integrable, then so is $I$, and if $J$ and $K$ are both parallel with respect the Levi-Civita connection of $\eta$, then so is $I$. Consequently, if $(J,\eta)$ and $(K,\eta)$ are both of class $\mathcal{W}_0$, then $(I,\eta)$ is pseudo-K\"ahler. An anti-hyperHermitian structure $(\eta, I , J, K)$ is thus said to be \textbf{anti-hyperK\"ahler} if $J$ and $K$ are both $(J,\eta)$ and $(K,\eta)$ are both of class $\mathcal{W}_0$.

\begin{Ex}[Linear structures]\label{linear_antiHH} 
	We first consider anti-hyperHermitian structures on vector spaces, which give us the canonical local forms of such structures on any manifold.
	Let $V$ be a $4n$-dimensional real vector space and let $(\eta,I,J,K)$ be a para-hyperHermitian structure on $V$. 
	Choose a basis of $V$ with respect to which $(\eta,J)$ has the standard form of an anti-Hermitian structure (see \ref{local_AntiH}):
	\begin{align*}
		J=
	\begin{pmatrix}
	0 & -\id_n \\
	\id_n & 0
	\end{pmatrix}
	,\quad
	\eta=
	\begin{pmatrix}
	\id_n & A \\
	A & -\id_n
	\end{pmatrix},
	\end{align*}
	where the blocks are $n\times n$ matrices. Because $K$ anti-commutes with $J$, squares to $\id$ and is anti-orthogonal with respect to $\eta$, it is of the form
	\begin{align*}
	K=
	\begin{pmatrix}
	0	& B \\
	B & 0
	\end{pmatrix},
	\end{align*}
	with $B$ a skew-symmetric $n\times n$ matrix such that $B^2=-\id$. Moreover, $I = JK$ takes the form
	\begin{align*}
	I=
	\begin{pmatrix}
	-B & 0 \\
	0 & B
	\end{pmatrix}.
	\end{align*}
\end{Ex}

The above discussion then tells us the following:
\begin{proposition}
	Let $(\mathcal{P}, \eta, I, J, K)$ be an almost anti-hyperHermitian manifold. Then, $\mathcal{P}$ has real dimension $4k$, $k \in \mathbb{N}$, and $\eta$ has signature $(2k,2k)$. In addition, there exist local frames of $T\mathcal{P}$ with respect to which
	\begin{align*}
	\eta= 
	\begin{pmatrix}
	\id & A \\
	A & -\id
	\end{pmatrix},\quad
	I=
	\begin{pmatrix}
	-B & 0 \\
	0 & B
	\end{pmatrix}, \quad
	J=
	\begin{pmatrix}
	0 & -\id \\
	\id & 0
	\end{pmatrix}, \quad
	K=
	\begin{pmatrix}
	0	& B \\
	B & 0
	\end{pmatrix},
	\end{align*}
	where $B$ is a skew-symmetric $n\times n$ matrix such that $B^2=-\id$. 
\end{proposition}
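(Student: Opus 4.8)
The plan is to reduce the statement to the pointwise linear algebra of Example \ref{linear_antiHH} and then to promote the resulting normal form to a smooth local frame, treating the dimension, the signature, and the explicit block forms in turn. First I would note that $I^2=J^2=K^2=-\id$, $I=JK$, and the anticommutation relations together yield the quaternionic relations $IJ=-JI=K$, $JK=-KJ=I$, $KI=-IK=J$, so that each tangent space $T_p\mathcal{P}$ carries the structure of a module over the quaternions $\mathbb{H}$. Since $\mathbb{H}$ is a division ring, such a module is a sum of copies of $\mathbb{H}\cong\mathbb{R}^4$, so its real dimension is divisible by $4$; thus $\dim\mathcal{P}=4k$ for some $k\in\mathbb{N}$.

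Next I would determine the signature from the compatibility $\eta(J\cdot,J\cdot)=-\eta(\cdot,\cdot)$, which exhibits $J$ as an isometry from $(T_p\mathcal{P},\eta)$ onto $(T_p\mathcal{P},-\eta)$ (indeed $J^*(-\eta)=\eta$). If $\eta$ has signature $(p,q)$ then $-\eta$ has signature $(q,p)$, and the existence of such an isometry forces $p=q$; combined with $p+q=4k$ this gives signature $(2k,2k)$. The anti-orthogonality of $K$ gives the same conclusion.

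For the explicit matrices I would invoke Example \ref{linear_antiHH} at each point. At $p$, choose a basis of $T_p\mathcal{P}$ putting the anti-Hermitian pair $(\eta,J)$ into the standard form of Example \ref{local_AntiH}, namely $J=\begin{pmatrix}0&-\id\\ \id&0\end{pmatrix}$ and $\eta=\begin{pmatrix}\id&A\\ A&-\id\end{pmatrix}$ with $A$ symmetric, and then read off from the conditions on $K$ (anticommuting with $J$, $K^2=-\id$, anti-orthogonal) that $K=\begin{pmatrix}0&B\\ B&0\end{pmatrix}$ with $B$ skew-symmetric and $B^2=-\id$, whence $I=JK=\begin{pmatrix}-B&0\\ 0&B\end{pmatrix}$. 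To upgrade this pointwise statement to a smooth local frame on a neighbourhood of $p$, I would carry out the anti-Hermitian reduction of Example \ref{local_AntiH} by smoothly varying choices (a $J$-compatible smooth adaptation of the Gram--Schmidt process), after which $K$ and $I$ are smooth by the displayed block formulas.

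The step I expect to require the most care is the reduction of $K$ to block-antidiagonal form: anticommutation with $J$ alone only forces $K=\begin{pmatrix}P&Q\\ Q&-P\end{pmatrix}$, so one must verify that the remaining conditions $K^2=-\id$ and $K^T\eta K=-\eta$, together with the residual freedom that preserves the already-fixed pair $(\eta,J)$, indeed eliminate the diagonal block $P$ in a manner consistent with the symmetric datum $A$. By contrast, the dimension and signature counts are immediate once the quaternionic module structure and the anti-isometry property of $J$ are in hand.
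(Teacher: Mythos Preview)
Your proposal is correct and follows essentially the same route as the paper: the paper derives the proposition directly from the linear-algebra computation in Example \ref{linear_antiHH} (together with the earlier remarks that hypercomplex structures force dimension $4k$ and anti-Hermitian compatibility forces split signature), and you do the same, only spelling out the quaternionic-module and anti-isometry arguments more explicitly than the paper does.

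Your flagged concern about eliminating the diagonal block $P$ of $K$ is well taken and is in fact a point the paper's example also passes over without justification; note additionally that once $K=\begin{pmatrix}0&B\\ B&0\end{pmatrix}$ with $B$ skew and $B^2=-\id$, the anti-orthogonality $K^T\eta K=-\eta$ forces $BA+AB=0$, a compatibility between $A$ and $B$ that neither the example nor the proposition records. These are wrinkles in the statement rather than in your strategy.
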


\begin{Ex}\label{ex:antiHK}
	If $\mathcal{P} = \mathbb{R}^{4k}$ and $(\eta,I,J,K)$ is one of the linear anti-hyperHermitian structures described in example \ref{linear_antiHH}, then it is anti-hyperK\"ahler and descends to an anti-hyperK\"ahler structure on the $4k$-torus $T^{4k} = \mathbb{R}^{4k}/\Lambda$, where $\Lambda$ is a $4k$-dimensional lattice in $\mathbb{R}^{4k}$. 
\end{Ex}

\bibliographystyle{JHEP}
\bibliography{mybib}

\end{document}